\documentclass[11pt,a4paper]{article}

\usepackage[colorlinks,pdfstartview=FitH,citecolor=red,linkcolor=red,urlcolor=red]{hyperref}

\usepackage{cite}
\usepackage{tikz}\usetikzlibrary{matrix,decorations.pathreplacing,positioning}
\usepackage{hyperref}
\usepackage{amsmath,amsthm,amssymb,bm,relsize}
\usepackage{calrsfs}
\usepackage{scalerel}

\usepackage{float}

\usepackage{pgfplots}
\pgfplotsset{compat=1.12}

\usepackage{setspace}
\setstretch{1.02}

\usepackage{multirow}
\usepackage{array}
\newcolumntype{x}[1]{>{\centering\arraybackslash\hspace{0pt}}p{#1}}

\usepackage{empheq}
\usepackage{wasysym}

\usepackage{caption}
\usepackage{subcaption}

\usetikzlibrary{arrows.meta}

\usetikzlibrary{knots}
\newcommand{\wt}{\textnormal{wt}}
\usetikzlibrary{spath3}
\usetikzlibrary{hobby}

\usepackage{enumitem}
\setitemize{itemsep=-1pt}
\setenumerate{itemsep=-1pt}

\usepackage{titling}
\setlength{\droptitle}{-0.5cm}

\usepackage[margin=2.8cm]{geometry}
\usepackage{pgfplots}
\pgfplotsset{width=10cm,compat=1.9}

\definecolor{myg}{RGB}{220,220,220}

\usepackage[capitalise]{cleveref}

\theoremstyle{definition}
\newtheorem{theorem}{Theorem}[section]
\newtheorem{corollary}[theorem]{Corollary}
\newtheorem{proposition}[theorem]{Proposition}
\newtheorem{lemma}[theorem]{Lemma}

\newtheorem{definition}[theorem]{Definition}
\newtheorem{example}[theorem]{Example}
\newtheorem{notation}[theorem]{Notation}
\newtheorem{remark}[theorem]{Remark}
\newtheorem{terminology}[theorem]{Terminology}
\newtheorem{oproblem}[theorem]{Open Problem}

\newlength{\mynodespace}
\setlength{\mynodespace}{6.5em}

\newcommand{\numberset}{\mathbb}

\newcommand{\F}{\numberset{F}}
\newcommand{\R}{\numberset{R}}
\newcommand{\Z}{\numberset{Z}}
\newcommand{\mA}{\mathcal{A}}
\newcommand{\mC}{\mathcal{C}}
\newcommand{\mD}{\mathcal{D}}
\newcommand{\mF}{\mathcal{F}}

\newcommand{\tub}{\textnormal{tub}}

\newcommand{\Ker}{\textnormal{Ker}}

\newcommand{\rk}{\textnormal{rk}}

\newcommand{\Mod}[1]{\ (\mathrm{mod}\ #1)}

\title{\textbf{Knot Theory and Error-Correcting Codes}}

\usepackage{authblk}

\author{Altan B. K\i l\i\c{c}, Anne Nijsten, Ruud Pellikaan, Alberto Ravagnani \\ 
Department of Mathematics and Computer Science, Eindhoven University of Technology, the Netherlands
\thanks{A. B. K\i l\i\c{c} is supported by the Dutch Research Council through grant VI.Vidi.203.045. and A. Ravagnani is supported by the Dutch Research Council through grants VI.Vidi.203.045, 
OCENW.KLEIN.539, 
and by the Royal Academy of Arts and Sciences of the Netherlands.}
\thanks{Emails: a.b.kilic@tue.nl, a.nijsten@live.nl, g.r.pellikaan@outlook.com, a.ravagnani@tue.nl.}}

\date{}

\begin{document}

\maketitle

\vspace{ 1 cm}
\begin{abstract}

This paper builds a novel bridge
between algebraic coding theory and mathematical knot theory, with applications in both directions. We give methods to construct error-correcting codes starting from the colorings of a knot, describing through a series of results how the properties of the knot translate into code parameters. We show that 
knots can be used 
to obtain error-correcting codes
with prescribed parameters 
and an efficient decoding algorithm. 
\end{abstract}

\medskip

\section*{Introduction}

The theory of error-correcting codes and their properties has been classically investigated in connection with several other areas of discrete mathematics, including finite geometry, enumerative combinatorics, algebraic combinatorics, algebraic and arithmetic geometry, matroid theory, ring theory, symbolic dynamics, and lattice theory to mention a few \cite{lind2021introduction, blake2014introduction,oxley2006matroid, stichtenoth2009algebraic, ball2015finite,ebeling2013lattices}. 

Studying codes in relation to other mathematical objects is 
an interesting and well-established 
research direction, which over the decades offered a new perspective on various classical problems. For example, deciding over which fields MDS codes exist is equivalent to deciding over which fields the uniform matroid is representable and is linked to the famous \textit{MDS Conjecture}~\cite{segre1955curve}. 

In this paper, we initiate the study of error-correcting codes in connection with mathematical knot theory, establishing a link between these two research domains. To our best knowledge, our paper is the first attempt to systematically and rigorously bridge coding theory with knot theory, except for the BSc and MSc theses of the second author of this paper~\cite{nijsten2019knots,nijsten2022knots}.

The way we associate codes to knots is via (Fox, Dehn or Alexander-Briggs) colorings of the \textit{knot diagram}. A knot diagram is a planar representation of a knot that can be divided into \textit{strands}, \textit{regions} and \textit{crossings}. These can be assigned \textit{colors}, which are elements of a commutative ring $R$ and where the coloring rules depend on some invertible element $t \in R$. The code is then constructed by using the \textit{coloring matrix} as a \textit{parity check matrix}; see Sections~\ref{sec:1} and~\ref{sec:2} for the definitions.

The paper then investigates how properties of knots translate into properties of the associated error-correcting code. To do so, we also establish some new properties of knot colorings. Most of our results focus on the length and the dimension of the associated code, but we are also able to 
prove some properties of the minimum distance (whose study appears to be a challenging task).

In our paper, we pay particular attention to two families of knots and their error-correcting codes.
These are \textit{torus knots} and their iterations, and \textit{pretzel knots}. We compute the parameters of the corresponding codes
in several instances. We also study the connected sum of knots and how the corresponding codes behave. We investigate the natural question of when the dual of a Fox knot code is a Fox knot code, and provide partial answers. 

\paragraph{Outline.} The remainder of this paper is organized as follows. In Section~\ref{sec:1}
we briefly review the preliminaries of knot and coding theory that are needed for this paper.
Section~\ref{sec:2} is about knot colorings and their algebra.
In Section~\ref{sec:3} we show how one can associate a code to a knot and investigate how the knot properties translate into code parameters. Section~\ref{sec:4} is devoted to torus knots, pretzel knots, and their associated codes. In Section \ref{sec:graph}, we study codes from graphs of Tait diagram of knots. Sections~\ref{sec:5} and~\ref{sec:6} conclude the paper and are about 
the connected sum of knots and the dual of Fox knot codes, respectively.
The paper also contains an appendix for the needed commutative algebra background.

\section{Knots and Codes}
\label{sec:preliminaries}
\label{sec:1}
In this section we give preliminary definitions and results on knot and coding theory that will be used throughout the paper. 
Since these two research areas are almost disjoint, we review the very basic concepts and include a selection of standard references. 
We assume that the reader is familiar with  
elementary concepts from algebra and topology; see~\cite{lang2012algebra} and~\cite{munkres} as standard references, among many others.

\subsection{Knot Theory}
\label{subsec:knot}
We start with the definition of a mathematical knot, following to various degrees~\cite{crowell2012introduction,livingston1993knot,kosniowski1980first,murasugi1996knot}. 

\begin{definition}
\label{def:knot}
A (\textbf{mathematical}) \textbf{knot} $K$ is a topological subspace of the Euclidean space~$\R^3$ that is homeomorphic to the unit circle $S^1 \subseteq \R^2$, endowed with the induced Euclidean topology. An \textbf{oriented knot} is the image of the unit circle under this map whose orientation is induced by the orientation of $S^1$ (clockwise or counterclockwise).
Knots $K_1, K_2 \subseteq \R^3$ are 
\textbf{equivalent} if there exists an orientation-preserving homeomorphism
$f: \R^3 \to \R^3$ such that
$f(K_1)=K_2$. A knot $K$ is called \textbf{trivial}~(or \textbf{unknotted}) if it is equivalent to the knot $$\{(x_1,x_2,0) \mid x_1, x_2 \in \R, \, x_1^2+x_2^2=1\} \subseteq \R^3.$$
\end{definition}

Making the notions of orientation
and orientation-preserving map
rigorous is 
a non-trivial task that is best accomplished by homology theory in algebraic topology; see e.g.~\cite[Chapter~22]{greenberg2018algebraic}.
Intuitively (and not rigorously), a homeomorphism $\R^3 \to \R^3$ is orientation-preserving if it sends a right-hand frame into a right-hand frame.
It can be shown (see~\cite[page~212]{kosniowski1980first}) that knots $K_1, K_2 \subseteq \R^3$ are equivalent if and only if there exists a homeomorphism~$f: \R^3 \to \R^3$ and a real number $\xi >0$ such that 
$f(K_1)=K_2$ and $f(x)=x$ for all~$x \in \R^3$ with $\left\| x\right\| \ge \xi$. The latter can be taken as an elementary, but fully rigorous, definition of a knot equivalence.

A trivial knot is also called an \textbf{unknot}. An unknot is depicted in Figure \ref{fig:unknot} and a \textbf{figure-eight} knot
is depicted in Figure~\ref{fig:fig8}. The latter is a non-trivial knot as we will explain later via colorings, see Figure \ref{fig:foxdehn}.

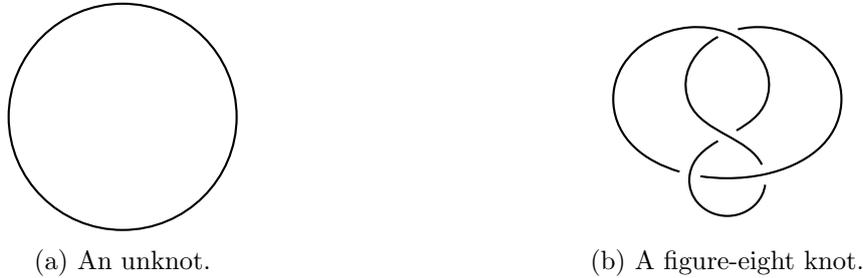
\begin{figure}[h!]
    \centering
\begin{subfigure}{.4\textwidth}
\centering
\begin{tikzpicture}
\filldraw[color=black, fill=white!5, thick](-1,0) circle (1.5);
\end{tikzpicture} 
    \caption{An unknot.}
    \label{fig:unknot}
\end{subfigure}
\hspace{0.1\textwidth}
\begin{subfigure}{.4\textwidth}
\centering
 \begin{tikzpicture}[use Hobby shortcut]
    \path[spath/save=figure8]
    ([closed]0,0) .. (1.5,1) .. (.5,2) ..
    (-.5,1) .. (.5,0) .. (0,-.5) .. (-.5,0) ..
    (.5,1) .. (-.5,2) .. (-1.5,1) .. (0,0);
    \tikzset{
    every spath component/.style={thick, draw},
    spath/knot={figure8}{8pt}{1,3,...,7}
    }
    \path (0,-.7);
    \end{tikzpicture}
    \caption{A figure-eight knot.}
    \label{fig:fig8}

\end{subfigure}

\caption{An example of a trivial and a non-trivial knot.}
\label{fig:exampleknots}

\end{figure}

A knot can sometimes be seen as an entangled polygon in a three-dimensional space. To make this formal, we give the following definition.

\begin{definition}
\label{def:polygonal}
A knot is called \textbf{polygonal} if it is a union of finite number of line segments. These line segments are the \textbf{edges} and their endpoints are the \textbf{vertices} of the knot.
\end{definition}

Note that the drawings of Figure \ref{fig:exampleknots} are smooth, but can be seen as polygonal knots with smoothened vertices. 
A knot that is equivalent to a polygonal knot is called \textbf{tame}. A knot that is not tame is called \textbf{wild}; see~\cite[Chapter~I]{crowell2012introduction}.

\begin{terminology} \label{term}
    In this paper, a knot will always mean an oriented, polygonal knot, unless otherwise stated. We will omit information about the orientation when it is not relevant.    
    See Remark~\ref{rem:explain} for the reason of restricting ourselves to this specific family of knots.
    Throughout this paper, $K$ always denotes a knot, unless otherwise stated.
\end{terminology} 

\begin{figure}[ht!]
    \centering
\begin{tikzpicture}[
use Hobby shortcut,
every trefoil component/.style={thick, draw},pics/arrow/.style={code={%
  \draw[line width=0pt,{Computer Modern Rightarrow[line
  width=1pt,width=3ex,length=2ex]}-] (-0.5ex,0) -- (0.5ex,0);
  }}]

\path[spath/save=trefoil] ([closed]90:2) foreach \k in {1,...,3} {
.. (-30+\k*240:.5) .. (90+\k*240:2) } (90:2);
\tikzset{spath/knot={trefoil}{8pt}{1,3,5}}
\node[text=black] at (0,2) {$\boldsymbol{>}$};
\end{tikzpicture}

    \caption{An oriented trefoil knot.}
    \label{fig:trefoil}
\end{figure}
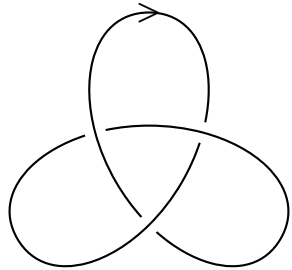

The knots that lie on the surface of an unknotted torus are of particular interest and will be used later in Section \ref{subsec:torus}. 

\begin{definition} \label{def:torusknot}
Consider the embedding of the torus $S^1\times S^1$  in $\R^3$ given by the implicit equation $$\left(\sqrt{x^2+y^2}-2\right)^2+z^2=1.$$
Let $(a,b)$ be a pair of nonzero integers that are relatively prime.
The $(a,b)$\textbf{-torus knot} $T(a,b)$ 
is the image of the map $S^1 \rightarrow \R^3$, lying on the torus, given by
$$
(\cos(t),\sin(t)) \longmapsto (\cos(at)(2+\cos(bt)),\ \sin(at)(2+\cos(bt)),\ \sin(bt));
$$ 
see e.g.~\cite[Chapter~7]{murasugi1996knot}.
The knot ``turns'' $a$ times meridionally and $b$ times longitudinally.
\end{definition}

\begin{example}
\label{ex:trefoil}
The {torus knot} $T(2,3)$ is more commonly known as the \textbf{trefoil knot}.  
It is depicted in Figure~\ref{fig:trefoil}. For any nonzero integer $a$, the torus knot
$T(a,\pm 1)$ is a trivial knot. The torus knots are completely classified; see \cite[Theorem 7.4.3]{murasugi1996knot}.

Figure \ref{fig:torus-poly}  depicts
the trefoil knot of Figure \ref{fig:trefoil} as an entangled polygon in a three-dimensional space, and as a knot that lie on the surface of a torus.

\begin{figure}[h!]
    \centering
\begin{subfigure}{.3\textwidth}
\centering
\begin{tikzpicture}[thick,scale=0.8]

\begin{knot}[
clip width=5,
flip crossing=1,
]
\strand[black,ultra thick] (2.1,4.2) to  (3,6) to (5,2) to (1.25,2);
\strand[black,ultra thick] (0.75,2) to  (-1,2) to (1,6) to (2.9,2.2);
\strand[black,ultra thick] (3.1,1.8) to  (4,0) to (0,0) to (1.9,3.8);
\end{knot}

\end{tikzpicture}
    \caption{A polygonal trefoil knot.}
    \label{fig:poly_trefoil}
\end{subfigure}
\hspace{0.1\textwidth}
\begin{subfigure}{.5\textwidth}
\centering
 \begin{tikzpicture}[thick,scale=0.8]
 \begin{axis}[axis equal image, hide axis, view = {152}{60}, scale= 2.8]

 \addplot3[domain=0:360, y domain=0:360, samples=20, surf, z buffer=sort]
 (
  {(2 + cos(x))*cos(y)},
  {(2 + cos(x))*sin(y)},
  {sin(x)}
 );

 \addplot3[domain=0:360, samples=50]
 (
 {(2 + cos(3*x))*cos(2*x)},
 {(2 + cos(3*x))*sin(2*x)},
 {sin(3*x)}
 );
 \end{axis}
\end{tikzpicture}    \caption{The torus knot $T(2,3)$.}
    \label{fig:tor_trefoil}

\end{subfigure}

\caption{The trefoil knot as an entangled polygon and as a torus knot.}
\label{fig:torus-poly}
\end{figure}
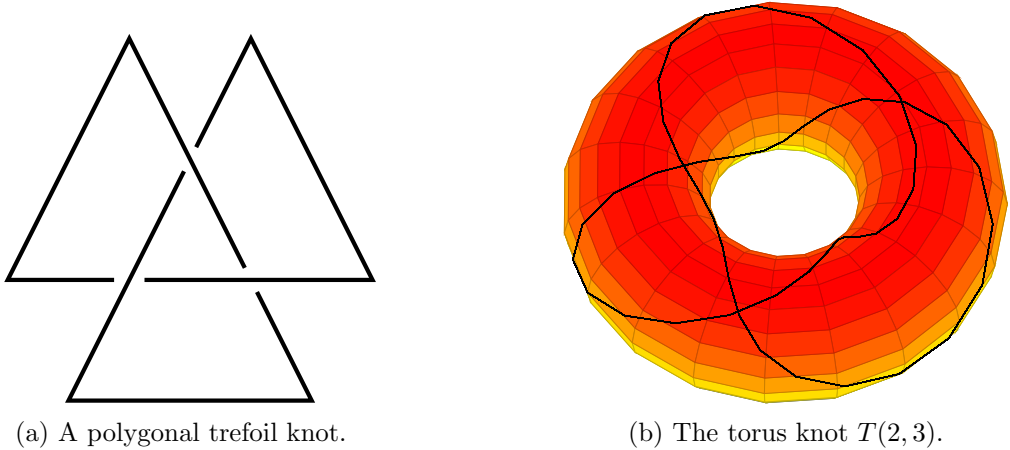
\end{example}

As in Figures \ref{fig:exampleknots} and \ref{fig:trefoil}, to visualize knots their two-dimensional projections are used. These are called knot \textit{diagrams} and  are defined as follows. We refer to~\cite{murasugi1996knot} for a complete treatment.

\begin{definition}
\label{def:knotdiag}
Let $p:\R^3 \rightarrow \R^3$ be defined by $p(x,y,z) = (x,y,0)$. 
The 
\textbf{projection}
of a polygonal knot $K$ is 
$p(K)$, together with the orientation inherited by
$K$, if $K$ was oriented.
The projection is called \textbf{regular} if it satisfies the following three conditions:
\begin{enumerate}
    \item $p(K)$ has at most a finite number of points of intersection, where $Q$ is a point of intersection of $p(K)$ if $|p^{-1}(Q)| > 1.$ 
    \item If $Q$ is point of intersection of $p(K)$, then $K \cap p^{-1}(Q)$ has exactly two points. Such a point is called a \textbf{double point} of $p(K)$.
    \item A vertex of $K$ is not mapped to a double point of $p(K)$.
\end{enumerate}

At a double point of a projection, to distinguish whether the
knot passes over or under itself, we draw the projection
so that it appears to have been cut; see for example Figure~\ref{fig:trefoil}. Such an altered projection is called a \textbf{diagram} of $K$. 
\end{definition}

From now on, we assume that the diagrams of knots we use in the paper are regular. This can be justified by the following theorem.

\begin{theorem} [see \cite{crowell2012introduction}]
\label{thm:reg}
Any polygonal knot $K$ is equivalent, under an arbitrarily small rotation of $\R^3$, to a polygonal knot $K'$ for which $p(K')$ is regular.
\end{theorem}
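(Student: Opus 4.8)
The plan is to reduce everything to a genericity statement about the direction of projection. Write the polygonal knot $K$ (in the sense of Definition~\ref{def:polygonal}) as a union of finitely many edges $e_1,\dots,e_n$ with vertices $v_1,\dots,v_n$. For a unit vector $\sigma\in S^2\subseteq\R^3$, let $p_\sigma\colon\R^3\to\sigma^{\perp}$ be the orthogonal projection along $\sigma$; for $\sigma_0=(0,0,1)$ this is, after identifying $\sigma_0^{\perp}$ with the $xy$-plane, the map $p$ of Definition~\ref{def:knotdiag}. For any rotation $\rho$ of $\R^3$ and any $x,y\in K$ one has $p(\rho(x))=p(\rho(y))$ if and only if $x-y\in\R\,\rho^{-1}(\sigma_0)$; hence the incidence pattern of $p(\rho(K))$ — how many intersection points it has, which of them are double points, whether any vertex lands on one — depends only on $\sigma:=\rho^{-1}(\sigma_0)$ and coincides with that of $p_\sigma(K)$, up to an affine change of coordinates of the target plane that does not affect regularity. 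So it suffices to prove that the set $G=\{\sigma\in S^2 : p_\sigma(K)\text{ is a regular projection}\}$ is dense in $S^2$; in fact I would show that $S^2\setminus G$ is a finite union of great circles and points, hence closed with empty interior.

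Next I would enumerate the finitely many ways regularity can fail and, for each, confine the offending directions to a lower-dimensional subset of $S^2$. Condition~(1) of Definition~\ref{def:knotdiag} fails precisely when the projections of two edges $e_i,e_j$ overlap along a segment, equivalently when the supporting lines $\ell_i,\ell_j$ lie in a common plane parallel to $\sigma$; a case split according to whether $\ell_i,\ell_j$ are parallel, incident, or skew shows this is impossible in the skew case and otherwise forces $\sigma$ onto a single great circle (one per pair, so finitely many). Condition~(2) fails when three points on distinct edges $e_i,e_j,e_k$ share an image, i.e.\ the three planes $\ell_i+\R\sigma$, $\ell_j+\R\sigma$, $\ell_k+\R\sigma$ meet in a common line; for fixed $i,j,k$ this is one nontrivial algebraic equation in $\sigma$ — nontrivial because $K$ is embedded, so no three of its edges are concurrent in $\R^3$ — and hence defines a $1$-dimensional subset of $S^2$. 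Condition~(3) fails when some vertex $v_k$ maps onto a double point; inspecting the possibilities shows $v_k$ must then project onto an edge $e_i$ with $v_k\notin e_i$ (or onto a non-adjacent vertex), which forces $\sigma$ parallel to the plane through the three relevant vertices (respectively to the line through the two vertices), again finitely many great circles and points. Unioning over all finitely many choices of edges and vertices yields the claimed description of $S^2\setminus G$.

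To conclude, fix $\varepsilon>0$. Since $G$ is open and dense, pick $\sigma\in G$ with $\lVert\sigma-\sigma_0\rVert$ as small as needed. The map $\mathrm{SO}(3)\to S^2$, $\rho\mapsto\rho^{-1}(\sigma_0)$, is a smooth fiber bundle (fiber a circle) sending the identity to $\sigma_0$, so for $\sigma$ close to $\sigma_0$ there is $\rho\in\mathrm{SO}(3)$ within distance $\varepsilon$ of the identity with $\rho^{-1}(\sigma_0)=\sigma$. Then $K'=\rho(K)$ is a polygonal knot, it is equivalent to $K$ because $\rho$ is an orientation-preserving homeomorphism of $\R^3$ (Definition~\ref{def:knot}), and by the first paragraph $p(K')$ is regular because $\sigma\in G$.

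The main obstacle is the bookkeeping of the second paragraph: turning ``regularity fails'' into ``$\sigma$ lies in an explicit lower-dimensional set'' rigorously and exhaustively. The genuinely delicate points are (a) the overlapping-edges analysis, where the parallel, incident and skew cases must be handled separately and one must be careful not to mistake the shared image of the common vertex of two \emph{adjacent} edges for a forbidden overlap; (b) verifying that the triple-point condition really is a nontrivial equation in $\sigma$, which rests on $K$ being an embedded curve (so no three edges meet at a point and no two edges cross off a vertex); and (c) checking that the excluded vertex configurations account for every way a vertex could land on a double point. Once these are pinned down, the density of $G$ and the passage back to a small rotation via the submersion $\mathrm{SO}(3)\to S^2$ are routine.
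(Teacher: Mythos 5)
The paper does not actually prove Theorem~\ref{thm:reg}; it imports it from \cite{crowell2012introduction}. Your strategy --- reduce to density in $S^2$ of the set $G$ of good projection directions, show $S^2\setminus G$ lies in finitely many great circles and points, then realize a nearby good direction by a small rotation via $\mathrm{SO}(3)\to S^2$ --- is exactly the general-position argument of that reference, and your first and third paragraphs (the reduction to $p_\sigma$ and the passage back to a rotation) are correct as written.

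The middle paragraph, however, has two concrete holes, both sitting in the ``delicate bookkeeping'' you defer. First, you never exclude directions $\sigma$ parallel to some edge $e_i$: then $p_\sigma(e_i)$ is a single point whose preimage on $K$ is a whole segment, so condition~(2) of Definition~\ref{def:knotdiag} fails, and this failure is not of the form ``three points on distinct edges share an image.'' It only contributes finitely many points to $S^2\setminus G$, but it must be listed, and only after discarding it may you treat every $p_\sigma(e_i)$ as a genuine segment in the rest of the analysis. Second, your justification that the triple-point condition is a \emph{nontrivial} equation in $\sigma$ is wrong as stated: embeddedness prevents three \emph{edges} from sharing a point, but the three supporting \emph{lines} $\ell_i,\ell_j,\ell_k$ can perfectly well be concurrent at a point $P$ off the knot, and then (for non-coplanar such lines) every $\sigma$ admits a common transversal to the three lines, namely the line through $P$ with direction $\sigma$ --- so the equation you write down is identically satisfied and confines nothing. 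The repair is to run the argument on the compact edges rather than their lines: the offending directions are $(x-y)/\lVert x-y\rVert$ as $(x,y,z)$ ranges over collinear triples in $e_i\times e_j\times e_k$, and one must check (with a case analysis for coplanar and concurrent configurations, using that a transversal through $P$ meets an edge only if it meets the segment itself) that this set of triples is at most one-dimensional, hence has nowhere dense image in $S^2$. That case analysis is the actual content of the proof in \cite{crowell2012introduction}; with it and the parallel-edge case supplied, your sketch closes.
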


Thus, for a given polygonal knot there exists an equivalent knot with a regular projection. Combining with the definition of a tame knot, we have that every tame knot is equivalent to a polygonal knot with a regular diagram.

\begin{remark}
\label{rem:explain}
In knot theory, knots are studied up to equivalence. Most knot theory references focus solely on tame knots; see \cite{crowell2012introduction}. One of the reasons is that some very natural invariants are not necessarily defined for wild knots.
\end{remark}

Although we work with polygonal knots, their diagrams are depicted with smooth vertices, since one can think of a polygonal knot as a union of a large number of  edges. In the next definition, we introduce some terminology of knot diagrams.

\begin{definition} 
Each double point of a regular projection is the image of two different points of the knot, and a such a point is called a \textbf{crossing} of a diagram. To distinguish edges that cross each other in a diagram, the lower edge in the crossing is drawn with a break. The resulting separate edges are called \textbf{strands}. At each crossing, the strands that are separated by the break are called the \textbf{understrands} and the other strand is called the \textbf{overstrand}. The connected components of the complement of $p(K)$ in the $z=0$ plane are called the \textbf{regions}. 

\end{definition}

As an example, the diagram depicted in Figure \ref{fig:fig8} has 4 crossings and 4 strands, and the diagram of Figure \ref{fig:trefoil} has 3 crossings and 3 strands. It is not a coincidence that the number of crossings is equal to the number of strands. We now give a simple but fundamental lemma which will play an important role in the next section, where we explain knot colorings. The result can be found in~\cite{alexander1928topological}.

\begin{lemma}
\label{lem:knot_diag}
Let $D$ be knot diagram with $n$ crossings. Then it has $n$ strands and $n+2$ regions.
\end{lemma}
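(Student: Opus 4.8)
The plan is to use an Euler-characteristic argument on the planar graph associated to the knot diagram $D$. First I would build from $D$ a finite connected $4$-regular plane graph $G$: the vertices of $G$ are the $n$ crossings of $D$, and the edges are the arcs of $p(K)$ joining consecutive crossings (ignoring over/under information, which is irrelevant for counting). Since the underlying curve of $p(K)$ is a single closed curve passing through each crossing exactly twice (each crossing being a double point), every vertex has degree $4$, so by the handshake lemma $4 \cdot (\#\text{vertices}) = 2 \cdot (\#\text{edges})$, giving $\#\text{edges} = 2n$. The faces of $G$ (as a plane graph) are exactly the regions of the diagram, including the unbounded region. Here one must be slightly careful when $n=0$: then $D$ is the unknot, $G$ is a single circle with no vertices, and the statement ($0$ strands, $2$ regions) has to be checked directly; for $n \ge 1$ the graph-theoretic argument applies cleanly.

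Next I would count strands. A strand of the diagram is a maximal connected piece of $p(K)$ that goes over at every crossing it passes through — equivalently, it is a path that terminates precisely where $p(K)$ goes \emph{under}, i.e.\ at an understrand break. Each crossing contributes exactly one such break (one overstrand, the understrand being cut into two), so the $n$ crossings produce $n$ break-points, and walking along the single closed curve $p(K)$ these $n$ break-points cut it into exactly $n$ strands. (Again the case $n=0$ is exceptional and handled separately: the unknot has one closed strand but we adopt the convention giving $0$; this matches the claim.) This gives the ``$n$ strands'' part of the lemma.

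Finally, the number of regions. Apply Euler's formula for connected plane graphs, $V - E + F = 2$, with $V = n$ and $E = 2n$: this yields $F = n + 2$, and since the faces of $G$ are exactly the regions of $D$, we get $n+2$ regions. The one subtlety to address is that Euler's formula requires $G$ to be connected, which holds because $p(K)$ is the image of the connected set $S^1$; and it requires $G$ to be a genuine graph embedded in the plane (or sphere), which is guaranteed by regularity of the projection (finitely many double points, no vertex of $K$ mapped to a double point, Theorem~\ref{thm:reg}). I expect the main obstacle to be purely expository rather than mathematical: making precise the correspondence ``arcs between crossings $\leftrightarrow$ edges of $G$'' and ``regions of the complement $\leftrightarrow$ faces of $G$,'' and cleanly disposing of the degenerate $n=0$ case so that the stated formulas hold uniformly. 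No step requires a hard computation; the whole proof is a bookkeeping exercise around the handshake lemma and Euler's formula.
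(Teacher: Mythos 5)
Your argument is correct, and it is essentially the standard proof: the paper itself does not prove this lemma but only cites Alexander's 1928 paper, where the count of regions comes from exactly the Euler-characteristic computation you give ($V=n$, $E=2n$ by $4$-regularity, so $F=n+2$), and the strand count from the observation that the $n$ under-passages cut the closed curve into $n$ arcs. Your handling of connectivity and of the degenerate $n=0$ case is appropriate, so there is nothing to add.
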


Elementary knot moves lead to changes in knot diagrams. However, it is possible to
restrict only to the following moves.

\begin{definition}
\label{def:reidemeister}
Consider the following three \textbf{Reidemeister moves}:
\begin{enumerate}[label= \Roman*.]
\item The \textbf{twist} move: This move twists or untwists a part of a strand in either direction, and is called a move of type I.
\item  The \textbf{poke} move: This move takes a strand and moves it completely over another (thus adding 2 crossings) or vice versa (thus removing 2 crossings), and is called a move of type II. 
\item  The \textbf{slide} move: This move slides a strand from one side of a crossing to the other side of the same crossing, and is called a move of type III.
\end{enumerate}
\end{definition}

The Reidemeister moves are depicted in Figure~\ref{fig:reidemeistermoves} and they are used to define equivalence of diagrams.

\begin{definition}
Two diagrams $D$ and $D'$ are called \textbf{equivalent} if $D$ can be transformed into~$D'$ by using a finite sequence of Reidemeister moves. We denote this by~$D \approx D'$.
\end{definition}

Reidemeister proved the following crucial result in \cite{reidemeister1927}. In this paper we use the statement of~\cite[Theorem 4.1.1]{murasugi1996knot}.

\begin{theorem}\label{thm:reidemeistermoves}
Let $D$ and $D'$ be the diagrams of two knots $K$ and $K'$, respectively. Then~$K \approx K'$ if and only if $D \approx D'$.
\end{theorem}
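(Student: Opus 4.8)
The plan is to prove the two implications separately, working throughout in the piecewise-linear setting fixed in Terminology~\ref{term}.

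\medskip
\noindent\textbf{The easy direction ($D \approx D'$ $\Rightarrow$ $K \approx K'$).} It suffices to check that a single Reidemeister move produces an equivalent knot, and then compose. Each of the three moves of Definition~\ref{def:reidemeister} is a local modification of the diagram supported inside a small disk of the $z=0$ plane, outside of whose preimage the two knots agree. For each move I would write down an explicit PL isotopy of $\R^3$ supported in a small ball lying over that disk, carrying one local model to the other; since such an isotopy is realized by a homeomorphism of $\R^3$ equal to the identity outside a compact set, it yields $K\approx K'$ in the sense of Definition~\ref{def:knot}. This step is routine once the finitely many local models are drawn.

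\medskip
\noindent\textbf{The hard direction ($K \approx K'$ $\Rightarrow$ $D \approx D'$).} First I would replace knot equivalence by a combinatorial notion: for polygonal knots, $K\approx K'$ if and only if $K'$ is obtained from $K$ by a finite sequence of \emph{elementary ($\Delta$-)moves}, each of which deletes an edge $AB$ and inserts edges $AC,CB$ with the triangle $ABC$ meeting $K$ only along $AB$ (or the inverse). This is the standard characterization of equivalence of tame knots, which I would quote from~\cite{crowell2012introduction} (it is essentially Definition~\ref{def:polygonal} together with an approximation argument). It then suffices to show: (a) any two regular projections of the same $K$ give $\approx$-equivalent diagrams, and (b) a single $\Delta$-move changes the diagram by a finite sequence of Reidemeister moves, up to planar isotopy and edge subdivision (which do not alter the diagram). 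For (a): by Theorem~\ref{thm:reg} a generic small rotation of $\R^3$ yields a regular projection, and along a generic path of rotations between two such the diagram changes only at finitely many codimension-one events -- a strand becoming tangent to another, a triple point, a cusp -- each of which is exactly one Reidemeister move.

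\medskip
\noindent\textbf{The combinatorial core, and the main obstacle.} For (b), given a $\Delta$-move along a triangle $T=ABC$, I would subdivide $T$ into a fine triangulated mesh and realize the move as a composition of $\Delta$-moves along the small sub-triangles $T_i$. After a further arbitrarily small generic perturbation (again invoking Theorem~\ref{thm:reg}) I may assume that each $p(T_i)$ is a small region meeting the rest of $p(K)$ in one of a short list of simple configurations: disjoint from $p(K)$, crossed by a single strand, or containing a single crossing. A finite case analysis of these configurations shows that the diagram change caused by the $\Delta$-move along $T_i$ is either trivial, a planar isotopy, or precisely one Reidemeister move of type I, II, or III; concatenating over all $i$ gives $D\approx D'$. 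The main obstacle is exactly this last step: arranging the mesh size and the genericity conditions so that the list of local configurations is genuinely exhaustive and finite, and verifying case by case that each resulting local change is a product of Reidemeister moves. Everything else is bookkeeping.
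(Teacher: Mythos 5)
The paper does not actually prove this theorem: it is quoted as Reidemeister's result, with the proof deferred to \cite{reidemeister1927} and the statement taken from \cite[Theorem~4.1.1]{murasugi1996knot}. So there is no in-paper argument to compare against; what you have written is a sketch of the classical Reidemeister--Alexander--Briggs proof, and as a sketch it is essentially correct: the easy direction by compactly supported local isotopies, the hard direction by reducing ambient equivalence of polygonal knots to $\Delta$-moves, subdividing each triangle, and running a finite case analysis on how each small sub-triangle's projection meets the rest of the diagram. Two refinements you should be aware of. First, your step (a) (comparing two regular projections of the same knot) is usually not handled by a separate transversality-in-the-space-of-rotations argument; it is absorbed into (b), since rotating the knot itself is an ambient PL isotopy and hence a sequence of $\Delta$-moves --- and your tangency/cusp/triple-point language is borrowed from the smooth category and would need restating for polygonal knots (edges becoming parallel to the projection direction, a vertex projecting onto an edge, etc.). Second, your list of local configurations for the sub-triangles is not yet exhaustive: you must also handle the sub-triangle containing a vertex of $K$ on the moving edge, the case where the projections of the two new edges $AC$, $CB$ cross each other or the old edge $AB$ (this is where the type~I move actually arises), and the over/under dichotomy in the single-strand case (type~II versus mere planar isotopy). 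You correctly identify this case analysis as the crux; it is lengthy but finite, and it is precisely the content of the proofs in \cite{crowell2012introduction} and \cite{murasugi1996knot}.
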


\begin{figure}[H]
    \centering
    \includegraphics[width=0.9\textwidth]{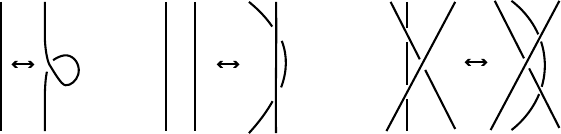}
    \caption{The Reidemeister moves of type I, II and III, respectively.}
    \label{fig:reidemeistermoves}
\end{figure}

Using the Reidemeister moves, one can show the equivalence of knots by applying Theorem~\ref{thm:reidemeistermoves}. For example, in Figure \ref{fig: amphichiral} we show that the figure-eight knot of Figure \ref{fig:fig8} is equivalent to its mirrored image. In the last step, no Reidemeister moves are used, but the position of the strands are changed slightly. The colors indicate how the strands are moved in the last step.
\begin{figure}[H]
    \centering
    \includegraphics[width=\textwidth]{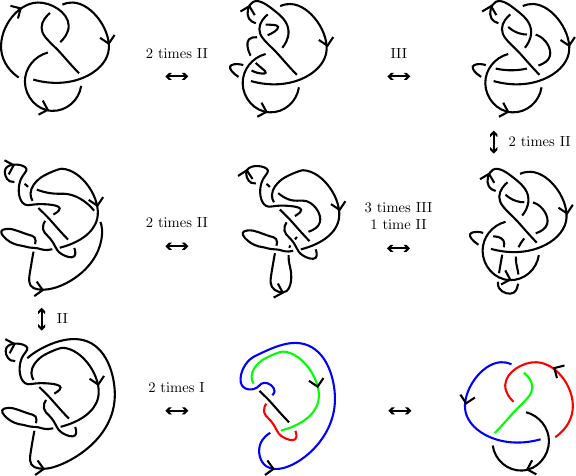}
    \caption{The figure-eight knot is equivalent to its mirror image.}
    \label{fig: amphichiral}
\end{figure}

In this paper, we will also use the concepts of a reduced and alternating 
knot diagrams.
These are defined as follows.

\begin{definition}
\label{def:reduced_alt}
A knot diagram is called \textbf{alternating} if the understands and overstrands are alternating in a fixed orientation. A knot diagram is called \textbf{reduced} if there are no crossings that can be removed via the twist move.
\end{definition}

\subsection{Coding Theory}
\label{subsec:coding}

We now turn to the coding theory fundamentals, that are also 
needed to understand the rest of the paper. Let $n \in \Z_{\ge 1}$, $q$ be a prime power, and $\F_q$ be the finite field with $q$ elements. 
General coding theory references are \cite{macwilliams1977theory,huffman2010fundamentals,pellikaan2018}.

\begin{definition}
A (\textbf{linear}, \textbf{error-correcting}) \textbf{code} of
\textbf{length} $n$ is
an $\F_q$-linear subspace~$\mC \subseteq \F_q^n$.
The \textbf{dimension} of $\mC$, denoted by $\dim(\mC)$, is its 
dimension as a vector space over~$\F_q$.
The quantity $\dim(\mC)/n$ is the \textbf{rate} of $\mC$, and denoted by $R(\mC)$.
The \textbf{dual} of $\mC$ 
is the code~$\mC^\perp = \{x \in \F_q^n \mid xy^\top =0 \mbox{ for all } y \in \mC\}$. Note that $\dim(\mC^\perp) = n - \dim(\mC)$.
A code $\mD \subseteq \mC$ is called a \textbf{subcode} of $\mC$.
\end{definition}

A code is most often represented by matrices.

\begin{definition}
\label{def:parity}
We say that a matrix $G \in \F_q^{k \times n}$ is a \textbf{generator matrix} of a code $\mC$ (and that $\mC$ is \textbf{generated} by $G$) if $\mC$ is the row-space of $G$. 
A \textbf{parity check matrix} $H$ of~$\mC \subseteq \F_q^n$ is a matrix such that
$$\mC=\{c \in \F_q^n \mid cH^T=0\}.$$
\end{definition}

Note that, in contrast with some coding theory references, we do not require $G$ and $H$ to have full rank in this paper. 

The performance of an 
error-correcting code is measured by its rate and its \textit{minimum Hamming distance}, defined below. Ideally, these parameters should both be as large as possible.

\begin{definition}
\label{def:weight_distance}
The \textbf{support} of a vector $x \in \F_q^n$ is
$\sigma(x) = \{i \in \{1,\ldots,n\} \mid \ x_i \neq 0\}$. The \textbf{Hamming weight} of a vector $x \in \F_q^n$ is the number of its nonzero entries, i.e.,~$\wt(x)=|\sigma(x)|$. 
The \textbf{minimum} (\textbf{Hamming}) \textbf{distance} of a code $\mC $
is 
$$d(\mC)=\min\left\{\wt(x) \mid x \in \mC, \, x \neq 0\right\},$$
where the code $\{0\} \subseteq \F_q^n$ has minimum distance $\infty$ by definition. The quantity $d(\mC)/n$ is the \textbf{relative minimum distance} of $\mC$, and denoted by $\delta(\mC)$.

The \textbf{weight enumerator} of $\mC $ is given by  
$W_{\mC}(t)=\sum_{w=0}^n a_w(\mC)t^w$, where $a_w(\mC)$ is the number of codewords of $\mC$ of weight $w$. 
Lastly, we let $\wt(\mC) = \{\wt(c) \mid c \in \mC \}$. 

\end{definition}

\begin{example}
\label{ex:repetition}
The $q$-ary \textbf{$n$-repetition code} is $\{(a,\ldots,a) \in \F_q^n \mid a \in \F_q\}$. It has dimension 1, minimum distance $n$, and rate $1/n$.

\end{example}

We write that $\mC$ is an $[n,k,d]_q$ code
if $\mC \subseteq \F_q^n$ has dimension $k$ and minimum distance~$d$. 

One of the best known results in coding theory establishes a trade-off between the dimension and the minimum distance of a code of a given length. In particular, they cannot be both arbitrarily large.

\begin{theorem}[\textbf{Singleton Bound}; see~\cite{singleton1964maximum}] \label{thm:sbound}
Let $\mC \neq \{0\}$ be an $[n,k,d]_q$ code. We have~$k \le n-d+1$.
\end{theorem}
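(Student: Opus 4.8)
The statement to prove is the Singleton Bound: for a code $\mC \neq \{0\}$ with parameters $[n,k,d]_q$, we have $k \le n-d+1$.

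\textbf{Plan.} The cleanest approach is the classical projection (shortening) argument. I would fix a set $S \subseteq \{1,\ldots,n\}$ of $d-1$ coordinates and consider the $\F_q$-linear \emph{puncturing map} $\pi_S \colon \F_q^n \to \F_q^{n-(d-1)}$ that deletes the coordinates indexed by $S$. Restricting $\pi_S$ to $\mC$ gives an $\F_q$-linear map $\pi_S|_{\mC} \colon \mC \to \F_q^{n-d+1}$, and the plan is to show this map is injective. Injectivity is equivalent to $\Ker(\pi_S|_{\mC}) = \{0\}$: if $c \in \mC$ lies in the kernel, then $c$ is supported inside $S$, so $\wt(c) \le |S| = d-1 < d$. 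Since $d$ is the minimum distance of $\mC$ and $\mC \neq \{0\}$, the only codeword of weight strictly less than $d$ is $0$; hence $c = 0$. Therefore $\pi_S|_{\mC}$ is injective, which forces $\dim(\mC) \le \dim(\F_q^{n-d+1})$, i.e.\ $k \le n-d+1$.

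\textbf{Order of steps.} First I would note that $d \le n$ holds automatically (a nonzero codeword has weight at most $n$), so that the coordinate set $S$ of size $d-1 \le n-1$ exists and $n-d+1 \ge 1$ makes sense as a length. Second, define the puncturing map and observe it is $\F_q$-linear. Third, carry out the kernel computation above, invoking Definition~\ref{def:weight_distance} for the minimum distance to conclude $c=0$. Fourth, apply the rank--nullity theorem (or simply that an injective linear map does not increase dimension) to get $k = \dim(\mC) \le n-d+1$.

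\textbf{Main obstacle.} There is essentially no obstacle here: the only subtlety is bookkeeping with the degenerate cases ($d = n$, where the punctured space has length $1$, and the hypothesis $\mC \neq \{0\}$ which is exactly what makes ``minimum distance'' a meaningful finite quantity rather than $\infty$). Everything else is a one-line linear-algebra argument. An alternative proof via a generator matrix in systematic form $G = [\,I_k \mid A\,]$ would also work — each row has weight at most $1 + (n-k)$, so $d \le n-k+1$ — but it tacitly requires bringing $G$ to reduced form and is less self-contained than the puncturing argument, so I would present the puncturing proof.
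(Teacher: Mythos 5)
Your proof is correct. The paper itself does not prove \cref{thm:sbound} --- it is stated with a citation to Singleton's original article --- so there is no in-paper argument to diverge from; your puncturing argument (project away $d-1$ coordinates, observe the restriction of the projection to $\mC$ has trivial kernel because any codeword supported on those $d-1$ positions has weight below $d$, conclude $k\le n-d+1$ by injectivity) is the standard textbook proof and is complete, including the check that $d\le n$ so the coordinate set of size $d-1$ exists.
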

Another very famous bound is the Gilbert-Varshamov bound.
\begin{theorem}[see~\cite{gilbert1952comparison,varshamov1957estimate}] \label{thm:gvbound}
Let $\mC \neq \{0\}$ be an $[n,k,d]_q$ code. We have 
$$q^{n-k} \le \sum_{i=0}^{d-1} \binom{n}{i}(q-1)^i.$$
\end{theorem}

Next, we give three definitions of code equivalence.
\begin{definition}
\label{d-mono-perm-pm-equiv}
Two $\F_q$-linear codes are called \textbf{permutation equivalent} if one is obtained from the other by permuting the coordinates.
They are called \textbf{monomial equivalent} if one is obtained from the other by permuting the coordinates and by multiplying the coordinates 
with nonzero elements of the field $\F_q$, see \cite{huffman2010fundamentals} and \cite[Definition 1.1.15]{pellikaan2018}.
They are called \textbf{($\pm1$)-permutation equivalent} if one is obtained from the other by permuting the coordinates and by multiplying the coordinates with $\pm 1$.
\end{definition}

Over $\F_2$, the three equivalences defined in \ref{d-mono-perm-pm-equiv} are the same.
Next, we define two classes of codes that will arise later in our paper. These two classes of codes are examples of well-known families from classical coding theory that can be obtained as knot codes, and thus have particular interest for us.

\begin{definition}
    \label{def:ldpc}
    A code that has a parity check matrix in which every row has Hamming weight $r$ and every column has Hamming weight $c$, is called a \textbf{$(r,c)$-doubly-regular low-density parity check (LDPC) code}. If the rows or  the columns of the matrix have a fixed Hamming weight $w$, then the LDPC code is called \textbf{right} or \textbf{left $w$-regular}, respectively.
\end{definition}

LDPC codes, first introduced in \cite{gallager1962low}, have efficient decoding algorithms, see for example~\cite{mackay1997near,mackay1999good, luby2001improved} among many others.

\begin{definition}
\label{def:lcd}
The \textbf{hull} of a code $\mC$ denoted by $H(\mC)$ is the intersection of the code with its dual: 
 $H(\mC)=\mC \cap \mC^\perp$. A code $\mC$ is called \textbf{linear complementary dual (LCD)} if $H(\mC)=\{ 0 \}$. See \cite{massey1992}.
\end{definition}

 LCD codes have been widely applied in data storage, communications systems, consumer electronics, and cryptography \cite{carlet2016}. 

\begin{definition}
\label{def:asympgood}
A sequence of linear codes $(\mC_j)^{\infty}_{j=1}$ where each $\mC_j$ has parameters $[n_j,k_j,d_j]$ is called \textbf{asymptotically good} if the following hold:
\begin{enumerate}
    \item $\lim_{j \to \infty}n_j = \infty$,
    \item $\liminf_{j\to\infty} R(\mC_j) > 0$,
    \item $\liminf_{j\to\infty} \delta(\mC_j) > 0$.
\end{enumerate}

\end{definition}

\section{Knot Colorings}
\label{subsec:color}
\label{sec:2}

In this section we explain three types of knot colorings. 
Fox coloring and Dehn coloring are colorings of the strands and crossings, respectively, see Figure \ref{fig:Fox-Dehn}. For the third, the Alexander-Briggs coloring, we first define the \textit{Tait diagram} of an oriented knot, see Figure \ref{fig:Tait}. 
We refer to Appendix~\ref{commut-alg} for the necessary background in commutative algebra needed for this section.

\begin{figure}[h!]
    \centering
\begin{subfigure}{.4\textwidth}
\centering

\begin{tikzpicture}[pics/arrow/.style={code={%
  \draw[line width=0pt,{Computer Modern Rightarrow[line
  width=1pt,width=3ex,length=2ex]}-] (-0.5ex,0) -- (0.5ex,0);
  }}]
\begin{knot}[
clip width=5,
%flip crossing=1,
]
\strand[black,ultra thick] (-2,-2) to pic[pos=0.2,sloped]{arrow} (2,2);
%\strand[black,ultra thick] (-2,1) to pic[pos=0.2,sloped]{arrow} (2,-1);;
\strand[black,ultra thick] (-2,1) -- (2,-1);;
\end{knot}

\tikzset{nnode/.style = {shape=circle,fill=myg,draw,inner sep=0pt,minimum
size=1.9em}}
\node[text=red] at (-2.3,1) {$a$};
\node[text=red] at (2.3,-1) {$c$};
\node[text=red] at (-2.3,-2) {$b$};
%\node[text=red] at (1,0) {$u_i$};
%\node[text=black] at (-0.4,-0.1) {$\bullet$};
%\node[text=black] at (-0.05,0.3) {$\bullet$};
\end{tikzpicture}
    \caption{$c=ta +(1-t)b$}
    \label{fig:Fox_coloring}
\end{subfigure}
\hspace{0.1\textwidth}
\begin{subfigure}{.4\textwidth}
\centering
\begin{tikzpicture}[pics/arrow/.style={code={%
  \draw[line width=0pt,{Computer Modern Rightarrow[line
  width=1pt,width=3ex,length=2ex]}-] (-0.5ex,0) -- (0.5ex,0);
  }}]
\tikzset{nnode/.style = {shape=circle,fill=myg,draw,inner sep=0pt,minimum
size=1.9em}}

\begin{knot}[
clip width=4,
%flip crossing=1,
]
\strand[black,ultra thick] (-2,-2) to pic[pos=0.2,sloped]{arrow} (2,2);
\strand[black,ultra thick] (-2,1) -- (2,-1);;
\end{knot}

\node[text=red] at (0,-1) {$U_i$};
\node[text=red] at (0,1) {$U_l$};
\node[text=red, ultra thick] at (-1.2,-0.2) {$U_j$};
\node[text=red] at (1,0) {$U_k$};

\end{tikzpicture}
    \caption{$U_i-tU_j=U_k-tU_l$}
    \label{fig:Dehn_coloring}
\end{subfigure}

\caption{Fox coloring \ref{fig:Fox_coloring} and Dehn coloring \ref{fig:Dehn_coloring} of knot diagrams. }
\label{fig:Fox-Dehn}
\end{figure}
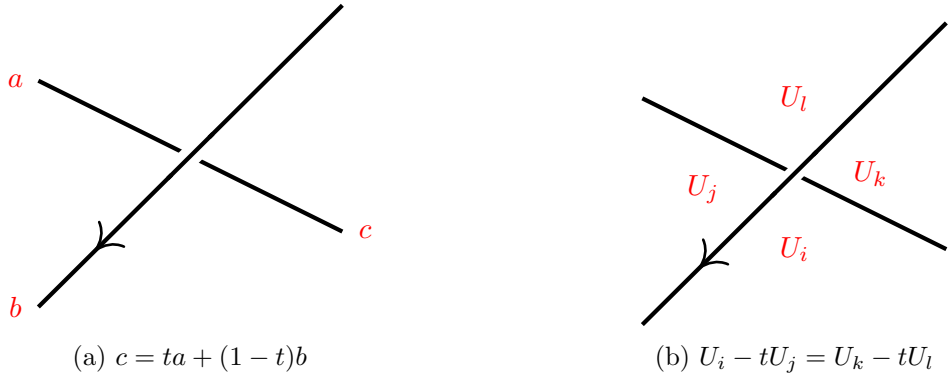

\subsection{Fox Coloring}

The Fox coloring is the coloring of the strands of the knot diagrams. In this section we introduce the concept of \textit{Fox $(R,t)$-coloring}, where $R$ is a Noetherian commutative ring with an identity and $t$ is an invertible element of this ring. 
We start with the definition of \textit{Fox tricolorability}, that is where $R=\Z/(3)$ and $t=-1$.

\begin{definition}
\label{def:3color}
A \textbf{Fox tricoloring} of a  knot diagram is a coloring of the strands with three colors such that at each crossing, the colors of the strands that meet at that crossing are either all the same or all different. If we take as colors $0$, $1$ and $2$, then this rules amounts to the linear equation $a+b+c \equiv 0 \Mod{3}$, where $a,b$ and $c$ are the colors of the three strands that come together at a crossing. Moreover, a Fox tricoloring is called \textbf{trivial} if all strands have the same color. A knot diagram is called \textbf{Fox tricolorable} if it has a non-trivial tricoloring.

\end{definition}
Tricolorability is another invariant of a knot\cite{przytycki}. This already allows us to distinguish the unknot and trefoil knot, as the latter is tricolorable and the former is not, see Figure \ref{fig:3color_trefoil}.

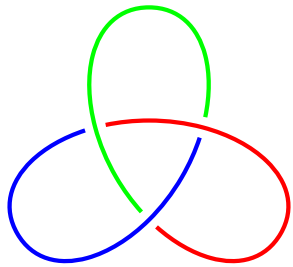
\begin{figure}[ht!]
    \centering
\begin{tikzpicture}[
use Hobby shortcut,
every trefoil component/.style={ultra thick, draw},
trefoil component 1/.style={red},
trefoil component 2/.style={blue},
trefoil component 3/.style={green},
]
\path[spath/save=trefoil] ([closed]90:2) foreach \k in {1,...,3} {
.. (-30+\k*240:.5) .. (90+\k*240:2) } (90:2);
\tikzset{spath/knot={trefoil}{8pt}{1,3,5}}
\end{tikzpicture}

    \caption{Trefoil knot is tricolorable.}
    \label{fig:3color_trefoil}
\end{figure}

Generalizing Definition \ref{def:3color} to colors $0,1, \ldots, n-1$ gives the equation $a+c \equiv 2b \Mod{n}$, where $a,b$ and $c$ are again the colors of the three strands that come together at a crossing with $b$ being the overstrand, and $n \in \Z_{>0}$. This can also be generalized further.

\begin{definition}
\label{def:FoxRt}
A \textbf{Fox $(R,t)$-coloring} of a knot diagram is a coloring of its strands with colors that are elements of $R$ and for each crossing it holds that
\begin{equation}
\label{eq:Fox}
c=ta +(1-t)b,
\end{equation}
where $t$ is a fixed invertible element in the ring $R$, the strand with color $b$ is the overstrand and the strands colored with $a$ and $c$ are understands such that the rotation from $b$ to $c$ around the crossing is counter clockwise; see Figure \ref{fig:Fox_coloring}.
A coloring is called trivial if all the colors are the same. The knot diagram is called \textbf{Fox $(R,t)$-colorable} if there is a non-trivial Fox $(R,t)$-coloring; see \cite{fox1970}.
\end{definition}

In particular, every Fox $(R,t)$-coloring with $R=\Z/(2)$ is trivial, since $t=1$ is the only invertible element of $R=\Z/(2)$. We also have the following result.

\begin{proposition}\label{p-(R,1)-col}
All Fox $(R,1)$-colorings of a knot diagram are trivial.
\end{proposition}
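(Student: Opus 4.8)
The plan is to show that when $t=1$ the Fox coloring rule collapses to the constraint that every pair of understrands meeting at a crossing receives the same color, and then to propagate this equality around the entire diagram using its connectedness.

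First I would substitute $t=1$ into the defining equation~\eqref{eq:Fox}. The rule $c = ta + (1-t)b$ becomes simply $c = a$, with no dependence on the overstrand color $b$. So a Fox $(R,1)$-coloring is exactly an assignment of colors to strands such that, at every crossing, the two understrands share a color. Since the overstrand at one crossing is itself a strand that terminates at (i.e., passes under at) some other crossing, this gives a chain of equalities linking all strands.

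Next I would make the propagation precise. Consider the graph whose vertices are the strands of the diagram and in which two strands are joined by an edge whenever they appear together as the two understrands of some crossing. A Fox $(R,1)$-coloring is constant on each connected component of this graph. It therefore remains to argue that this graph is connected, equivalently that the underlying knot is a single closed curve: starting from any strand and traversing the knot in its orientation, each time we pass under a crossing we move from one understrand to the adjacent one, which by the $t=1$ rule has the same color; after finitely many crossings we have visited every strand and returned to the start. Hence all strands have the same color, i.e., the coloring is trivial.

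I do not expect a serious obstacle here; the only point requiring a word of care is the claim that traversing the knot visits every strand, which is just the statement that a knot diagram comes from one connected closed curve (each strand is an arc of that curve between consecutive undercrossings, by the definitions in Section~\ref{sec:1}). One could alternatively phrase the whole argument at the level of the coloring matrix: with $t=1$ every row of the coloring matrix has the form $e_a - e_c$ (a difference of two standard basis vectors), so the all-ones vector spans the solution space exactly when the associated graph on strands is connected, which it is. Either formulation yields the proposition.
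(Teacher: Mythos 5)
Your proposal is correct and is essentially the paper's own argument: setting $t=1$ in~\eqref{eq:Fox} reduces the crossing relation to $c=a$, and the paper then chains these equalities by numbering the strands $x_1,\ldots,x_n$ consecutively along the knot's orientation so that each relation reads $x_{j+1}=x_j$, which is exactly your traversal/connectedness step in different words. No gap.
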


\begin{proof}
Up to a permutation, we may assume that the strands~$\{x_1,\ldots,x_n\}$ are numbered in such a way that $x_{j+1}$ comes after $x_j$ for a given choice of the orientation of the knot diagram.
Then the equations \eqref{eq:Fox} with $t=1$ become $x_{j+1}=x_j$ for all $j=1, \ldots n-1$. Hence every Fox $(R,1)$-coloring is trivial.
\end{proof}

Lemma \ref{lem:knot_diag} allows us to place the coefficients in the system of equations that has to hold for a Fox coloring of a diagram with $n$ strands into an $n \times n$ matrix. Before defining this matrix, we remark the following.

\begin{remark}\label{r-alex}
The definition of the \textit{Alexander matrix} of a knot diagram is usually given by means of the free calculus of a presentation of the fundamental group of the complement of the knot; see \cite[Chapter III]{crowell2012introduction}. From this approach one readily gets that the distinct presentations of the fundamental group of the knot give equivalent (see Definition \ref{d-equiv-matrix}) Alexander matrices. Hence the equivalence class of the Alexander matrix of a knot diagram is in fact an invariant of the knot.
\end{remark}

We give the following \textit{ad hoc} definition of the Alexander matrix of a knot diagram and show that it is an invariant under Reidemeister moves.

\begin{definition}
\label{def:coloringMat}
The \textbf{Alexander matrix} (or the \textbf{Fox coloring matrix}) of an oriented knot diagram with $n$ crossings $\{c_1,\ldots,c_n\}$ and strands~$\{x_1,\ldots,x_n\}$ is defined as the 
matrix~$M(t)$ with entries in $\Z[t,t^{-1}]$ with
$$
M_{ij}(t) = 
\begin{cases}
1 - t & \text{if} \ \ x_j \textnormal{ is an overstrand at } c_i,\\
-1 & \text{if} \ \ x_j \textnormal{ is an understrand at } c_i \textnormal{ at the left side of the overstrand},\\
t & \text{if} \ \ x_j \textnormal{ is an understrand at } c_i \textnormal{ at the right side of the overstrand},\\
0 & \text{otherwise}
\end{cases}
$$ for $1 \le i,j \le n$ with $i,j \in \Z_{>0}.$
\end{definition}

The matrix of Definition \ref{def:coloringMat} is called the Alexander matrix by Crowell and Fox \cite{crowell2012introduction} and it is different than the matrix Alexander defined in his paper~\cite{alexander1928topological}. 
Note that we write $M_{ij}(t)$ instead of 
$M(t)_{ij}$ and, for ease 
of notation, we omit the knot diagram in the symbol $M_{ij}(t)$.

\begin{definition}
\label{def:Foxmodule}
The \textbf{module of Fox $(R,t)$-colorings} of a knot diagram is the kernel of the matrix $M(t)$, that is, the $R$-module of 
$x \in R^n$ such that $M(t) x^T=0$.
\end{definition}

The sum of the entries in any row of $M(t)$ is zero. So, the columns of $M(t)$ are dependent, and thus the determinant of $M(t)$ is zero. 
Hence $E_0(M(t))=0$ (see Definition~\ref{d-elem-id}) and there is a non-trivial solution of the system of equations \eqref{eq:Fox}. Trivial colorings correspond to the solutions of this system of equations where all the (nonzero) elements are the same. 
By taking any $(n-1)$-minor of the Alexander matrix, we get another knot invariant; see~\cite{alexander1928topological}.

\begin{proposition}\label{p-pincipalM}
Let $M_{ij}^*(t)$ be the matrix obtained by deleting the $i$-th row and the $j$-th column of $M(t)$. 
Then the determinants $\det (M_{ij}^*(t))$ are equal to each other up to a factor~$\pm t^s$, where $s$ is an integer. 
In particular, $E_1(M(t))$ (see Definition \ref{d-elem-id}) is a principal ideal generated by $\det (M_{ij}^*(t))$ for any choice of the indices $1 \le i, j \le n$.
\end{proposition}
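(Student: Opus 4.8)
The plan is to exploit the single row-dependency of $M(T)$ noted just before the statement: since every row of $M(T)$ sums to zero, the all-ones column vector lies in the right kernel, so any two column-deleted submatrices of a fixed row-deletion are related by a column operation that does not change the determinant up to sign. Concretely, I would first fix the row index $i$ and compare $M^*_{ij}(T)$ with $M^*_{ik}(T)$ for two column indices $j\neq k$. Both are $(n-1)\times(n-1)$ matrices obtained from the $(n-1)\times n$ matrix $N$ (the $i$-th row of $M(T)$ deleted) by further deleting one column. Because the columns of $N$ still sum to zero (row sums are preserved under deleting a row), the deleted $j$-th column of $N$ equals minus the sum of all the remaining columns; substituting this into the Laplace/cofactor bookkeeping — or, more cleanly, adding all other columns of $M^*_{ik}(T)$ onto one column and comparing — shows $\det(M^*_{ij}(T)) = \pm \det(M^*_{ik}(T))$, with the sign coming only from column permutations. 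So the determinant is independent of the column index up to sign.

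Next I would handle the row index. Here the relevant structural fact about the Alexander matrix (which I would either cite from Crowell–Fox or re-derive) is that the \emph{columns} of $M(T)$ satisfy a relation of the form $\sum_j T^{e_j} \cdot (\text{column } j) = 0$ for suitable integer exponents $e_j$ — equivalently, the vector $(T^{e_1},\dots,T^{e_n})^T$ lies in the left kernel. This is the ``dual'' phenomenon to the row-sum-zero property and reflects that each strand $x_j$ appears as an overstrand at exactly one crossing and as an understrand (left or right) at the crossings where the two ends of the break meet. Running the same column-operation argument on the transpose, comparing $M^*_{ij}(T)$ with $M^*_{kj}(T)$, then yields equality up to a factor $\pm T^s$, the power of $T$ arising precisely because the left-kernel relation has monomial rather than constant coefficients. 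Combining the two comparisons gives that all $(n-1)$-minors $\det(M^*_{ij}(T))$ agree up to $\pm T^s$.

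For the final sentence, recall that $E_1(M(T))$ is by Definition~\ref{d-elem-id} the ideal of $\mathbb{Z}[T,T^{-1}]$ generated by all $(n-1)\times(n-1)$ minors of $M(T)$, i.e. by all the $\det(M^*_{ij}(T))$. Since we have shown these generators differ pairwise by units $\pm T^s$ of $\mathbb{Z}[T,T^{-1}]$, they all generate the same principal ideal, so $E_1(M(T)) = \big(\det(M^*_{ij}(T))\big)$ for any choice of $i,j$.

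I expect the main obstacle to be pinning down the left-kernel monomial relation cleanly — the row-sum-zero fact is immediate from Definition~\ref{def:coloringMat}, but the column relation requires tracking, for each strand $x_j$, which crossings it touches and in which role (overstrand, left understrand, right understrand), and checking that the contributions telescope to zero after multiplying columns by the correct powers of $T$. An alternative that sidesteps this is to invoke Remark~\ref{r-alex}: the Alexander matrix of a knot diagram is, up to the equivalence of Definition~\ref{d-equiv-matrix}, the Fox-calculus Alexander matrix of a Wirtinger presentation, for which the statement that all codimension-one minors agree up to $\pm T^s$ is classical (it is the well-definedness of the Alexander polynomial). I would likely present the direct column-operation argument for the column index and cite the classical fact for the row index, or else give the short combinatorial verification of the monomial column relation in a lemma.
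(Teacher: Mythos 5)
The paper does not actually prove Proposition~\ref{p-pincipalM}; it simply cites Crowell--Fox, and your sketch is precisely the classical argument behind that citation, so you are on the same route and your outline is sound: the constant-coefficient column relation (each row of $M(T)$ sums to zero) handles comparisons of minors with the same row deleted, the monomial row relation handles comparisons with the same column deleted, and since $\pm T^s$ is a unit of $\Z[T,T^{-1}]$ all these minors generate the same principal ideal $E_1(M(T))$. One slip to fix: you write that ``the \emph{columns} of $M(T)$ satisfy a relation $\sum_j T^{e_j}\cdot(\text{column } j)=0$ --- equivalently, the vector lies in the left kernel.'' A relation among the columns is a \emph{right}-kernel vector; what you need for comparing $M^*_{ij}(T)$ with $M^*_{kj}(T)$ is a relation among the \emph{rows}, $\sum_i \pm T^{e_i}\cdot(\text{row } i)=0$, i.e.\ a monomial vector in the left kernel. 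That row relation is the genuinely non-trivial ingredient --- it does not follow from Definition~\ref{def:coloringMat} by inspection the way the row-sum-zero fact does, but comes from the Wirtinger presentation via the free calculus (the product of suitably conjugated relators is trivial, and abelianizing the conjugating words produces the monomial coefficients $\pm T^{e_i}$); you correctly flag this and your fallback of citing it from Crowell--Fox (via Remark~\ref{r-alex}) is exactly what the paper itself does.
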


\begin{proof}
See \cite[Chapter VI (1.3)]{crowell2012introduction} and \cite[Chapter VIII (3.7)]{crowell2012introduction}.
\end{proof}

\begin{proposition}\label{p-det=pm1}
Let $M(t)$ be the Alexander matrix of a knot diagram with $n$ crossings. We have $\det (M_{ij}^*(1))=\pm1$ for all $1 \le i \le j \le n.$
\end{proposition}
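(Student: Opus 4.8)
The plan is to compute the reduced determinant $\det(M^*_{ij}(1))$ by first specializing $T=1$ in the Alexander matrix and then exploiting the combinatorial structure of the resulting matrix, which records, for each crossing, which strand is the overstrand (entry $0$ at $T=1$ since $1-T=0$) and which strands are the two understrands (entries $-1$ and $+1$). Thus $M(1)$ is the incidence-type matrix whose $i$-th row has exactly one $+1$ and one $-1$ (the two understrands at crossing $c_i$) and is zero elsewhere. One checks that in fact, after relabeling, the two understrands at a crossing are consecutive strands $x_j$ and $x_{j+1}$ in the cyclic order induced by the orientation (as used in the proof of Proposition~\ref{p-(R,1)-col}), so $M(1)$ is, up to signs of rows, the incidence matrix of a directed cycle on $n$ vertices — equivalently, a circulant-like matrix with $-1$ on the diagonal and $+1$ on the next cyclic position.

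The key steps, in order, are as follows. First, I would make precise the claim that at each crossing the two understrands are cyclically consecutive strands, using the convention that strands are the arcs of the diagram between consecutive under-passes; this is exactly the numbering used in the proof of Proposition~\ref{p-(R,1)-col}. Second, with this numbering, $M(1)$ has rows of the form $\pm(e_{j+1}-e_j)$ (indices mod $n$), so it is the (signed) vertex–edge incidence matrix of the $n$-cycle $C_n$. Third, it is a standard fact that deleting any one row and any one column from the incidence matrix of a connected graph on $n$ vertices yields a matrix whose determinant is $\pm1$ times the number of spanning trees that... more precisely, the absolute value of such a minor equals the number of spanning trees of the graph (matrix–tree theorem), and $C_n$ has exactly $n$ spanning trees — so that naive count would give $\pm n$, not $\pm 1$. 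The correct bookkeeping is: the reduced incidence matrix of $C_n$ (delete one row, keep all $n$ columns/edges) is $n\times n$ with rank $n-1$; what we actually delete here is one row \emph{and} one column, giving an $(n-1)\times(n-1)$ matrix, and for the cycle this is (up to row signs and permutation) a bidiagonal matrix with $\pm1$ on the diagonal, whose determinant is manifestly $\pm1$. So the cleanest route is to avoid the matrix–tree theorem entirely: after the relabeling, delete row $i$ and column $j$ and observe directly that the remaining matrix is triangular up to a permutation of rows and columns, with all diagonal entries $\pm1$, hence has determinant $\pm1$. Fourth, since by Proposition~\ref{p-pincipalM} all the reduced determinants $\det(M^*_{ij}(T))$ agree up to a unit $\pm T^s$ in $\Z[T,T^{-1}]$, specializing $T=1$ shows they all equal each other up to sign, so it suffices to have computed one of them; this also justifies the claim for \emph{all} pairs $i\le j$ once it is known for one convenient pair.

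The main obstacle I expect is Step~1: justifying rigorously that the two understrands at any crossing are cyclically consecutive in the strand numbering, and more generally pinning down the sign conventions (left/right understrand, orientation) well enough that the row of $M(1)$ at crossing $c_i$ really is $\pm(e_{j+1}-e_j)$ rather than $\pm(e_j - e_{j+1})$ or involving non-consecutive indices. One has to argue that as one traverses the knot, each time one passes \emph{under} a crossing the strand label increments, and both strands meeting underneath a given overstrand are the "incoming" and "outgoing" understrand at that under-pass — hence consecutive. Once this structural fact is in hand, the determinant computation is the easy bidiagonal-matrix observation above, and invoking Proposition~\ref{p-pincipalM} to pass from one pair of indices to all pairs is immediate. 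An alternative, slightly more robust argument for the determinant step, in case the consecutivity claim is awkward to set up cleanly, is to note that $M(1)$ is the incidence matrix of a connected graph (the diagram's underlying $4$-valent graph, suitably interpreted) and that the relevant $(n-1)\times(n-1)$ minor counts spanning trees of a graph that is a single cycle after the right contraction, but I would prefer the direct bidiagonal argument as it is self-contained and transparent.
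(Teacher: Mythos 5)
Your proposal is correct and follows essentially the same route as the paper: specialize at $T=1$ using the consecutive-strand numbering from the proof of Proposition~\ref{p-(R,1)-col}, observe that deleting one row and one column leaves a triangular matrix with units on the diagonal (the paper deletes the last row and column to get an upper triangular matrix with ones on the diagonal), and then invoke Proposition~\ref{p-pincipalM} to pass from one pair of indices to all pairs. Your side remarks correctly diagnose why the matrix--tree detour would be the wrong bookkeeping, and the consecutivity of the two understrands at each crossing is exactly the convention the paper already relies on.
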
 
\begin{proof} 
The equations \eqref{eq:Fox} with $t=1$ become $x_{j+1}=x_j$ for all $j=1, \ldots,n-1$ as we have seen in the proof of Proposition \ref{p-(R,1)-col}. The matrix that is obtained by deleting the last column and last row is upper triangular with ones on the diagonal, so it has determinant one. The result follows from Proposition \ref{p-pincipalM}.
\end{proof}

Another important result is the following.

\begin{proposition}\label{p-ReidemesterM}
If $D_1 \approx D_2$, 
then the corresponding Alexander matrices~$M_1(t)$ and~$M_2(t)$ are equivalent; see Definition \ref{d-equiv-matrix}. 
\end{proposition}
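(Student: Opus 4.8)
The plan is to verify the claim one Reidemeister move at a time: it suffices to show that performing a single move of type I, II or III on a diagram $D_1$ to obtain $D_2$ produces Alexander matrices $M_1(T)$ and $M_2(T)$ that are equivalent in the sense of Definition~\ref{d-equiv-matrix}, since equivalence of matrices is transitive and the general case follows by composing the moves in a Reidemeister sequence witnessing $D_1 \approx D_2$. For each move I would fix a small disk in which the move takes place, label the finitely many strands and crossings inside and on the boundary of the disk, and write down the corresponding blocks of $M_1(T)$ and $M_2(T)$; outside the disk the two diagrams agree, so the matrices differ only in these blocks together with the bookkeeping of which strand indices are merged or split at the boundary.

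The type~I analysis is the base case: a twist move adds (or removes) one crossing $c$ and splits one strand into two, say $x$ becomes $x'$ and $x''$. At the new crossing the Fox relation forces $x' = x''$ (this is exactly the computation already appearing in the proof of Proposition~\ref{p-(R,1)-col}, specialized to the self-crossing produced by a twist), so the new row of $M_2(T)$ is, up to sign and a factor $\pm T^s$, the difference of two unit vectors, and the new column it introduces can be cleared by elementary row/column operations of the kind allowed in Definition~\ref{d-equiv-matrix}; what remains is $M_1(T)$. For type~II I would do the same with the two new crossings and the two strands that get subdivided: the two new rows are again reducible by elementary operations to expressions that cancel the two new columns, leaving $M_1(T)$ back. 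Type~III changes neither the number of crossings nor the number of strands; here I would directly compare the three affected rows before and after sliding the strand across the crossing and exhibit the explicit sequence of elementary row operations (adding $T$- or $(1-T)$-multiples of one row to another) that transforms one $3\times 3$ block into the other, keeping all other rows fixed.

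Two bookkeeping points need care. First, Definition~\ref{d-equiv-matrix} of equivalent matrices presumably allows multiplication of rows or columns by units of $\Z[T,T^{-1}]$, i.e.\ by $\pm T^s$, and this latitude is genuinely needed: the entries $T$ and $-1$ in the different understrand positions are interchanged under reflection/orientation choices, so the match is only up to such unit factors, exactly as in Proposition~\ref{p-pincipalM}. I would state up front which operations the equivalence relation permits and make sure every manipulation above is one of them. Second, one must check that the sum-of-each-row-is-zero property is preserved throughout, which it automatically is, and use it to keep the reductions consistent.

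The main obstacle I expect is purely organizational rather than deep: enumerating the sub-cases of each move (the strand being twisted/poked can be an over- or understrand, on the left or the right, with various orientations, and for type~III the six strand-ends can be matched up in several ways) and presenting the elementary operations compactly enough to be convincing without drowning in $3\times 3$ matrices. A clean way to cut this down is to reduce to orientation-normalized local pictures using Theorem~\ref{thm:reidemeistermoves} and the observation that reversing the orientation of a strand multiplies the corresponding column by $-1$ and replaces $T$ by $T^{-1}$ in the relevant entries, again a legal equivalence; with that normalization only one representative of each move type must be computed in detail.
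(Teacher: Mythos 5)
Your case-by-case verification over the three Reidemeister moves (type~I forcing $x'=x''$, types~II and~III reduced by elementary row/column operations and pivot deletion) is exactly the standard argument, and it matches the paper's route: the paper gives no details itself, simply citing Livingston for the case $T=-1$, $R=\F_p$ and asserting the general case is proved similarly, so your outline is the same proof written out. One caveat: the orientation-normalization shortcut at the end is not legitimate as stated --- substituting $T\mapsto T^{-1}$ in selected entries is not among the operations of Definition~\ref{d-equiv-matrix}, and a single strand of a knot cannot be reoriented independently of the rest --- so the orientation sub-cases of each move should be carried out explicitly rather than normalized away.
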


\begin{proof}
See \cite[Chapter 2, Theorem 3]{livingston1993knot} in the case $t=-1$ and $R=\F_p$ for a prime $p$. The general case is proved similarly.
\end{proof}

$E_1(M(t))$ is a principal ideal in $\Z[t,t^{-1}]$ by Proposition \ref{p-pincipalM}, that is generated by a nonzero element by Proposition \ref{p-det=pm1}.
Hence there exists an integer $s$ such that multiplication of a generator of $E_1(M(t))$ by the invertible element $\pm t^s$ gives a polynomial with a constant term that is positive. 

\begin{definition}
\label{def:alex}
Let $K$ be a knot. The \textbf{Alexander polynomial} of $K$, denoted by $\Delta_K(t)$, 
is the generator of $E_1(M(t))$ which is the polynomial with a constant term that is positive. Moreover, the value $|\Delta_K(-1)|$ is called the \textbf{determinant} of $K$. 
\end{definition}

The Alexander polynomial is a knot invariant by Proposition \ref{p-ReidemesterM}. However it is important to note that although the elementary ideals $E_0(M(t))=0$ and $E_1(M(t))$ of a knot are principal ideals, the other elementary ideals $E_k(M(t))$ are not necessarily principal for~$k>1$, as the following example shows.

\begin{example}
\label{ex:not-principal-ideal}
Figures 50 and 51 of \cite{crowell2012introduction} have both $\Delta_K(t)=2t^2-5t+2$ as their Alexander polynomial, but they have distinct elementary ideals.
The Stevedore's knot depicted in Figure~50 has $E_k=(1)$ for all $k\geq 2$, but the knot of Figure~51 has $E_2=(2-t,1-2t)$, which is not principal.
\end{example}

We work out an example to show how the Alexander polynomial of a knot is computed. Note that it does not depend on the chosen submatrix or the chosen knot diagram.

\begin{example}
\label{ex:alexander}
The Alexander matrix of the diagram of the trefoil knot $K$ depicted in Figure \ref{fig:trefoil} is given by 
\begin{equation*}
M(t) = \begin{pmatrix}
1-t & t & -1 \\
-1 & 1-t & t \\
t & -1 & 1-t
\end{pmatrix}.
\end{equation*}
We have $\det(M^*_{11}(t)) = t^2 -t + 1$ and $\det(M^*_{12}(t)) = -t^2 +t - 1$. Following Definition~\ref{def:alex}, we observe that the polynomial $\det(M^*_{11}(t))$ has a positive constant term, and thus~$\Delta_K(t) = t^2-t+1$.
\end{example}

We now turn our attention to the invariant factors of  the Alexander matrix and the invariant factors of the module of Fox $(R,t)$-colorings, see Definition \ref{def:Foxmodule}.

\begin{proposition}\label{p-elem-ideals-Alex-pol}
Let $R$ be a principal ideal domain with invertible element $t$. 
Let~$(d_1)\subseteq (d_2)\subseteq \cdots \subseteq (d_l)$ be the  invariant factors of the matrix $M(t)$ and~$E_k(M(t))$ be generated by $\Delta_k$, see Corollary \ref{c-struct-pid}. Then $\Delta_0=0$, $d_1=0$, and $\Delta_K(t)=\Delta_1=\prod_{j=2}^ld_j$. 
\end{proposition}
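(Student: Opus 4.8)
The plan is to extract everything from the structure theory of finitely presented modules over a PID, which is summarized in the appendix (Corollary~\ref{c-struct-pid}), together with the three facts about the Alexander matrix already established: the rows of $M(T)$ sum to zero, the minor $\det(M_{ij}^*(T))$ generates $E_1(M(T))$ (Proposition~\ref{p-pincipalM}), and this generator is nonzero with $\det(M_{ij}^*(1))=\pm 1$ (Proposition~\ref{p-det=pm1}). First I would recall that the $k$-th elementary (Fitting) ideal $E_k(M(t))$ is generated by the product $d_1 \cdots d_{k}$ of the first $k$ invariant factors when $M(t)$ has invariant factors $(d_1) \subseteq (d_2) \subseteq \cdots \subseteq (d_l)$ — this is exactly the content of Corollary~\ref{c-struct-pid} — and that specializing $T \mapsto t$ commutes with forming elementary ideals, so that $E_k(M(t))$ is the image in $R$ of $E_k(M(T)) \subseteq \Z[T,T^{-1}]$ under the ring map $\Z[T,T^{-1}] \to R$ sending $T \mapsto t$.

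The three claims then fall out in order. For $\Delta_0 = 0$: by definition $E_0(M(T))$ is generated by $\det(M(T))$, which is $0$ because the columns of $M(T)$ are dependent (the rows sum to zero is the wrong phrasing — rather, every row sums to zero, forcing $\det = 0$ as noted in the text right after Definition~\ref{def:Foxmodule}); hence $E_0(M(t)) = 0$ and $\Delta_0 = 0$. For $d_1 = 0$: since $E_1(M(t)) = (d_1)$ and also $E_0(M(t)) = (d_1 \cdots$, wait — more carefully, $E_0(M(t))$ is generated by $d_1 d_2 \cdots d_l$ (or is $0$ if $l < n$, equivalently if some $d_i = 0$); since $E_0 = 0$ we get $d_1 \cdots d_l = 0$, and because the chain of ideals is increasing, $(d_1) \subseteq (d_j)$ for all $j$, so the only way the product can vanish is if $d_1 = 0$ (if $d_1$ were a nonzero non-unit or a unit, all later $d_j$ would be associates or multiples and the product would be nonzero in the domain $R$). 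Actually the cleanest argument: $d_1$ generates $E_1(M(t)) = E_0$'s... no. Let me instead say: $d_1$ generates the $0$-th... I will present it as: the ideal $(d_1)$ equals $E_1(M(t))$ only after...

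Let me restate the middle step cleanly for the writeup. The invariant-factor description gives $E_0(M(t)) = (d_1 d_2 \cdots d_l)$ when $M(t)$ is $l \times l$ after trivial rows/columns are discarded, but here the relevant normalization (Corollary~\ref{c-struct-pid}) presents the cokernel of $M(t)$ as $\bigoplus R/(d_i)$, and $E_k$ is generated by the product of all but the $k$ largest invariant factors; in the convention of the proposition $E_0$ is generated by $\prod_{j=1}^l d_j$ and $E_1$ by $\prod_{j=2}^l d_j$. Since $E_0 = 0$ and $R$ is a domain, $\prod_{j=1}^l d_j = 0$ forces some $d_j = 0$, and since $(d_1) \subseteq (d_j)$ for all $j$, having any $d_j = 0$ forces $d_1 = 0$. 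For the last identity: $\Delta_1$ generates $E_1(M(t)) = (\prod_{j=2}^l d_j)$ by definition, so $\Delta_1$ and $\prod_{j=2}^l d_j$ are associates; and $\Delta_1$ is the image under $T \mapsto t$ of the Alexander polynomial $\Delta_K(T)$, which generates $E_1(M(T))$ by Definition~\ref{def:alex}. The normalization fixing the sign/unit ambiguity (positive constant term over $\Z[T,T^{-1}]$, respectively the chosen generator over $R$) makes these literally equal, giving $\Delta_K(t) = \Delta_1 = \prod_{j=2}^l d_j$.

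The only genuinely delicate point is bookkeeping about units and the normalization of generators: over a general PID $R$ the invariant factors and the elementary-ideal generators $\Delta_k$ are defined only up to units, so "$\Delta_K(t) = \Delta_1$" must be read with a fixed choice of representatives, and one should either state the identity up to units or invoke whichever normalization Corollary~\ref{c-struct-pid} and Definition~\ref{def:alex} pin down. I would handle this by proving the associate-class identity first and then remarking that the chosen normalizations make it an equality, citing Proposition~\ref{p-pincipalM} (principality of $E_1$) and Proposition~\ref{p-det=pm1} (nonvanishing of its generator) to guarantee $\Delta_1 \neq 0$, so that the list $d_2, \ldots, d_l$ is in fact nonempty and its product is a well-defined nonzero element of $R$.
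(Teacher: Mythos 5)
Your proof is correct and follows the same overall skeleton as the paper's: $\det M(t)=0$ gives $E_0(M(t))=0$ and $\Delta_0=0$, and Corollary~\ref{c-struct-pid} applied to $M(t)$ gives $\Delta_1=\prod_{j=2}^l d_j$, identified with $\Delta_K(t)$ because the $(n-1)\times(n-1)$ minors of $M(t)$ are the specializations of those of $M(T)$ and all generate $E_1$ up to units (Proposition~\ref{p-pincipalM}). The one place you genuinely diverge is the derivation of $d_1=0$: the paper writes $\Delta_0=d_1\Delta_1$ and cancels $\Delta_1$ in the integral domain $R$, invoking $\Delta_K(t)\neq 0$ via Proposition~\ref{p-det=pm1}, whereas you argue that $\prod_{j=1}^l d_j=0$ forces some $d_j=0$ and then use the inclusion $(d_1)\subseteq(d_j)$ to conclude $d_1=0$. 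Your route is actually the more robust one: Proposition~\ref{p-det=pm1} only shows $\Delta_K(T)\neq 0$ in $\Z[T,T^{-1}]$, and the specialization $\Delta_K(t)$ can vanish in $R$ (e.g.\ the trefoil over $\F_4$ at $t=\alpha$ or over $\F_7$ at $t=3$, Example~\ref{ex:Z4F4}), so the paper's cancellation step --- and your own closing remark that Propositions~\ref{p-pincipalM} and~\ref{p-det=pm1} ``guarantee $\Delta_1\neq 0$'' --- does not hold in general, while your chain-of-ideals argument needs no such hypothesis. The remaining caveat you raise is cosmetic but legitimate: the $d_j$ and $\Delta_k$ are defined only up to units, so the stated equalities are really equalities of ideals together with a choice of normalization, and your plan of proving the associate-class identity first matches what the paper does implicitly. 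You should, however, strip out the visible false starts in the middle of your writeup before submitting it.
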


\begin{proof}
The determinant of $M(t)$ is zero. So $E_0(M(t))=(0)$ and $\Delta_0=0$.
Now~$\Delta_K(t)=\Delta_1=\prod_{j=2}^ld_j$ by Corollary \ref{c-struct-pid} and $\Delta_K(t)\not=0$ by Proposition \ref{p-det=pm1}. 
So $\Delta_0=d_1\Delta_1$ by Corollary \ref{c-struct-pid}. This implies that $d_1=0$, since $R$ is an integral domain.
\end{proof}

Slightly abusing the notation, denote the localization of $\Z$ at a nonzero $t\in \Z$ by $Z_t$, and the localization of $\F_p[T]$ at a nonzero $t$ of $\F_p[T]$ by $\F_p[T]_t$. So $Z_t=\{ n/t^e \mid n,e \in\Z \}$ and~$\F_p[T]_t=\{ f/t^e \mid f\in \F_p[T],\ e \in\Z \}$. The next two propositions will be used later in Subsection \ref{subsec:dim} to bound the dimension of so-called \textit{Fox knot codes}.

\begin{proposition}\label{p-number-color-Z}
Let $D$ be a  knot diagram of a knot $K$.
Let $d, t\in \Z$ such that $0<t<d$ and $ \gcd(d,t)=1$. Let $R=\Z_t$ and and $\overline{R}=R/(d)$.
Let  $(d_1)\subseteq (d_2)\subseteq \cdots \subseteq (d_l) $ be the invariant factors of the matrix $M(t)$ of Fox $(R,t)$-colorings.
Let $a_i=\gcd (d,d_i)$ and~$\overline{x}=x+(d) \in R/(d)$ for $x\in R$.
Then $\overline{t}$ is an invertible element in $\overline{R}$ and $d\prod_{i=2}^n a_i$ is the number of Fox $(\overline{R},\overline{t})$-colorings of $D$.
\end{proposition}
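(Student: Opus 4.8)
The plan is to realize the set of Fox $(\overline{R},\overline{t})$-colorings as the kernel of a matrix over the finite ring $\overline{R}$ and then to count that kernel through the Smith normal form of $M(t)$ over the principal ideal domain $R=\Z_{t}$. First, the easy part: from $\gcd(d,t)=1$ there are $u,v\in\Z$ with $ut+vd=1$, hence $\overline{u}\,\overline{t}=\overline{1}$ in $\overline{R}=R/(d)$, so $\overline{t}$ is invertible. Since $t$ is already a unit modulo $d$, the canonical map $\Z\to\Z_{t}/(d)$ is surjective with kernel $(d)$, so $\overline{R}\cong\Z/(d)$; in particular $|\overline{R}|=d$. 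I would use this identification throughout, and interpret $a_{i}=\gcd(d,d_{i})$ via a representative: writing $d_{i}=m_{i}/t^{e_{i}}$ with $m_{i}\in\Z$ and $e_{i}\ge 0$, the class $\overline{d_{i}}\in\Z/(d)$ equals $\overline{m_{i}}$ up to the unit $\overline{t}^{\,e_{i}}$, so $\gcd(d,d_{i}):=\gcd(d,m_{i})$ is well defined and $(\overline{d_{i}})=(\overline{m_{i}})$ as ideals of $\Z/(d)$.

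Next I would rephrase the quantity to be computed. By Definition~\ref{def:coloringMat} every entry of $M(T)$ is one of $1-T$, $-1$, $T$, $0$, so reduction modulo $d$ commutes with evaluation at $T=t$; hence the Alexander matrix over $\overline{R}$ evaluated at $\overline{t}$ is exactly $\overline{M(t)}$, the entrywise reduction modulo $d$ of $M(t)\in R^{n\times n}$. By Definition~\ref{def:Foxmodule} the Fox $(\overline{R},\overline{t})$-colorings of $D$ are precisely the elements of $\Ker_{\overline{R}}\overline{M(t)}$, so the claim reduces to proving that $|\Ker_{\overline{R}}\overline{M(t)}|=d\prod_{i=2}^{n}a_{i}$.

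For the main step I would apply the structure theorem over the PID $R=\Z_{t}$ (Corollary~\ref{c-struct-pid}) to write $M(t)=P\,D\,Q$ with $P,Q\in\mathrm{GL}_{n}(R)$ and $D=\mathrm{diag}(d_{1},\dots,d_{n})$, the $d_{i}$ being the invariant factors (those beyond the rank of $M(t)$ equal to $0$). Reducing modulo $d$ leaves $\overline{P},\overline{Q}$ invertible over $\overline{R}$, since an invertible matrix has unit determinant and units map to units; hence $\overline{M(t)}=\overline{P}\,\overline{D}\,\overline{Q}$ and $|\Ker_{\overline{R}}\overline{M(t)}|=|\Ker_{\overline{R}}\overline{D}|$. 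As $\overline{D}$ is diagonal, $\Ker_{\overline{R}}\overline{D}=\bigoplus_{i=1}^{n}\textnormal{Ann}_{\overline{R}}(\overline{d_{i}})$, and in $\overline{R}\cong\Z/(d)$ multiplication by $\overline{d_{i}}$ induces an isomorphism $\Z/(d)\big/\textnormal{Ann}(\overline{d_{i}})\cong(\overline{d_{i}})$; since $|(\overline{d_{i}})|=d/a_{i}$, this gives $|\textnormal{Ann}_{\overline{R}}(\overline{d_{i}})|=a_{i}$ and therefore $|\Ker_{\overline{R}}\overline{M(t)}|=\prod_{i=1}^{n}a_{i}$. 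Finally, the rows of $M(T)$ sum to $0$, so $\det M(t)=0$ and $d_{1}=0$ by Proposition~\ref{p-elem-ideals-Alex-pol}; thus $a_{1}=\gcd(d,0)=d$ and $\prod_{i=1}^{n}a_{i}=d\prod_{i=2}^{n}a_{i}$, which is what we want.

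There is no deep difficulty here; the points that must be handled carefully are the commutative-algebra bookkeeping — the identification $\overline{R}=R/(d)\cong\Z/(d)$, the well-definedness of $\gcd(d,d_{i})$ for the localized elements $d_{i}\in\Z_{t}$, and the stability of $\mathrm{GL}_{n}$ under the reduction $R\to\overline{R}$, so that the Smith-form factorization descends and the kernel can be counted after diagonalization. Everything else is routine.
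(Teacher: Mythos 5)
Your proof is correct and follows essentially the same route as the paper: reduce to counting $\Ker_{\overline{R}}(\overline{M(t)})$, diagonalize $M(t)$ over the PID $\Z_t$ via Smith normal form, and count the kernel of the reduced diagonal matrix using $d_1=0$. The only difference is that you inline (via a direct annihilator count) what the paper delegates to Proposition~\ref{p-struct-ker-im} in the appendix, and you add some welcome care about the well-definedness of $\gcd(d,d_i)$ for elements of the localization.
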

\begin{proof}

The element $\overline{t}$ is invertible in $R/(d)$, since $ \gcd(d,t)=1$. 
Hence~$\Z/(d)\cong \Z_t/(d)=\overline{R}$.
The module of $(\overline{R},\overline{t})$-colorings of $D$ is equal to the $Ker(M(\overline{t}) )$ which is isomorphic to~$ \overline{R}/(\overline{a_1}) \oplus \overline{R}/(\overline{a_2}) \oplus  \cdots \oplus  \overline{R}/(\overline{a_l})$ by Proposition~\ref{p-struct-ker-im}. We have $d_1=0$ by Proposition~\ref{p-elem-ideals-Alex-pol}, and $|\overline{R}|=d$. Furthermore $\overline{R}/(\overline{a_i})\cong \Z/(a_i)$, and thus $|\overline{R}/(\overline{a_i})|=a_i$ for all $2 \leq i\leq n$.
Hence~$d\prod_{i=2}^n a_i$ is the number of Fox $(\overline{R},\overline{t})$-colorings of $D$.
\end{proof}

\begin{proposition}\label{p-number-color-prime}
Let $D$ be a  knot diagram of a knot $K$.
Let $p\in \Z$  be a prime number. Let $d,t \in \F_p[T]$ such that $\gcd(d,t)=1$. Let $R=\F_p[T]_t$ and let $\overline{R}=R/(d)$.
Let~$(d_1)\subseteq (d_2)\subseteq \cdots \subseteq (d_l) $ be the invariant factors of the matrix $M(t)$ of Fox $(R,t)$-colorings.
Let~$a_i=\gcd (d,d_i)$. Let $\delta = \deg (d)$ and $\alpha_i = \deg (a_i)$.
Then $\overline{t}$ is an invertible element of $\overline{R}$ and~$p^{\delta+\sum_{i=2}^n \alpha_i}$ is the number of Fox $(\overline{R},\overline{t})$-colorings of $D$.
\end{proposition}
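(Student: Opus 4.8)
The plan is to mirror the proof of Proposition~\ref{p-number-color-Z}, working over the polynomial ring $\F_p[T]$ in place of $\Z$ and keeping track of $\F_p$-dimensions (equivalently, of degrees) instead of the multiplicative counting of orders. I would first record two ring-theoretic facts. Since $\F_p[T]$ is a principal ideal domain, so is its localization $R=\F_p[T]_t$, so Corollary~\ref{c-struct-pid} and Proposition~\ref{p-struct-ker-im} apply to the matrix $M(t)$ over $R$. And since $\gcd(d,t)=1$, every power $t^e$ is a unit modulo $(d)$; hence the canonical surjection $\F_p[T]\twoheadrightarrow\F_p[T]/(d)$ factors through $R$ and induces an isomorphism $\overline{R}=R/(d)\cong\F_p[T]/(d)$. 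In particular $\overline{t}$ is invertible in $\overline{R}$, which is the first assertion.

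Next I would apply the structure theorem. By Definition~\ref{def:Foxmodule} the module of Fox $(\overline{R},\overline{t})$-colorings of $D$ is $\Ker(M(\overline{t}))$, and by Proposition~\ref{p-struct-ker-im} this is isomorphic to $\overline{R}/(\overline{a_1})\oplus\cdots\oplus\overline{R}/(\overline{a_l})$ with $a_i=\gcd(d,d_i)$. Exactly as in the proof of Proposition~\ref{p-number-color-Z}, Proposition~\ref{p-elem-ideals-Alex-pol} gives $d_1=0$, so $a_1=\gcd(d,0)=d$ and the first summand is $\overline{R}/(\overline{d})=\overline{R}$, of cardinality $|\F_p[T]/(d)|=p^{\deg d}=p^{\delta}$. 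For $i\ge 2$ we have $a_i\mid d$, hence $\overline{R}/(\overline{a_i})\cong\F_p[T]/(a_i)$, which is an $\F_p$-vector space of dimension $\deg(a_i)=\alpha_i$ and so has $p^{\alpha_i}$ elements. Multiplying the orders of the summands then yields $p^{\delta}\prod_{i=2}^{n}p^{\alpha_i}=p^{\delta+\sum_{i=2}^{n}\alpha_i}$, which is the number of Fox $(\overline{R},\overline{t})$-colorings of $D$.

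There is no genuine obstacle here; the only two points that merit a line of care are that localizing at $t$ does not change the residue ring --- handled by the isomorphism $R/(d)\cong\F_p[T]/(d)$ above, which is exactly where the hypothesis $\gcd(d,t)=1$ enters --- and the elementary identity $|\F_p[T]/(f)|=p^{\deg f}$ for nonzero $f\in\F_p[T]$, which converts the product of the orders of the cyclic summands into a single power of $p$ whose exponent is the sum of the corresponding degrees. In short, this proposition is the degree-graded analogue of Proposition~\ref{p-number-color-Z}, and its proof is obtained from the latter by tracking dimensions over $\F_p$ rather than orders.
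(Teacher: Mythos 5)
Your proposal is correct and follows essentially the same route as the paper, which simply declares the argument to be verbatim that of Proposition~\ref{p-number-color-Z} (structure theorem via Proposition~\ref{p-struct-ker-im}, $d_1=0$ from Proposition~\ref{p-elem-ideals-Alex-pol}, the isomorphism $\overline{R}\cong\F_p[T]/(d)$ from $\gcd(d,t)=1$) with the counting at the end replaced by $|\F_p[T]/(f)|=p^{\deg f}$. You have merely written out explicitly the steps the paper leaves implicit by reference, and in doing so you even correct a small slip in the paper's final line, which says $\overline{R}/(\overline{a_i})$ has $p^{\deg(d_i)}$ elements where $p^{\deg(a_i)}$ is meant.
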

\begin{proof}
The proof is verbatim the same as for Proposition \ref{p-number-color-Z}, except for the final part. The element $\overline{t}$ is invertible in $R/(d)$, since $ \gcd(d,t)=1$. 
Hence $\overline{R}\cong \F_p[T]/(d)$ which has $p^{\deg(d)}$ elements, and 
$\overline{R}/(\overline{a_i})\cong \F_p[T]/(a_i)$ which has $p^{\deg(d_i)}$ elements.
\end{proof}

The Alexander polynomial~$\Delta_K(t)$ plays a direct role in determining whether a knot diagram is Fox $(R,t)$-colorable or not.

\begin{proposition}
\label{prop:determinant}
Let $R=\Z$ or $R=\F_q[T]$. 
Let $d,t \in R$ such that $d$ is not invertible in~$R$ and $\gcd(d,t)=1$.
Let $\overline{R}= R/(d)$ and $K$ be a knot.
Then the following statements are equivalent:\\
(1) A knot diagram of $K$ is Fox $(\overline{R},t)$-colorable;\\
(2) $\gcd(d,\Delta_K(t))\not=1$ in $R$;\\
(3) $\Delta_K(t)=0$ in $\overline{R}$ or  $\Delta_K(t)$ is a zero-divisor of $\overline{R}$.
\end{proposition}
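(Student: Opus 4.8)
The plan is to prove the chain of equivalences $(1)\Leftrightarrow(2)\Leftrightarrow(3)$, exploiting that $R=\Z$ or $R=\F_q[T]$ is a principal ideal domain, that localizing at $t$ is harmless since $\gcd(d,t)=1$, and that the module of colorings is controlled by the invariant factors of $M(t)$ as recorded in Propositions~\ref{p-elem-ideals-Alex-pol}, \ref{p-number-color-Z}, and~\ref{p-number-color-prime}.

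First I would set up the framework: put $R'=R_t$ (the localization of $R$ at the powers of $t$), so that $R'/(d)\cong R/(d)=\overline{R}$ because $t$ is a unit modulo $d$, and $\overline{t}$ is invertible in $\overline{R}$. Since $R'$ is a PID, Corollary~\ref{c-struct-pid} gives invariant factors $(d_1)\subseteq(d_2)\subseteq\cdots\subseteq(d_l)$ of $M(t)$ with $d_1=0$ and $\Delta_K(t)=\prod_{j=2}^l d_j$ by Proposition~\ref{p-elem-ideals-Alex-pol}. Writing $a_i=\gcd(d,d_i)$, Proposition~\ref{p-struct-ker-im} identifies the module of Fox $(\overline{R},\overline{t})$-colorings with $\bigoplus_{i=1}^l \overline{R}/(\overline{a_i})$; since $d_1=0$ we have $a_1=d$, so this direct sum always contains the summand $\overline{R}/(\overline{d})=\overline{R}$, which accounts for the trivial colorings. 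Then a diagram of $K$ is Fox $(\overline{R},t)$-colorable, i.e.\ admits a \emph{non-trivial} coloring, if and only if the module of colorings strictly exceeds that one free summand, i.e.\ if and only if $\overline{R}/(\overline{a_i})\neq 0$ for some $i\ge 2$, equivalently $a_i=\gcd(d,d_i)$ is not a unit in $R$ for some $i\ge 2$.

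Next I would convert this into a statement about $\Delta_K(t)=\prod_{j=2}^l d_j$. The key observation is: $\gcd(d,d_i)$ is a non-unit for some $i\ge 2$ if and only if $\gcd\!\big(d,\prod_{j=2}^l d_j\big)=\gcd(d,\Delta_K(t))$ is a non-unit. One direction is immediate since each $d_i\mid\prod_j d_j$. For the converse, if some prime element $\pi$ of $R$ divides both $d$ and $\Delta_K(t)=\prod_{j=2}^l d_j$, then $\pi\mid d_j$ for some $j\ge 2$ (as $R$ is a UFD), so $\pi\mid\gcd(d,d_j)$ and this gcd is a non-unit. This establishes $(1)\Leftrightarrow(2)$. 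For $(2)\Leftrightarrow(3)$: in $\overline{R}=R/(d)$ with $R$ a PID, the element $\overline{\Delta_K(t)}$ is a zero-divisor-or-zero precisely when $\gcd(d,\Delta_K(t))\neq 1$ in $R$ — indeed if $\gcd=1$ then $\Delta_K(t)$ is a unit mod $d$ (Bézout), hence neither zero nor a zero-divisor; conversely if some prime $\pi$ divides the gcd, write $d=\pi d'$ with $d'\not\equiv 0\pmod d$, and then $\overline{\Delta_K(t)}\cdot\overline{d'}=0$ in $\overline{R}$ with $\overline{d'}\neq 0$, and separately $\overline{\Delta_K(t)}=0$ exactly when $d\mid\Delta_K(t)$, covering the ``$=0$'' alternative. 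Assembling these gives the full equivalence.

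The main obstacle I anticipate is bookkeeping around the trivial-coloring summand and making the ``non-trivial coloring'' condition precise: one must be careful that the free summand $\overline{R}$ in $\bigoplus_i \overline{R}/(\overline{a_i})$ is \emph{exactly} the space of trivial colorings (the constant vectors), so that colorability is equivalent to some other $\overline{R}/(\overline{a_i})$ being nonzero, rather than to the whole module being large in some cruder sense; this uses that the all-ones vector spans the kernel of $M(t)$ restricted to constant tuples and that it corresponds to the $i=1$ factor. The rest is standard PID/UFD arithmetic. A secondary subtlety is the hypothesis ``$d$ not invertible in $R$'': this is what guarantees $\overline{R}\neq 0$ so that the notions of zero-divisor and non-trivial coloring are not vacuous, and it should be invoked explicitly when passing between $(2)$ and $(3)$.
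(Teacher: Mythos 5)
Your proposal is correct and follows essentially the same route as the paper's proof: both reduce colorability to the condition that some invariant factor $d_i$ ($i\ge 2$) of $M(t)$ shares a non-unit common divisor with $d$ (via Propositions~\ref{p-number-color-Z}, \ref{p-number-color-prime}, and~\ref{p-struct-ker-im}), then use $\Delta_K(t)=\prod_{i\ge 2}d_i$ from Proposition~\ref{p-elem-ideals-Alex-pol} and elementary PID/UFD arithmetic to get (2) and (3). Your write-up is in fact more detailed than the paper's, which simply asserts the gcd-of-product step and the equivalence with the zero-divisor condition that you justify explicitly.
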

\begin{proof}
The proof  Fox $(\F_p,-1)$-colorability for $p$ a prime is given in \cite[Chapter 3, Theorem 4]{livingston1993knot} and \cite[Proposition 2.1]{kauffman2018}. The knot $K$ is Fox $(\overline{R},t)$-colorable if and only if 
$\gcd (d,d_i)=\overline{d_i}\not=1$ for some $i$, $2 \leq i\leq n$, where
$(d_1)\subseteq (d_2)\subseteq \cdots \subseteq (d_l)$ are the invariant factors in $R$ 
by Propositions~\ref{p-number-color-Z} and~\ref{p-number-color-prime}.
But $\Delta_K(t)=\prod_{i=2}^l d_i$ by Proposition \ref{p-elem-ideals-Alex-pol}.
So $K$ is Fox $(\overline{R},t)$-colorable if and only if $\gcd(d,\Delta_K(t))\not=1$ in $R$ if and only if
$\Delta_K(t)=0$ in $\overline{R}$ or  $\Delta_K(t)$ is a zero-divisor of $\overline{R}$.
\end{proof}

Next, we will show that the trefoil knot is Fox $(R,t)$-colorable for several choices of the ring $R$ and the invertible element $t$.
\begin{example}
\label{ex:Z4F4}
We have seen that the trefoil knot is tricolorable, which is in agreement with Proposition \ref{prop:determinant}, since $\Delta_K(t) = t^2 - t + 1$ and $\Delta_K(-1) =3$. It is also $(\Z/(d),-1)$-colorable for all positive integers $d$ that are a multiple of $3$ with the colors $0, d/3, 2d/3$ assigned to the three strands.

Consider the Fox colorings for the pairs $(\Z/(4), -1)$, $(\mathbb{F}_4, \alpha)$ and $(\mathbb{F}_7, 3)$ of the trefoil knot~$K$, where $\mathbb{F}_4 = \{0, 1, \alpha, \alpha^2 \}$ and $\alpha$ is a root of the irreducible polynomial $x^2+x+1$ over~$\F_2[x]$. We find that $\Delta_K(-1) = 3 \neq 0$ over $\Z/(4)$ and $\Delta_K(\alpha) = \alpha^2 + 1 - \alpha$, which is 0 over~$\mathbb{F}_4$ and $\Delta_K(3) = 7 = 0$ over $\mathbb{F}_7$. Therefore, the trefoil knot has only trivial Fox colorings when~$(R,t) = (\Z/(4), -1)$, but it has a non-trivial Fox coloring when $(R,t) \in \{(\mathbb{F}_4, \alpha),(\mathbb{F}_7, 3)\}$.
\end{example}

\subsection{Dehn Coloring}

In this subsection, we study another way to color knot diagrams, called Dehn colorings. The \textit{Dehn coloring} is the coloring of the regions of a knot diagram. Similar to Definition~\ref{def:FoxRt}, we give the definition of a Dehn coloring as follows. 

\begin{definition}
\label{def:DehnRt}

A coloring of the regions of a knot diagram with $n$ crossings is called a \textbf{Dehn $(R,t)$-coloring} 
if the regions are colored via colors that are elements of $R$ 
and at each crossing~$c_m$ for $1 \le m \le n$ with an overstrand $x$, it holds that
\begin{equation*}
U_i - tU_j = U_k - tU_l,
\end{equation*}
where $t$ is a fixed invertible element in the ring $R$, the regions $U_i,U_j,U_k$ and $U_l$ are regions that have $c_m$ on their border in a way that $U_i$ and $U_k$ are on the left side of~$x$ and~$U_j$ and~$U_l$ are on the right side of $x$ with respect to the orientation of the diagram; see Figure~\ref{fig:Dehn_coloring}. Following the convention, the color $0$ is assigned to the unbounded outside region. 
\end{definition}

Analogous to Definition \ref{def:coloringMat}, we define the following matrix for Dehn colorings.

\begin{definition}
\label{def:DehnMat}
Let $D$ be a knot diagram with $n$ crossings. At each crossing $c_m$ for~$1 \le m \le n$ with an overstrand $x$ such that the regions $U_i,U_j,U_k$ and $U_l$ are regions that have $c_m$ on their border in a way that $U_i$ and $U_k$ are on the left side of $x$ and $U_j$ and $U_l$ are on the right side of $x$ with respect to the orientation of the diagram, the \textbf{Dehn coloring matrix}~$N(t)$ of $D$ is defined as
$$
N_{ms}(t) = 
\begin{cases}
1 & \text{if} \ \ s=i,\\
-t & \text{if} \ \ s=j,\\
-1 & \text{if} \ \ s=k,\\
t & \text{if} \ \  s=l,\\
0 & \text{otherwise},
\end{cases}
$$ for $1 \le m \le n$ and $1 \le s \le n+2$ with $m,s \in \Z_{>0}.$
\end{definition}

The matrix $N(t)$ is the one defined by Alexander \cite{alexander1928topological} as remarked after Definition \ref{def:coloringMat}.  

\begin{remark}\label{r-dehn}
Dehn \cite{dehn1910} gave a less known presentation of the fundamental group of the complement of a knot with generators $U_i$
and relations $U_1=1$ and $U_iU_j^{-1}=U_kU_l^{-1}$ for all crossings as in Figure \ref{fig:Dehn_coloring}.
The free calculus of this presentation gives the matrix $N(t)$ with the first column deleted, see \cite{kauffman1983}.
 
\end{remark}

Analogous to Proposition \ref{p-ReidemesterM}, we have the following result that is proven in \cite{alexander1928topological}.

\begin{proposition}\label{p-ReidemesterN}
If $D_1 \approx D_2$, 
then the corresponding Dehn coloring matrices~$N_1(t)$ and~$N_2(t)$ are equivalent.
\end{proposition}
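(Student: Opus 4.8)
The plan is to mirror the proof strategy of Proposition~\ref{p-ReidemesterM}, since the Dehn coloring matrix $N(T)$ plays the same structural role for region colorings that the Alexander matrix $M(T)$ plays for strand colorings. First I would recall that $D_1 \approx D_2$ means $D_1$ can be transformed into $D_2$ by a finite sequence of Reidemeister moves, so by induction it suffices to treat the case where $D_2$ is obtained from $D_1$ by a single Reidemeister move of type I, II, or III. For each move I would exhibit explicitly how the rows (indexed by crossings) and columns (indexed by regions) of $N(T)$ change, and check that the change is realized by a sequence of the elementary operations allowed in the definition of matrix equivalence (Definition~\ref{d-equiv-matrix}): row and column permutations, adding a multiple of one row/column to another, multiplying a row/column by a unit, and the stabilization operation of adjoining or deleting a row and column corresponding to a trivial relation (a $1\times 1$ identity block), which accounts for the fact that the number of crossings, and hence the size of $N(T)$, changes under moves I and II.

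The key steps, in order, would be: (i) the type~I move adds or removes a single crossing and a single region, and the new relation at that crossing expresses a new region-color as a $\mathbb{Z}[T,T^{-1}]$-combination of an existing one, so it contributes a row and column forming an invertible $1\times 1$ block after clearing — this is exactly the stabilization move; (ii) the type~II (poke) move adds or removes two crossings and one region, and the two new relations together can be reduced, by adding one to the other and permuting, to a $2\times 2$ identity block plus a zero contribution to the rest of the matrix; (iii) the type~III (side) move preserves the number of crossings and regions, merely re-routing a strand past a crossing, and its effect on $N(T)$ is a combination of adding multiples of rows to one another (coming from the fact that the region on one side of the moved strand is re-expressed using the colors around the crossing it slides over), followed by column permutations. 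Throughout I would use that the color $0$ is pinned to the unbounded region, which guarantees the columns are indexed consistently across the two diagrams.

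Since the excerpt explicitly says this result "is proven in \cite{alexander1928topological}" and is "analogous to Proposition~\ref{p-ReidemesterM}," I expect the author's proof to simply cite these references, perhaps with the remark that the argument is parallel to the Fox-coloring case. A self-contained proof is essentially a bookkeeping exercise: the conceptual content (Reidemeister moves generate diagram equivalence, and matrix equivalence is the right notion of invariance coming from free calculus of the Dehn presentation, cf.\ Remark~\ref{r-dehn}) is already in place.

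The main obstacle will be the type~III move: unlike moves I and II, it does not change the matrix size, so one cannot hide the change inside a stabilization block, and one must track carefully which regions get relabeled and verify that the net transformation of the three (or more) affected rows decomposes into admissible row operations. Getting the signs and the powers of $T$ right in the $-T$ versus $T$ entries of $N(T)$ — which depend on the left/right convention relative to the orientation of the overstrand — is the delicate point, since a type~III move can flip which side of a crossing a region sits on relative to the sliding strand. One also has to be mildly careful that the pinned region (color $0$, the unbounded face) is never one of the faces that merges or splits under a move, or if it is, that the normalization is preserved; this is why the statement is about matrix equivalence rather than literal equality.
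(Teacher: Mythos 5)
Your instinct is right that the paper does not actually prove this statement: Proposition~\ref{p-ReidemesterN} is stated with no proof environment at all, the justification being entirely the citation to \cite{alexander1928topological} in the preceding sentence (just as Proposition~\ref{p-ReidemesterM} is disposed of by citing Livingston). So your direct verification over the three Reidemeister moves is a genuinely more self-contained route than anything in the paper, and it is the standard and correct strategy: reduce by induction to a single move and check that the change in $N(T)$ is realized by the operations of Definition~\ref{d-equiv-matrix}.

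Two points in your sketch need repair before it would go through. First, your region counts are off: by Lemma~\ref{lem:knot_diag} a diagram with $n$ crossings has $n+2$ regions, so a type~II move adds \emph{two} regions (the bigon and a second face created by the splitting), not one; this actually makes your argument cleaner, since two new rows and two new columns reduce to two pivot blocks, whereas two rows against one column would leave you needing to also delete a dependent row. For type~I you should likewise note that at a kink two of the four regions in the relation $U_i - tU_j = U_k - tU_l$ coincide, so the new row degenerates before you can clear the new column. Second, your concern about the pinned unbounded region is misplaced for this particular statement: the matrix $N(T)$ of Definition~\ref{def:DehnMat} has a column for every region, including the unbounded one, and the normalization $U=0$ on the outer face belongs to the coloring (Definition~\ref{def:DehnRt}), not to the matrix. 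Matrix equivalence is asserted for the full, unnormalized $N(T)$, so no consistency of the pinning across the move needs to be tracked. With those corrections, and the admittedly tedious sign-and-$T$-power bookkeeping for the type~III case that you correctly identify as the delicate step, your outline is a valid proof.
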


Similar to the module of Fox $(R,t)$-colorings of Definition \ref{def:Foxmodule}, one can define the module of Dehn $(R,t)$-colorings.

\begin{definition}

The \textbf{module of Dehn $(R,t)$-colorings} of the knot diagram is given by the kernel of the matrix $N(t)$, 
that is the $R$-module of all $ x\in R^{n+2}$  such that $N(t)x^T=0.$
\end{definition}

Fox and Dehn colorings can be obtained from each other. 
The following proposition is a generalization of the relation between Fox and Dehn colorings. We slightly abuse notation: the color of a region $U$ will also be denoted by $U$. Similarly, the color of a strand $x$ is also denoted by $x$.

\begin{proposition}
\label{prop:fox2dehn}
Let $D$ be an oriented knot diagram with $m$ regions and $n$ strands. 
Consider the map $\varphi : R^m \rightarrow R^n$ such that  $\varphi (U)=x$ 
gives the colors of the strands $x$ for a given coloring $U$ of the regions
such that $x_r =U_i-tU_j$ is the color of the stand $x_r$ where~$U_i$ and $U_j$ are the colors of the regions next to the strand~$x_r$, with $U_i$ on the left side of $x_r$  and $U_j$ on the right side of $x_r$. 
Then this map is a well-defined morphism of $R$-modules when restricted to the module of Dehn $(R,t)$-colorings, and Dehn colorings are mapped to Fox  $(R,t)$-colorings. 
Furthermore $\varphi$ is surjective onto the module of Fox  $(R,t)$-colorings, its kernel is isomorphic to $R$, 
and it is an isomorphism when $\varphi$ is restricted to the submodule of Dehn $(R,t)$-colorings where a fixed region gets the value $0$.
\end{proposition}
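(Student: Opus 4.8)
The plan is to realise $\varphi$ geometrically, working with the \emph{region graph} of the diagram (one vertex per region, one edge per strand segment, joining the two regions that segment separates), and to reduce every assertion to a local computation at a crossing together with a planarity argument for path-independence; throughout I use that this region graph is the planar dual of the $4$-valent graph of the diagram, so that its faces correspond to the crossings and its cycle space is generated by the boundary loops of those faces. For well-definedness and $R$-linearity: given $U\in\ker N(t)$ and a strand $x_r$, the two regions flanking $x_r$ can change only at a crossing at which $x_r$ is the overstrand, and there the Dehn relation $U_i-tU_j=U_k-tU_l$ of Definition~\ref{def:DehnMat} says exactly that the quantity $U_{\mathrm{left}}-tU_{\mathrm{right}}$ read from the two sides of $x_r$ is unchanged; hence this quantity, which is the recipe defining $\varphi(U)_{x_r}$, is well defined on each strand, and $\varphi$ is $R$-linear since each of its coordinates is an $R$-linear form in the region colours. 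The same observation runs backwards: a region colouring whose associated strand colours agree on every segment automatically satisfies all the Dehn relations, hence lies in $\ker N(t)$.

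That Dehn colourings map to Fox colourings is a purely local check at a crossing $c_m$ with overstrand $x$: reading off from Figures~\ref{fig:Fox_coloring}--\ref{fig:Dehn_coloring} the four regions $U_i,U_j,U_k,U_l$ around $c_m$ and the two understrand colours, one writes $b=\varphi(U)_x$, $a$ and $c$ each as a difference $U_\ast-tU_\ast$; substituting into the Fox rule $c=ta+(1-t)b$ and expanding, the identity collapses to $U_i-tU_j=U_k-tU_l$, which holds by hypothesis, and the remaining orientation configurations of the two arcs are handled identically. For surjectivity, given a Fox colouring $x$ with $M(t)x^{\top}=0$ I would build $U$ by assigning $0$ to the unbounded region and propagating across the diagram: crossing a strand $x_r$ from its right side to its left side sends the current value $u$ to $x_r+tu$, with the inverse rule in the opposite direction. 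Consistency is equivalent to triviality of the monodromy around every face of the region graph, i.e.\ around every crossing; that monodromy is affine, its linear part is a product of four factors $t^{\pm1}$ which cancel in pairs (each arc is crossed twice, from opposite sides), and its constant term vanishes exactly because the Fox relation at that crossing holds — the same local computation as before. The resulting $U$ then satisfies $U_{\mathrm{left}}-tU_{\mathrm{right}}=x_r$ on every segment of every strand, so $\varphi(U)=x$, and by the criterion above $U\in\ker N(t)$.

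For the kernel: a Dehn colouring $U$ lies in $\ker\varphi$ precisely when $U_{\mathrm{left}}=tU_{\mathrm{right}}$ across every strand, and the same propagation — now just multiplication by $t^{\pm1}$ at each crossing, started by fixing one region to $\lambda$ — is again consistent (the monodromy is now trivial for the same reason) and yields $U^{(\lambda)}\in\ker N(t)\cap\ker\varphi$; since any such colouring is determined by its value on the fixed region, $\lambda\mapsto U^{(\lambda)}$ is an $R$-linear bijection onto $\ker(\varphi|_{\ker N(t)})$, which is therefore isomorphic to $R$. Restricting $\varphi$ to the submodule $V$ of Dehn colourings with the chosen region equal to $0$ is then an isomorphism onto the Fox module: it is injective because $\ker(\varphi|_{\ker N(t)})\cap V=\{U^{(0)}\}=\{0\}$, and surjective because any preimage $U$ produced above may be replaced by $U-U^{(\mu)}\in V$, where $\mu$ is the value of $U$ on the chosen region.

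The step I expect to be the main obstacle is the case bookkeeping inside the two local computations: one must verify that for each orientation of the two arcs at a crossing and each admissible labelling of the four surrounding regions by $U_i,U_j,U_k,U_l$ the map $\varphi$ genuinely converts the Dehn relation into the Fox relation (and dually that the monodromy constant term is the Fox relation). The planarity input — that a propagation consistent around each individual crossing is globally consistent — is standard (face boundaries span the cycle space of a connected plane graph) but should be stated explicitly rather than assumed.
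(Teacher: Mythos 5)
Your argument is correct, and it is essentially a fully worked-out version of the standard proof that the paper itself does not spell out: the paper's ``proof'' is only the one-line remark that the statement is a straightforward generalization of the $t=-1$ arguments in the cited references of Carter et al.\ and Traldi, which proceed exactly as you do (local verification at each crossing that the Dehn relation is equivalent to constancy of $U_{\mathrm{left}}-tU_{\mathrm{right}}$ along a strand and to the Fox relation, plus propagation across regions with trivial monodromy around each crossing, using that face boundaries generate the cycle space of the connected plane region graph). The only point worth tightening is the parenthetical justification of the trivial linear monodromy: a small loop around a crossing meets the over-arc twice and the under-arc twice, once from each side in both cases, so the four factors $t^{\pm1}$ cancel --- but note that the two under-crossings involve two \emph{different strands} of the diagram (the incoming and outgoing understrands), and it is the fact that they are met in opposite senses, not that a single strand is met twice, that makes those two factors cancel.
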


\begin{proof}
It is a straightforward generalization of the proofs given in \cite{carter2014three,traldi2017} for $t=-1$.
\end{proof}

In Figure \ref{fig:foxdehn} an example of a Fox $(\F_5,-1)$-coloring and Dehn $(\F_5,-1)$-coloring that are constructed via these steps can be found with the value $0$ for the outside region. 

\begin{figure}[H]
    \centering
    \includegraphics[width=0.5\textwidth]{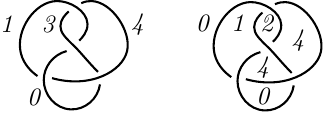}
    \caption{Fox $(\F_5,-1)$-coloring and Dehn $(\F_5,-1)$-coloring of the figure-eight knot.}
    \label{fig:foxdehn}
\end{figure}

\begin{remark}\label{r-index}Consider the diagram of an oriented knot. 
The \textbf{index} of a region is an integer and defined in \cite[pp. 277]{alexander1928topological} 
such that the index of a given region is chosen at random, and the indices of the remaining regions 
are uniquely defined by the property that if a region with index $e$ is on the left of a strand, 
then the region on the right of that strand has index $e-1$.
\end{remark}

\begin{definition}\label{def:checkerboard}
The \textbf{checkerboard coloring} of a knot diagram is a coloring of the regions with two colors (white and black), in such a way that 
the regions with even index are colored white and the regions with odd index are colored black.
\end{definition}

\begin{remark}\label{rm:checkerboard}
In a checkerboard coloring of a knot diagram the two regions adjacent to a strand have distinct colors. 
So at each crossing, two regions have the same color if and only if they are not adjacent.
Notice that the checkerboard coloring does not depend on the orientation of the knot, 
but it does depend on the random value of of the index  of the given region in Definition \ref{r-index} 
in such a way that the colors black and white are interchanged when the random value is changed from even to odd and vice versa.
So a knot diagram has two checkerboard colorings which can be obtained from each other by interchanging the colors white and black.
\end{remark}

\begin{remark}\label{r-trivial-Dehn-col}
Let $x$ be the trivial Fox $(R,t)$-coloring where all the strands have the same color. Then for a fixed region $U_1$ with a fixed color, there exists Dehn $(R,t)$-coloring $U$ such that $\varphi (U)=x$ by Proposition \ref{prop:fox2dehn}. In this way we get all the so called \textbf{trivial} Dehn $(R,t)$-colorings which constitute a free summand $R^2$ of the module of Dehn $(R,t)$-colorings.
In particular, if $t\not=1$, $U_1=1$ and $x=0$, then all the regions $U_i$ have color $t^{e_i}$ where $e_i$ is the index of region $U_i$.
If $t=-1$, then the \textbf{trivial} Dehn $(R,t)$-colorings are such that all white regions have the same color and all the black regions have the same color.
A knot diagram is called \textbf{Dehn $(R,t)$-colorable} if it has a non-trivial coloring. Note that with these steps, trivial Fox colorings will transform into trivial Dehn colorings and the other way around, as well. Hence a knot diagram is Dehn $(R,t)$-colorable if and only it is Fox $(R,t)$-colorable.
\end{remark}

The next result allows us to compare some properties of the Alexander matrix and the Dehn coloring matrix of the same knot diagram.

\begin{proposition}\label{p-elem-id-dehn}
Let $N(t)$ be the Dehn $(R,t)$-coloring matrix of a knot diagram of a knot $K$,  
then $E_1(N(t))=0$ and $E_2(N(t))$ is a principal ideal generated by $\Delta_K(t)$.
\end{proposition}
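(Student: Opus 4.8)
The plan is to exploit the morphism $\varphi$ of Proposition~\ref{prop:fox2dehn} to transfer the known information about the elementary ideals of the Alexander matrix $M(T)$ (Propositions~\ref{p-pincipalM} and~\ref{p-det=pm1}) to the Dehn matrix $N(T)$. Recall that a knot diagram with $n$ crossings has $n$ strands and $n+2$ regions, so $N(T)$ is an $n \times (n+2)$ matrix over $\Z[T,T^{-1}]$. The first observation is that the columns of $N(T)$ sum to zero: at each crossing the four incident region-labels contribute $1 + (-T) + (-1) + T = 0$, exactly as the row-sum argument for $M(T)$. This forces a rank drop, but here we must locate \emph{two} dependencies, not one, because $N(T)$ has two more columns than rows.

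First I would make precise the relationship between $\varphi$ and the two matrices. Proposition~\ref{prop:fox2dehn} says $\varphi \colon R^{n+2} \to R^n$ sends the Dehn-coloring module onto the Fox-coloring module, with kernel isomorphic to $R$; working over $R = \Z[T,T^{-1}]$ with $t = T$, the map $\varphi$ is represented by an $n \times (n+2)$ matrix $\Phi(T)$, and the defining relations give the factorization $M(T)\,\Phi(T) = c\cdot N(T)$ for a unit $c = \pm T^s$ (or, after fixing conventions, $N(T) = M(T)\Phi(T)$ up to reindexing the strand/region incidence). Concretely, the Dehn relation $U_i - TU_j = U_k - TU_l$ around a crossing is exactly the Fox relation written in the substituted variables $x_r = U_{i(r)} - TU_{j(r)}$, which is the content of ``Dehn colorings map to Fox colorings.'' From $N(T) = M(T)\Phi(T)$ one gets the containment $E_k(N(T)) \subseteq E_{k}(M(T))$ appropriately shifted; combined with surjectivity of $\varphi$ one gets the reverse containment for the relevant ideals. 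I expect the cleanest route is: (i) the $n+2$ columns of $N(T)$ span (over the field of fractions) the same space cut out by two independent linear relations — the all-region relation $\sum U_i$-type relation coming from column sums, plus the ``$U_1 = 0$'' normalization that kills the free $R$-summand in Proposition~\ref{prop:fox2dehn} — hence $\operatorname{rank} N(T) \le n$ drops by exactly the same amount as for $M(T)$, giving $\operatorname{rank} N(T) = n-1$ generically. This yields $E_n(N(T)) = 0$ and in fact $E_{n-1}(N(T)) = 0$ once we account for the two-dimensional null behavior — but one must be careful with the indexing convention for $E_k$ of a non-square matrix: for an $n\times(n+2)$ matrix, $E_0$ is the ideal of $n\times n$ minors, so ``$E_1(N(T)) = 0$'' in the statement refers to the ideal of $(n-1)\times(n-1)$ minors after the standard shift, and ``$E_2(N(T))$ principal generated by $\Delta_K(T)$'' to the ideal of $(n-2)\times(n-2)$ minors. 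I would pin down this convention from Definition~\ref{d-elem-id} in the appendix first.

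The key computation is then: deleting the column of $N(T)$ corresponding to the outside region (set to $0$) and one more column corresponding to a region where we may normalize, we obtain an $n \times n$ matrix $N'(T)$ whose rows still sum to a dependency, so $\det N'(T) = 0$; deleting instead one row and suitable columns, the Dehn-to-Fox correspondence identifies an $(n-1)\times(n-1)$ minor of $N(T)$ with $\pm T^s \det(M_{ij}^*(T))$, which by Proposition~\ref{p-pincipalM} generates $E_1(M(T)) = (\Delta_K(T))$. One shows every such minor of the appropriate size is, up to a unit $\pm T^s$, equal to $\Delta_K(T)$ (using that the extra two columns are $R$-linear combinations of the others via $\Phi$, so they contribute nothing new to the ideal), giving that the relevant elementary ideal is principal, generated by $\Delta_K(T)$; and the next-larger minors vanish because of the second dependency. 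The vanishing of $E_1(N(T))$ (in the paper's indexing) follows from the existence of the two independent syzygies exhibited above.

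The main obstacle is bookkeeping with the elementary-ideal indexing for a non-square matrix and, relatedly, keeping track of the unit factors $\pm T^s$ and the exact form of $\Phi(T)$ — i.e., showing that adjoining the two extra region-columns genuinely does not enlarge the ideal and does lower the rank by the claimed amount. The conceptual content (rows/columns sum to zero $\Rightarrow$ rank drop; $\varphi$ surjective with kernel $R$ $\Rightarrow$ Dehn and Fox modules differ by a free summand $\Rightarrow$ their presentation matrices have the same Fitting ideals up to shift) is straightforward; the risk is a sign or index error. I would therefore organize the proof as: (1) fix conventions for $E_k$ of an $n\times(n+2)$ matrix; (2) prove $\operatorname{rank}_{\Q(T)} N(T) = n-1$ via the two explicit syzygies; (3) use $N(T) = M(T)\Phi(T)$ together with surjectivity of $\varphi$ to show the $(n-1)$-minor ideal of $N(T)$ equals $E_1(M(T)) = (\Delta_K(T))$ up to units; (4) conclude $E_1(N(T)) = 0$ and $E_2(N(T)) = (\Delta_K(T))$ in the paper's numbering. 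A clean fallback, if the matrix manipulation gets unwieldy, is to cite Alexander's original treatment \cite{alexander1928topological}, where the matrix $N(T)$ and these very ideals are studied, and reference Proposition~\ref{p-ReidemesterN} for invariance.
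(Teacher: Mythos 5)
Your central device---a factorization $N(T)=M(T)\,\Phi(T)$ (up to units) induced by the map $\varphi$ of Proposition~\ref{prop:fox2dehn}---does not hold, and if it did it would prove the wrong statement. Since $M(T)$ is square, every $n\times n$ submatrix of $M(T)\Phi(T)$ is $M(T)$ times the corresponding $n\times n$ submatrix of $\Phi(T)$, so all its $n\times n$ minors are multiples of $\det(M(T))=0$. But by Definition~\ref{d-elem-id}, for the $n\times(n+2)$ matrix $N(T)$ the ideal $E_2(N(T))$ is precisely the ideal of $n\times n$ minors, which the proposition asserts equals $(\Delta_K(T))\neq 0$; a factorization through $M(T)$ would force $E_2(N(T))=0$. (One can also check on the trefoil that the rows of $N(T)$ do not sum to zero, while the rows of any product $M(T)\Phi(T)$ would, since the rows of $M(T)$ do.) The Dehn relations are not the pullback of the Fox relations along $\varphi$; they express the well-definedness of $\varphi$ across overcrossings, so no such factorization is available. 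Relatedly, you resolve the indexing convention incorrectly: $E_1(N(T))$ is the ideal of $(n+1)\times(n+1)$ minors (zero because $N(T)$ has only $n$ rows, or because it is equivalent to a matrix with two zero columns), and $E_2(N(T))$ is the ideal of $n\times n$ minors---not $(n-1)\times(n-1)$ and $(n-2)\times(n-2)$ as you state. Consequently your intermediate goals are also off: the rank of $N(T)$ over the fraction field is $n$, not $n-1$ (otherwise $E_2$ would again vanish), and the ``$U_1=0$ normalization'' is not a syzygy; the second column dependency, besides the all-ones vector, is the index vector $(T^{e_s})_s$ of Remark~\ref{r-trivial-Dehn-col}.

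The paper's proof takes a route that your fallback only gestures at. Using Alexander's indices, one performs column operations making the two columns of regions with consecutive indices $e$ and $e+1$ zero; Proposition~\ref{p-free-matrix} then yields $E_1(N(T))=0$ and $E_2(N(T))=E_0(N_0(T))=(\det N_0(T))$, where $N_0(T)$ is obtained by deleting those two columns. The identification of the generator with $\Delta_K(T)$ is not done by comparing minors directly: one observes that $N(T)$ with one column deleted is the free-calculus matrix of Dehn's presentation of the knot group (Remark~\ref{r-dehn}), while $M(T)$ arises from another presentation of the same group (Remark~\ref{r-alex}), and matrices coming from different presentations of the same group are equivalent and hence have equal elementary ideals. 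To repair your argument you would need to replace the false factorization with this presentation-theoretic equivalence (or reproduce Alexander's explicit column reduction); the map $\varphi$ alone cannot carry the weight you place on it.
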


\begin{proof}Choose two columns that correspond to two regions that have index $e$ and $e+1$ for some $e$. See Remark \ref{r-trivial-Dehn-col}.
Let $N''(t)$ be the $(n+2) \times n$ matrix  that is obtained from $N(t)$ by replacing the two chosen columns by zero columns.
Let $N_0(t)$ be the $n \times n$ matrix  that is obtained from $N(t)$ by deleting the two chosen columns.
The matrix $N(t)$ is equivalent to the matrix $N''(t)$. See \cite[pp. 280]{alexander1928topological}. 
So $E_1(N(t))=0$ and~$E_2(N(t))=E_2(N''(t))=E_0(N_0(t))$ by Proposition \ref{p-free-matrix}, which is a principal ideal generated by $\det (N_0(t))$.

Let $N_1(t)$ be the $(n+1) \times n$ matrix  that is obtained from $N(t)$ by deleting the chosen column of index $e$. In order to show that $E_2(N(t))$ is generated by $\Delta_K(t)$ we need to refer 
to the fact that the matrix $N_1(t)$ is obtained by the free calculus of 
the Dehn representation of the fundamental group of the complement of the knot by Remark \ref{r-dehn}, 
and $M(t)$ is the Alexander matrix obtained by the free calculus of 
another representation of the same fundamental group by Remark \ref{r-alex}. 
Therefore, these matrices are equivalent and have the same elementary ideals. 
See \cite[Chapter VII (4.5)]{crowell2012introduction}.
\end{proof}

We conclude this subsection with an example verifying that $E_1(M(t))$ and $E_2(N(t))$ are both generated by $\Delta_K(t)$.

\begin{example}
Consider the diagram of the trefoil knot whose Fox coloring matrix is given in Example \ref{ex:alexander}, and its Alexander polynomial is computed as $\Delta_K(t) = t^2-t+1$. Its Dehn coloring matrix is 
\begin{equation*}
N(t) = \begin{pmatrix}
1& -t & -1 & t & 0 \\
1 & -1 & 0 & t & -t \\
1 & 0 & -t & t & -1
\end{pmatrix}.
\end{equation*}

Let $N_{ij}^*(t)$ be the matrix obtained by deleting the $i$-th and the $j$-th column of $N(t)$ for~$i \neq j$. We have $M_{ij}^*(t) \in \{\pm (t^2 -t +1)\}$ and~$N_{ij}^*(t) \in \{ 0, \pm (t^3-t^2+t), \pm (t^2-t+1), t^3+1\}$. One can check that they are both generated by $\Delta_K(t) = t^2-t+1$, since~$t^3+1 = (t^2-t+1)(t+1).$
\end{example}

\subsection{Alexander-Briggs Coloring}
In this subsection, we study a third way to color knot diagrams, called Alexander-Briggs (AB) colorings. The \textit{AB coloring} is the coloring of the vertices of the Tait diagrams.
\begin{figure}[h!]
    \centering
    
\begin{tikzpicture}[pics/arrow/.style={code={%
  \draw[line width=0pt,{Computer Modern Rightarrow[line
  width=1pt,width=3ex,length=2ex]}-] (0.5ex,0) -- (-0.5ex,0);
  }},pics/rrarrow/.style={code={%
  \draw[line width=0pt,{Computer Modern Rightarrow[line
  width=1pt,width=3ex,length=2ex]}-] (-0.5ex,0) -- (0.5ex,0);
  }}]

\fill[orange!30] (0,2) -- (1,3) -- (-1,3)  -- cycle;
\fill[orange!30] (-7/3,-1/3) -- (-3,1) -- (-3,-1)  -- cycle;

\fill[orange!30] (-3.5,-2) -- (-3/2,-2) -- (-1,-3)  -- cycle;
\fill[orange!30] (0,-3) -- (3/4,-2) -- (3,-2) -- cycle;

\fill[orange!30] (15/7,-1/7) -- (3,1) -- (3,-1)  -- cycle;
\fill[orange!30] (0,2) -- (-7/3,-1/3) -- (-3/2,-2) -- (3/4,-2) -- (15/7,-1/7) -- cycle;

\begin{knot}[
clip width=0,
%flip crossing=1,
]
\strand[black,thick] (-3,-1) to pic[pos=0.45,sloped]{arrow} (1,3);
\strand[black,thick] (-1,3) to pic[pos=0.55,sloped]{arrow} (3,-1);
\strand[black,thick] (3,1) to pic[pos=0.5,sloped]{arrow} (0,-3);
\strand[black,thick] (3,-2) to pic[pos=0.5,sloped]{arrow} (-3.5,-2);
\strand[black,thick] (-1,-3) to pic[pos=0.5,sloped]{rrarrow} (-3,1);
\end{knot}

\tikzset{nnode/.style = {shape=circle,fill=myg,draw,inner sep=0pt,minimum
size=1.9em}}
\node[text=blue] at (-2.25,0.1) {$v_1$};
\node[text=blue] at (0.4,2) {$v_2$};
\node[text=blue] at (2.2,-0.6) {$v_3$};
\node[text=blue] at (0.9,-2.3) {$v_4$};
\node[text=blue] at (-1.7,-2.3) {$v_5$};

\node[text=red] at (-0.1,-0.3) {$U_{11}$};
\node[text=red] at (0,2.7) {$U_1$};
\node[text=red] at (1.5,1.5) {$U_2$};
\node[text=red] at (2.77,0) {$U_3$};
\node[text=red] at (2,-1.4) {$U_4$};
\node[text=red] at (1.5,-2.25) {$U_5$};
\node[text=red] at (-0.5,-2.5) {$U_6$};
\node[text=red] at (-2.2,-2.3) {$U_7$};
\node[text=red] at (-2.5,-1.5) {$U_8$};
\node[text=red] at (-2.77,-0) {$U_9$};
\node[text=red] at (-1.5,1.5) {$U_{10}$};

\node[text=black] at (0,2.3) {$\bullet$};
\node[text=black] at (-0.3,2) {$\bullet$};
\node[text=black] at (2.4,-0.15) {$\bullet$};
\node[text=black] at (2.1,0.15) {$\bullet$};
\node[text=black] at (0.4,-2.2) {$\bullet$};
\node[text=black] at (0.7,-1.8) {$\bullet$};
\node[text=black] at (-1.8,-1.8) {$\bullet$};
\node[text=black] at (-1.4,-1.8) {$\bullet$};
\node[text=black] at (-2.6,-0.3) {$\bullet$};
\node[text=black] at (-2.4,-0.6) {$\bullet$};

\end{tikzpicture}
    \caption{Tait diagram of a knot with a checkerboard coloring.}
    \label{fig:Tait}
\end{figure}
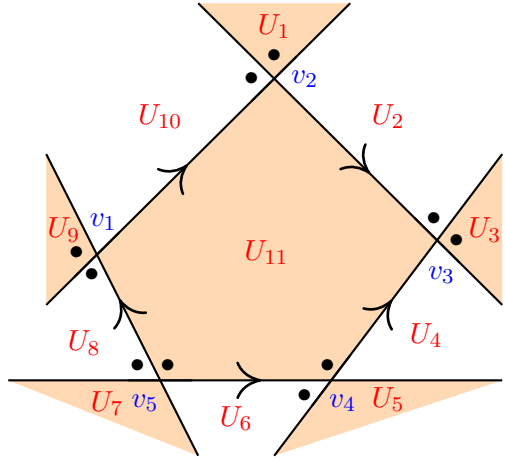

\begin{definition}
\label{def:Tait}
The \textbf{Tait diagram} of a knot is a diagram of that knot with a chosen orientation, with two additional dots at every crossing beside the left 
hand side of an overstrand such that one is placed just before and the other is placed just after the understrand, 
together with a chosen checkerboard coloring of the regions, see Figure \ref{fig:Tait}.
\end{definition}

The ``vertices" in knot diagrams are changed into ``crossings" in Tait diagrams which fits with the subsequent part of the paper where on the graph of a Tait diagram we have vertices and edges that are in fact the regions and the crossings, respectively, of the Tait diagram.

\begin{definition}
\label{def:AB-coloring}
Consider a Tait diagram of a knot. Define
$$
\wt (U) = \sum_{v \in \partial U} \wt (U,v) v
$$
where $U$ is a region of the diagram and $v$ is a vertex in the boundary $\partial U$ of $U$ and
$$
\wt (U,v) = 
\begin{cases}
t & \text{if there is a dot in } U \textnormal{ near } v,\\
1 & \text{otherwise}. 
\end{cases}
$$
is defined as the \textbf{weight} of $U$ at $v$. An \textbf{Alexander-Briggs (AB) $(R,t)$-coloring} is a coloring of the crossings with colors that are elements of $R$ 
in such a way that $\wt(U) = 0$ for all the regions $U$ of the the Tait diagram. A Tait diagram is called \textbf{Alexander-Briggs $(R,t)$-colorable} if it has a nonzero Alexander-Briggs $(R,t)$-coloring.  
\end{definition}

As an example, for an Alexander-Briggs $(R,t)$-coloring of the diagram in Figure \ref{fig:Tait} it is necessary that  $\wt(U_{11}) = 0$, that is,
$$
v_1 + v_2 + v_3 + tv_4 + tv_5 = 0,
$$
according to Definition \ref{def:AB-coloring}.

Consider a knot diagram with $n$ crossings. By Lemma \ref{lem:knot_diag}, we let $v_1,v_2, \ldots ,v_n$ be an enumeration of the crossings, and $x_1,x_2, \ldots ,x_n$ an enumeration of the strands and $U_1,U_2, \ldots , U_{n+2}$ an enumeration of the regions in the sequel. With a slight abuse of notation, we denote the colors assigned to these crossings, strands or regions with the same notation as their enumeration. Analogous to Definition \ref{def:DehnMat}, we define the following matrix for AB colorings.

\begin{definition}
\label{def:ABmatrix}
The \textbf{Alexander-Briggs $(R,t)$-coloring matrix} $P(t)$ of a Tait diagram with $n$ crossings is defined by $P_{rs}(t) = \wt(U_r,v_s)$ for $1 \le r \le n+2$ and $1 \le s \le n$ with $r,s \in \Z_{>0}.$
\end{definition}

Consider the morphism $R^n \rightarrow R^{n+2}$ of $R$-modules
given by the matrix $P(t)$. The \textbf{module of Alexander-Briggs $(R,t)$ colorings} of the Tait diagram is given by the kernel of this morphism, that is the solution space of the set of equations:
$$
\sum_{s=1}^n \wt(U_r,v_s) v_s \textnormal{ for } r=1,2,\ldots , n+2.
$$

\begin{proposition}
\label{p-PN}
Let $D_{\pm}$ be the $(n+2)\times (n+2)$ diagonal matrix with $1$ at entry $(i,i)$ if the region $U_i$ is white, and $-1$ if the region $U_i$ is black. Then $P(t)^T = N(t)D_{\pm}$ where the matrices $N(t)$ and $P(t)$ are as in \ref{def:DehnMat} and \ref{def:ABmatrix}, respectively.
\end{proposition}

\begin{proof}
The definition of $N(t)$ and $P(t)$ are such that the entries of $N_{ij}(t)$ and $P_{ji}(t)$ are the same up to a sign, 
and this sign is $+1$ if the region $U_j$ is white and $-1$ if the region $U_j$ is black.
\end{proof}

The next result shows that the module of AB colorings is invariant under Reidemeister moves.

\begin{corollary}
\label{cor:ReidemesterP}
If $D_1$ and $D_2$ are two equivalent Tait diagrams of knots, 
then the corresponding Alexander-Briggs $(R,t)$-coloring matrices~$P_1(t)$ and~$P_2(t)$ are equivalent.
\end{corollary}
\begin{proof}
This is a direct consequence of Propositions \ref{p-ReidemesterN} and \ref{p-PN}.
\end{proof}

\begin{corollary}
\label{cor:DAB}
Let $R$ be a field and $t$ a nonzero element of $R$. Then the dimension of the space of Dehn $(R,t)$-colorings is $2$ more than the dimension of the space of Alexander-Briggs $(R,t)$-colorings.
\end{corollary}

\begin{proof}
Let $D_{\pm}$ be as in \ref{p-PN} of size $n+2$. The ranks of $P(t)$ and $N(t)$ are the same by Proposition \ref{p-PN} since $D_{\pm}$ is an invertible matrix. The dimension of the module of Dehn $(R,t)$-colorings is equal to $n+2- rank(P(t))$.
The dimension of the module of Alexander-Briggs $(R,t)$-colorings is equal to $n- rank(N(t))$. 
\end{proof}

We conclude the section with a key remark that connects the three notions of colorability, showing that a Tait diagram is Alexander-Briggs $(R,t)$-colorable if and only if it is Dehn $(R,t)$-colorable if and only if it is Fox $(R,t)$-colorable. 

\begin{remark}\label{r-morphDtoAB}
One can generalize \cite[Theorem 3.1]{carter2014three} to show that 
there is a surjective morphism from the module of Dehn $(R,t)$-colorings to the module of Alexander-Briggs $(R,t)$-colorings 
that has as kernel a free $R$-module of rank $2$ consisting of the trivial Dehn $(R,t)$-colorings.
Hence a Tait diagram is Alexander-Briggs $(R,t)$-colorable if and only if it is  Dehn  $(R,t)$-colorable.
We saw already in Remark \ref{r-trivial-Dehn-col} that  a diagram is Fox $(R,t)$-colorable if and only if it is  Dehn  $(R,t)$-colorable.
Hence the three notions of colorability of a diagram coincide.
\end{remark}

\section{Codes from Knots and Their Properties}
\label{sec:codes}
\label{sec:3}

This section explains how 
one can construct a code
starting from 
a knot with its diagram and coloring. 
We also establish a series of results illustrating how the properties of knots determine those of codes via the said constructions.
We essentially regard
the three possible colorings of a knot diagram as a linear code over a finite field $\F_q$ with $q$ elements, that is $R = \F_q$. 
\begin{definition}
\label{def:knotcode}
Let $D$ be a knot diagram that is Fox $(\F_q,t)$-colored.
The \textbf{Fox code} associated with $D$ (or the \textbf{Fox knot code} of $D$) with  coloring matrix $M$ is 
$$
\mF_{D,t} = \{x \in \F_q^n \mid Mx^T = 0 \}.
$$
If $t=-1$, we denote this code by $\mF_{D}$. 
Similarly we define the \textbf{Dehn code}  and the \textbf{Alexander-Briggs code} of $D$ by
$$
\mD_{D,t} = \{x \in \F_q^n \mid Nx^T = 0 \} \ \mbox{ and } \ \mA_{D,t} = \{x \in \F_q^n \mid Px^T = 0 \}, 
$$
respectively where $N$ and $P$ are as in Definition \ref{def:DehnMat} and Definition \ref{def:ABmatrix}. The codes $\mF_{D,-1}$, $\mD_{D,-1}$ and $\mA_{D,-1}$ are abbreviated by $\mF_{D}$, $\mD_{D}$ and $\mA_{D}$, respectively.
\end{definition}

Note that the coloring matrix is interpreted as the parity check matrix of a code; see Definition \ref{def:parity}. 
A relabeling of the strands in the knot diagram will result in a possibly different code. 
Such a code is permutation equivalent to the original one 
and thus has the same dimension and minimum distance.
However, the Fox code of a knot diagram is not a knot invariant, as the following example illustrates.

\begin{example}
Let $q=19$. In Figure \ref{fig:76_flype} two diagrams of the same knot $K$ is given with~$\Delta_K(-1) = 19$. The Fox $(\F_{19},-1)$-coloring matrices of the knot diagrams depicted in Figure~\ref{fig:76_flype}, denoted by $H_a$ and $H_b$ respectively, are the parity check matrices of the corresponding knot codes.

\begin{figure}[H]
    \centering
    \includegraphics[width=\textwidth]{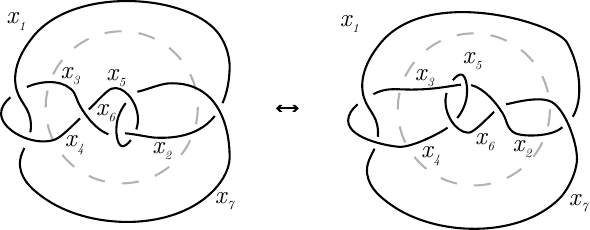}
    \caption{Two alternating diagrams of the $7_6$ knot}
    \label{fig:76_flype}
\end{figure}

We apply row operations to transform the parity check matrices in the form $\bigl[-A^T \mid I \bigr]$, for some matrix $A$ of suitable size. 
This results in the generator matrices 
$$
G_a = \begin{pmatrix}
1 & 0 & 6 & 15 & 16 & 3 & 10\\
0 & 1 & 14 & 5 & 4 & 17 & 10
\end{pmatrix}
$$
and 
$$
G_b = \begin{pmatrix}
1 & 0 & 6 & 15 & 3 & 9 & 10\\
0 & 1 & 14 & 5 & 17 & 11 & 10
\end{pmatrix}.
$$
It is easy to see that the codes generated by them are not monomial equivalent. Therefore, the Fox code of a knot diagram is not a knot invariant. 
\end{example}

In the theory of knot colorings one is interested in the (minimum) number of colors used in a coloring. This number cannot be translated in results about the weight of the coloring, that is, the number of nonzero colors. In 1999, Kauffman and Harary conjectured the following \cite{harary1999} and it was proven in 2009 by Mattman and Solis in \cite{mattman2009proof}.

\begin{theorem}
\label{thm:conjecture}
Let $D$ be a reduced, alternating knot diagram (see Definition \ref{def:reduced_alt}) of $K$ with~$|\Delta_K(-1)|=p$, where $p$ is prime. Then, every non-trivial Fox $(\F_p,-1)$-coloring of $D$ assigns different colors to different strands of the diagram.
\end{theorem}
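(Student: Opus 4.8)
The plan is to reduce the statement to a purely coding-theoretic fact about the structure of a non-trivial Fox $(\F_p,-1)$-coloring, using the hypothesis that $D$ is reduced and alternating together with Theorem~\ref{thm:conjecture}'s arithmetic constraint $|\Delta_K(-1)|=p$. First I would observe that if two distinct strands $x_j$ and $x_k$ received the same color in some non-trivial coloring $x$, then the coloring $x$ would still lie in the kernel of the coloring matrix $M(-1)$ after identifying those two columns; equivalently, one obtains a non-trivial $(\F_p,-1)$-coloring of a diagram with \emph{fewer} crossings, obtained by some sequence of simplifications. The goal is to show this forces $p \mid \det$ of a strictly smaller sub-knot, contradicting that $p$ is prime and that it is exactly the determinant of the reduced diagram $D$.

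The key technical input is the interplay between the rank of $M(-1)$ over $\F_p$ and the structure of the elementary ideals. By Propositions~\ref{p-pincipalM} and \ref{p-elem-ideals-Alex-pol}, the $(n-1)\times(n-1)$ minors of $M(-1)$ all equal $\pm\Delta_K(-1) = \pm p$, so over $\F_p$ the matrix $M(-1)$ has rank exactly $n-2$ (its rank is at most $n-1$ since the rows sum to zero, and it is at least $n-2$ because not all $(n-1)$-minors vanish mod $p^2$ — here one uses that $p$ is prime, so $p \| p$). Hence the code $\mC_{D}$ has dimension exactly $2$: it is spanned by the all-ones vector (the trivial colorings) and one further vector $v$. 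Every non-trivial coloring is therefore of the form $\lambda \mathbf{1} + \mu v$ with $\mu \neq 0$, and two strands get the same color in $x$ if and only if the corresponding two coordinates of $v$ are equal. So the theorem is equivalent to: \emph{the distinguished generator $v$ of the $2$-dimensional code $\mC_D$ has pairwise distinct coordinates}.

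To prove the coordinates of $v$ are distinct I would argue by contradiction and exploit the alternating, reduced hypotheses geometrically. Suppose $v_j = v_k$ for adjacent strands $j,k$ meeting at a crossing $c_i$; the Fox relation at $c_i$ then forces the third strand at that crossing to share the color too, and one can propagate this along the diagram. The cleanest route is Mattman–Solis's: encode the coloring on the Tait graph (checkerboard graph) of the alternating diagram, where the coloring becomes a harmonic-type function and $p = |\det|$ becomes the number of spanning trees (via the Matrix–Tree theorem applied to the reduced Goeritz/Laplacian matrix). A non-trivial coloring identifying two strands corresponds to a non-trivial coloring supported on a proper ``cut'' of the graph, which by the reduced hypothesis (no nugatory crossings, i.e., no cut vertex) and primality of the spanning-tree count leads to a contradiction. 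I expect the main obstacle to be exactly this graph-theoretic core — translating ``reduced alternating'' into the statement that the checkerboard graph is $2$-connected and then showing a non-trivial $\F_p$-coloring with a repeated value would make the spanning-tree count composite. This is genuinely the hard content of the Kauffman–Harary conjecture, so rather than reprove it I would cite \cite{mattman2009proof} for this step and present only the reduction above, noting for completeness the alternative proofs in the literature. The bookkeeping steps (rank count, dimension $2$, reformulation in terms of $v$) are routine given the results already established in Section~\ref{sec:2}.
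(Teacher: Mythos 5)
The paper does not actually prove this statement: it is the Kauffman--Harary conjecture, recorded with a citation to Mattman and Solis \cite{mattman2009proof}, and your proposal ultimately does the same thing --- the rank computation showing $\dim(\mC_D)=2$ and the reformulation in terms of a distinguished generator $v$ are correct but are an equivalent restatement of the theorem rather than a reduction, and the genuinely hard step (that $v$ has pairwise distinct coordinates) is deferred to the very reference the paper cites. One caveat: your sketch of that reference's method (Tait graphs, Matrix--Tree theorem, spanning-tree counts) does not describe what Mattman and Solis actually do (their argument is an induction on crossing number), so either correct or drop that sketch; as a citation-based treatment, however, your proposal matches the paper's.
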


Note that Theorem \ref{thm:conjecture} is not true if the determinant of the knot is not a prime. The alternating knot diagram $7_7$ in \cite[Figure 25]{harary1999} has non-prime determinant $21$ and has a Fox $(\F_7,-1)$-coloring with $6$ colors such that two strands have the same color.

The Kauffman-Harary conjecture of Theorem \ref{thm:conjecture} motivates the following result.

\begin{proposition}
Let $D$ be a reduced, alternating knot diagram (see Definition \ref{def:reduced_alt}) of~$K$ with $n$ strands such that $|\Delta_K(-1)|=p$, with $p$ prime. Then, the Fox knot code of $D$ is an $[n,2,n-1]_p$ code over $\F_p$.
\end{proposition}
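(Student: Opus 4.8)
The plan is to determine the three parameters of the knot code $\mC_{D,-1}$ separately, using the Kauffman--Harary theorem (Theorem~\ref{thm:conjecture}) to pin down the minimum distance. First, the length is $n$ by construction, since the parity check matrix $M$ of $\mC_{D,-1}$ is the $n\times n$ Fox coloring matrix of a diagram with $n$ crossings (Lemma~\ref{lem:knot_diag}). For the dimension, I would argue that $\dim \mC_{D,-1}=n-\rk(M)$ and compute $\rk(M)$ over $\F_p$. On one hand $\det(M)=0$ (the rows of $M$ sum to zero, as noted after Definition~\ref{def:Foxmodule}), so $\rk(M)\le n-1$; on the other hand, by Proposition~\ref{p-pincipalM} every $(n-1)$-minor of $M$ equals $\det(M^*_{ij}(-1))$ up to sign, and these generate $E_1(M(T))$ evaluated at $T=-1$, i.e.\ the ideal $(\Delta_K(-1))=(\pm p)$ in $\Z$. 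Hmm — that would make every $(n-1)$-minor divisible by $p$, hence $0$ in $\F_p$, giving $\rk(M)\le n-2$. The complementary bound $\rk(M)\ge n-2$ must then come from the structure theory: by Proposition~\ref{p-elem-ideals-Alex-pol} applied with $R=\F_p$ (a PID) and $t=-1$, the invariant factors of $M(-1)$ are $d_1=0$ and $d_2,\dots,d_l$ with $\prod_{j=2}^l d_j=\Delta_K(-1)=\pm p$ in $\F_p$. Since $p=0$ in $\F_p$, exactly one of $d_2,\dots,d_l$ is zero and the rest are units — so in fact $\rk(M)=n-2$, and $\dim\mC_{D,-1}=2$.

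Next, the minimum distance. The upper bound $d\le n-1$ is immediate from the Singleton bound (Theorem~\ref{thm:sbound}) since $k=2$. For the lower bound $d\ge n-1$, I would suppose for contradiction that there is a nonzero codeword $c\in\mC_{D,-1}$ with $\wt(c)\le n-2$, i.e.\ at least two coordinates of $c$ are zero. A codeword of $\mC_{D,-1}$ is exactly a Fox $(\F_p,-1)$-coloring of $D$ (Definition~\ref{def:Foxmodule}, Definition~\ref{def:knotcode}), and it is non-trivial because $c\neq 0$ and any coloring with two equal colors that are both $0$ — wait, more carefully: a zero coordinate means a strand colored $0$, and two zero coordinates mean two distinct strands both colored $0$, hence two strands receiving the same color. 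But $c\neq 0$ means the coloring is not the all-zeros coloring; and since $p$ is prime, the only trivial colorings (all strands equal) are the scalar multiples of the all-ones vector, none of which has a zero coordinate, so a non-zero codeword with a zero coordinate is necessarily non-trivial. Then $c$ is a non-trivial Fox $(\F_p,-1)$-coloring of a reduced alternating diagram with $|\Delta_K(-1)|=p$ prime that assigns the same color to two different strands, contradicting Theorem~\ref{thm:conjecture}. Hence $d\ge n-1$, so $d=n-1$.

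The main obstacle I anticipate is the dimension computation — specifically, being careful about whether one works over $\Z$, over $\F_p$, or with the invariant factors, and making sure the reduction of the invariant factors of $M(T)$ modulo $p$ (equivalently, the invariant factors of $M(-1)$ over the PID $\F_p$) behaves as claimed. The clean route is to invoke Proposition~\ref{p-elem-ideals-Alex-pol} directly with $R=\F_p$ and $t=-1$: it gives $d_1=0$ and $\prod_{j=2}^l d_j=\Delta_K(-1)$, and since $\Delta_K(-1)=0$ in $\F_p$ while $p$ is prime (so $\F_p$ is a field and the only way a product is $0$ is one factor being $0$), exactly one further invariant factor vanishes and the remaining $l-2$ are units; thus the free rank of $\Ker(M(-1))$ is $2$, which is the dimension. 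Everything else — length from Lemma~\ref{lem:knot_diag}, the Singleton upper bound, and the Kauffman--Harary lower bound — is then routine. One should also double-check the edge case $n=2$, but reduced alternating diagrams with prime determinant have at least three crossings, so $n\ge 3$ and $d=n-1\ge 2$ is a genuine constraint.
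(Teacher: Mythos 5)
Your proof is correct in substance and, at its core, follows the same route as the paper: Theorem~\ref{thm:conjecture} forces every non-trivial codeword to have at most one zero coordinate, giving $d\ge n-1$, and the Singleton bound (Theorem~\ref{thm:sbound}) then pins the parameters down. The one step that does not hold up as written is the dimension computation. From Proposition~\ref{p-elem-ideals-Alex-pol} over $R=\F_p$ you get $\prod_{j=2}^l d_j=\Delta_K(-1)=0$ in $\F_p$, but over a field a vanishing product only tells you that \emph{at least} one factor vanishes; so this gives $\dim\mC_{D,-1}\ge 2$, not the claimed ``exactly one of $d_2,\dots,d_l$ is zero.'' To get the exact count along these lines you must work over $\Z$ first: there $\prod_{j\ge 2}d_j=\pm p$ with $p$ prime, so exactly one invariant factor is $\pm p$ and the rest are units, and reduction modulo $p$ then gives rank $n-2$ --- this is precisely Proposition~\ref{p-dim-prime-e} with $e=1$. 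Fortunately the slip is harmless to your overall argument: the correct half, $\dim\ge 2$ (equivalently $E_1(M(-1))=0$ in $\F_p$, which also follows from Proposition~\ref{prop:determinant}), already yields $d\le n-1$ by Singleton, and combining with $d\ge n-1$ from Kauffman--Harary forces $k\le n-d+1=2$. That is exactly how the paper's proof concludes, without touching the invariant factors at all. The only other cosmetic difference is that the paper exhibits a weight-$(n-1)$ codeword explicitly (the difference of a non-trivial coloring and a constant coloring) rather than deducing the upper bound on $d$ from Singleton.
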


\begin{proof}
By Theorem \ref{thm:conjecture}, every non-trivial coloring of $D$ assigns different colors to different strands. This implies that the minimum distance of the Fox knot code is $n-1$ which is attained by~$d(c,c')$ where $c$ is any non-trivial coloring and $c'$ is a trivial coloring where all strands have color $c_i$ for some $i \in \{1,\ldots,n\}$.  By Proposition \ref{prop:determinant}, we have that $D$ is Fox $(\F_p,-1)$-colorable since $p \mid \Delta_K(-1)=p$. Non-trivial colorability implies that the dimension of the Fox knot code is at least $2$. The only possible code parameters are $[n,2,n-1]_p$ by the Singleton bound of Theorem \ref{thm:sbound}.
\end{proof}

Note that the Fox knot codes with Fox coloring matrix as their parity check matrices are right $3$-regular LDPC codes. If a knot diagram is alternating, it gives a $(3,3)$-doubly-regular LDPC code. Moreover, if one considers the Dehn colorings, then the corresponding code is a right $4$-regular LDPC code. For the rest of this section, when we say coloring matrix, we mean the Alexander matrix of Definition \ref{def:coloringMat}.

Regarding the minimum distance of a Fox knot code, one can obtain the following, rather simple, result.

\begin{proposition}
\label{prop:minimumdistance}
A Fox code of a knot diagram of a non-trivial knot has minimum distance at least 2.
\end{proposition}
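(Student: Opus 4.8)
The plan is to argue by contradiction: suppose the knot code $\mC_{D,t}$ of a diagram $D$ of a non-trivial knot $K$ contained a nonzero codeword $c$ of Hamming weight exactly $1$. A codeword is a Fox $(\F_q,t)$-coloring (with $t=-1$ in the case of $\mC_D$, but the argument is the same for general invertible $t$), so $c$ assigns a nonzero color to exactly one strand $x_j$ and the color $0$ to every other strand. The key step is to feed this coloring into the coloring equation \eqref{eq:Fox} at each of the (at most $3$) crossings incident to the strand $x_j$: since each crossing involves exactly three strands and at most one of them can be $x_j$, while the coloring rule $c' = t a + (1-t) b$ forces a linear dependence among the three colors at that crossing. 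I would go through the cases according to whether $x_j$ is the overstrand or one of the two understrands at such a crossing, and show that in every case the equation forces the color of $x_j$ to equal $0$ (using that $t$ and $1-t$ behave appropriately; for $t=-1$ this is immediate, and for general $t$ one uses that the two understrands at a crossing must have equal color whenever the overstrand is $0$, and the overstrand must be $0$ whenever both understrands are, because $t$ is a unit). This contradicts $\wt(c)=1$.

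A cleaner way to package the same idea, which I would probably prefer to write up, is topological rather than case-based. Think of the strands as the arcs of the diagram $D$; a weight-$1$ coloring is supported on a single arc. But the strands of a knot diagram are glued together cyclically as one traverses the knot: each strand ends at an undercrossing where a new strand begins, and following the orientation of $K$ one passes through all $n$ strands in a cyclic order. I would use the observation (already exploited in the proof of Proposition~\ref{p-(R,1)-col}) that consecutive strands along this traversal are related through the coloring equations at the crossings. If only one strand is colored nontrivially, then at the two crossings where that strand meets its predecessor and successor in the traversal, the coloring relations propagate the value $0$ onto the distinguished strand, again a contradiction. Either way the essential point is local and elementary.

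The main obstacle — and the reason the statement is phrased as "at least $2$" rather than something sharper — is handling the overstrand case carefully: when the distinguished strand $x_j$ is the overstrand at a crossing, the two understrands there are the $0$-colored strands, so \eqref{eq:Fox} reads $0 = t\cdot 0 + (1-t)\, x_j$, i.e.\ $(1-t)x_j = 0$, which kills $x_j$ only if $1-t$ is not a zero divisor. For $t=-1$ (the case $\mC_D$) we have $1-t=2$, so this is fine unless $q$ is even — and when $q$ is even $t=-1=1$, so by Proposition~\ref{p-(R,1)-col} the only colorings are trivial and the code is $\{0\}$, which has minimum distance $\infty \ge 2$ vacuously; hence the conclusion still holds. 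For general $t$ one should either restrict to the understrand crossings (of which there is at least one for any strand, since every strand terminates at an undercrossing) to avoid the issue entirely, or note that if $1-t$ is a zero divisor one can still derive the contradiction from the understrand crossing. I would write the proof using the understrand crossing, which sidesteps the zero-divisor subtlety cleanly, and remark that non-triviality of $K$ is what guarantees $D$ has at least one crossing so that such a crossing exists.
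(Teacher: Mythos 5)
Your proposal is correct and follows essentially the same route as the paper's proof: both argue that a weight-$1$ codeword would color a single strand nonzero, observe that non-triviality of the knot forces that strand to be an understrand at some crossing, and use the coloring equation there together with invertibility of $t$ to conclude the color is $0$. Your extra discussion of the overstrand case and the zero-divisor subtlety is a sensible precaution, but the paper sidesteps it exactly as you ultimately propose, by working at an undercrossing.
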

\begin{proof}
Suppose there exists a Fox code of a knot diagram with minimum distance 1. Then this code contains a codeword of weight 1, which corresponds to a coloring of the knot diagram in which only one of the strands is colored with a color $c \in R\backslash\{0\}$. In case this strand is an overstrand at a crossing in the diagram, it is also an understrand at another crossing in the diagram, unless it is the trivial knot. Then, there exists a crossing for which it should hold that $0 - 0 = t(c - 0)$ or $c - 0 = t(0 - 0)$, depending on which understrand of the crossing is colored. It follows that $c = 0$ should hold as $t$ is invertible over $R$. From this contradiction, it follows that the minimum distance of the code is at least 2.
\end{proof}

We will return to the minimum distance of Fox knot codes in Remark~\ref{rem:minimumdist} and in Subsection~\ref{subsec:pretzel}. We can already disclose that it is not a knot invariant; see Remark \ref{rem:minimumdist} for the details.

\subsection{Dimension of Codes from Knot Diagrams}
\label{subsec:dim}

In this subsection we investigate
the dimension of a Fox code of a knot diagram. In particular, we prove that the dimension of a Fox code of a knot diagram is a knot invariant. We start with an observation.

\begin{remark}
\label{rem:dim1}
The $n$-repetition code of Example \ref{ex:repetition} is always a subcode of the Fox code of a knot diagram with $n$ strands, as trivial colorings are always possible. Therefore, the dimension of the Fox code of a knot diagram is at least 1. Conversely, when the dimension of the Fox code of a knot diagram is larger than 1, the knot has a non-trivial coloring.
\end{remark}

We directly start with one of the main theorems of the subsection.

\begin{figure}[ht]
    \centering
    \includegraphics[width=0.9\textwidth]{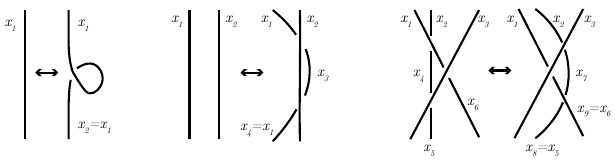}
    \caption{The effect of the Reidemeister moves on a Fox-coloring.}
    \label{fig:fox_reidemeister}
\end{figure}

\begin{theorem}
\label{thm:dimreidemeister}
Let $D$ and $D'$ be equivalent knot diagrams. Then 
$\mF_{D,t}$ and $\mF_{D',t}$ have the same dimension.
\end{theorem}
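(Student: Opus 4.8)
The plan is to reduce the statement to a local analysis of the three Reidemeister moves. By Theorem~\ref{thm:reidemeistermoves} (together with the definition of $D \approx D'$), it suffices to prove the claim in the case where $D'$ is obtained from $D$ by a single Reidemeister move of type I, II, or III; the general case then follows by induction on the length of the sequence of moves connecting $D$ and $D'$. For a single move, I would set up an explicit correspondence between the colorings of $D$ and the colorings of $D'$: outside the small disk where the move takes place, the two diagrams agree, so a coloring of one restricts to a partial coloring of the other, and the task is to check that this partial coloring extends uniquely (and linearly) across the disk. This is exactly the information encoded in Figure~\ref{fig:fox_reidemeister}, which records how a Fox $(\F_q,t)$-coloring behaves under each move.

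Concretely, I would work move-by-move. For the type~I (twist) move, the new diagram has one extra strand $x'$ and one extra crossing $c'$; the coloring equation at $c'$ forces $x'$ to equal the color of the strand it is a twist of (using invertibility of $t$, as in the proof of Proposition~\ref{prop:minimumdistance}), so there is a bijection between colorings that is $\F_q$-linear, hence an isomorphism $\mC_{D,t} \cong \mC_{D',t}$ and the dimensions agree. For the type~II (poke) move, two crossings and two strands are added; I would solve the two new coloring equations to show the two new strand-colors are determined by the colors on the boundary of the disk, again linearly and bijectively. For the type~III (side) move, the number of strands and crossings is unchanged, and one checks directly that the system of coloring equations for $D$ and for $D'$ have the same solution space after the obvious identification of strands outside the disk — i.e.\ the corresponding linear systems are equivalent. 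In each case one exhibits a linear isomorphism between $\mC_{D,t}$ and $\mC_{D',t}$, which gives equality of dimensions.

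Alternatively — and perhaps more cleanly — I would invoke Proposition~\ref{p-ReidemesterM}: if $D \approx D'$ then the Alexander matrices $M(T)$ and $M'(T)$ are equivalent in the sense of Definition~\ref{d-equiv-matrix}. Specializing $T = t$ preserves matrix equivalence, so $M(t)$ and $M'(t)$ are equivalent matrices over $\F_q$; equivalent matrices have the same rank, and since $\mC_{D,t} = \ker M(t)$ and $\mC_{D',t} = \ker M'(t)$, the rank–nullity theorem gives $\dim \mC_{D,t} = \dim \mC_{D',t}$ — provided $D$ and $D'$ have the same number of strands, and otherwise one tracks the change in the number of columns together with the change in rank forced by matrix equivalence (the type~I and type~II moves change both compatibly). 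I expect the main technical obstacle to be the bookkeeping in the type~II and type~III cases: one must be careful about the orientations of the strands and about which understrand sits on the left versus the right of the overstrand (the distinction that governs whether the coefficient is $T$ or $-1$ in Definition~\ref{def:coloringMat}), since a sloppy sign convention there would break the claimed linear isomorphism. If I go the matrix-equivalence route, the obstacle instead migrates entirely into Proposition~\ref{p-ReidemesterM}, whose proof is only sketched (it cites \cite{livingston1993knot} for the case $T=-1$), so for a self-contained argument the direct, picture-driven, move-by-move verification is the safer path.
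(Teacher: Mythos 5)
Your primary, move-by-move argument is essentially the paper's own proof: it likewise reduces to a single Reidemeister move via Theorem~\ref{thm:reidemeistermoves}, works out only the type~I case explicitly (phrased there as a duplicated column in a generator matrix, which is the same linear bijection of colorings you describe), and dismisses types~II and~III as ``similar,'' pointing to Figure~\ref{fig:fox_reidemeister}. Your alternative route via Proposition~\ref{p-ReidemesterM} is not the one the paper takes but is also sound, and the bookkeeping you worry about is unnecessary there: Proposition~\ref{p-module-equiv} already gives $\Ker(M(t)) \cong \Ker(M'(t))$ as $\F_q$-vector spaces, hence equal dimensions regardless of how the number of columns changes.
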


\begin{proof}
Let $D$ and $D'$ have $n$ strands. Denote by $\mF_{D,t}$ and $\mF_{D',t}$ the respective codes related to the diagrams. One locally investigate what happens when performing each Reidemeister move, see 
Figure~\ref{fig:fox_reidemeister}. Suppose $D'$ is obtained from $D$ by twisting a strand $x_1$ (Reidemeister move of type I), then the twist results in two strands and a crossing in this part of the diagram, where both strands are the understrands and one of the strands is the overstrand. For a Fox coloring it then follows that the colors assigned to both strands must be the same. 
    Let 
    $$G = 
    \begin{pmatrix}
     | & | & | \\
    \dots & x_1 & \dots \\
    | & | & |
    \end{pmatrix}
    $$ 
    be a full rank $k \times n$ generator matrix of $\mF_{D,t}$.
    Then    $$G' = 
    \begin{pmatrix}
    | & | & | & |\\
    \dots & x_1 & x_1 & \dots\\
    | & | & | & | 
    \end{pmatrix}
    $$ 
    is a $k \times (n+1)$ generator matrix for $\mF_{D'}$, which has the same rank since the added column is a duplicate of another column.

    The other moves can be investigated
    in a similar manner and we omit the proof here. By Theorem \ref{thm:reidemeistermoves},
any two diagrams of a knot can be transformed into each other using Reidemeister moves. It follows that $\mF_{D,t}$ and $\mF_{D',t}$ have the same dimension.
\end{proof}

The next result derives an upper bound for the dimension of Fox knot codes. 

\begin{theorem}
\label{thm:dimension-ineq}
Let $D$ be a knot diagram with $n$ strands and let $\mF_{D,t}$ be the corresponding Fox knot code over $\F_q$. We have
$$
1 \le \dim(\mF_{D,t}) \le \frac{n+1}{2} .
$$ 
\end{theorem}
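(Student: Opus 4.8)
The lower bound $\dim(\mathcal{C}_{D,t}) \ge 1$ is immediate from Remark~\ref{rem:dim1}, since the $n$-repetition code (the trivial colorings) is always a subcode. So the plan is to concentrate on the upper bound $\dim(\mathcal{C}_{D,t}) \le (n+1)/2$, equivalently $\mathrm{rk}(M(t)) \ge (n-1)/2$.

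The approach I would take is to exhibit a large submatrix of $M(t)$ of guaranteed rank. Recall from Proposition~\ref{p-pincipalM} (via Proposition~\ref{p-det=pm1}) that for \emph{any} choice of indices $i,j$ the $(n-1)\times(n-1)$ minor $\det(M^*_{ij}(T))$ generates the principal ideal $E_1(M(T))$, and this generator is $\pm\Delta_K(T)$ up to a unit $\pm T^s$; in particular it is nonzero in $\mathbb{Z}[T,T^{-1}]$. That alone does not give a rank bound over $\F_q$ at the specialization $T=t$, because $\Delta_K(t)$ could vanish. Instead I would argue combinatorially about the sparsity structure of $M(t)$. Each row of $M(t)$ has exactly three nonzero entries: $1-t$ in the column of the overstrand, and $-1$ and $t$ in the columns of the two understrands at that crossing (these three columns are distinct, or two coincide in the degenerate case of a strand that is both over- and under-strand at a crossing — which, as in the proof of Proposition~\ref{prop:minimumdistance}, only happens for the unknot, handled separately). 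The key point: the incidence structure ``crossing $c_i$ touches strands $x_j$'' forms a $3$-regular bipartite-type incidence between the $n$ crossings and $n$ strands; and since at each crossing two of the three touching strands are the \emph{understrands}, which are the two arcs of a single strand passing \emph{through} that crossing, the ``understrand'' relation alone gives each strand being an understrand at exactly its two endpoints (the strand runs between two crossings where it goes under). This means that if we build the $n\times n$ matrix recording only the understrand entries ($-1$ and $t$), it is the incidence matrix of a cycle-like structure, and one shows it has rank at least $n-1$; adding back the overstrand columns cannot decrease rank. Concretely, I would order strands $x_1,\dots,x_n$ around the knot so that consecutive strands meet, as in the proof of Proposition~\ref{p-(R,1)-col}; then restricting $M(t)$ to an appropriate set of roughly $(n-1)/2$ rows and columns yields a triangular (or block-triangular) submatrix with invertible diagonal entries drawn from $\{-1, t, 1-t\}$, none of which can be killed uniformly — but here is the subtlety, $1-t$ \emph{can} be zero when $t=1$, so I would treat $t=1$ via Proposition~\ref{p-(R,1)-col} (all colorings trivial, so $\dim = 1 \le (n+1)/2$) and assume $t\ne 1$ for the main argument.

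The cleanest route to the factor $(n+1)/2$, and the one I would actually write, is this: pick a maximal set $S$ of crossings such that the associated rows of $M(t)$, restricted to the understrand columns, are ``independent by support'' — then $|S| \ge (n-1)/2$ roughly because $n$ strands each serve as an understrand at exactly two crossings, so the understrand-incidence is $2$-regular on the strand side and $2$-regular on the crossing side, i.e.\ a disjoint union of cycles on the strand set, and one can choose every other crossing along each cycle to get an independent-by-support collection of size at least half. Formally: the understrand incidence defines a graph on the $n$ strands whose edges are the $n$ crossings; a $2$-regular multigraph on $n$ vertices with $n$ edges decomposes into cycles, and choosing a maximal ``matching-like'' set of edges whose endpoint-sets can be triangulated gives $\ge \lceil (n-1)/2\rceil$ rows forming a submatrix row-equivalent to an upper-triangular matrix with diagonal entries in $\{-1,t\}$, hence of that rank. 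Therefore $\mathrm{rk}(M(t)) \ge \lceil (n-1)/2 \rceil$, giving $\dim(\mathcal{C}_{D,t}) = n - \mathrm{rk}(M(t)) \le n - (n-1)/2 = (n+1)/2$.

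The main obstacle I anticipate is making the ``independent by support'' / cycle-parity counting fully rigorous, including the boundary cases: the unknot (where a strand may be both over and under at a crossing, or where $n$ is small), the possibility that the understrand graph has loops or multi-edges, and ensuring the selected submatrix really is triangularizable rather than merely having large support. I would also double-check whether the bound should be $\lceil (n-1)/2\rceil$ or exactly matches $(n+1)/2$ when $n$ is even versus odd — since $(n+1)/2$ is not an integer for even $n$, the statement is really $\dim \le \lfloor (n+1)/2 \rfloor$, and I would verify the counting gives $\mathrm{rk}(M(t)) \ge \lceil (n-1)/2 \rceil = \lfloor n/2 \rfloor$, which is exactly what is needed. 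An alternative, possibly slicker, proof would use Proposition~\ref{p-elem-ideals-Alex-pol} and the structure of invariant factors: over a field $\F_q$, $M(t)$ has invariant factors $d_1 = 0$ (so rank drops by one from the ``generic'' $n$) but the remaining nonzero invariant factors multiply to $\Delta_K(t)$ — this controls the \emph{last} elementary divisor but not the rank directly, so I expect the combinatorial argument above to be the one that actually closes the bound.
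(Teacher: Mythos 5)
Your overall strategy --- lower bound from Remark~\ref{rem:dim1}, upper bound by showing $\rk(M(t))\ge \frac{n-1}{2}$ using the sparsity of the coloring matrix and the fact that the two understrands at each crossing are consecutive arcs of the knot --- is the same as the paper's. But the specific mechanism you propose does not close. Choosing a matching of crossings in the understrand cycle gives rows whose \emph{understrand} supports $\{i,i+1\}$ are pairwise disjoint, yet each such row also carries the overstrand entry $1-t$ at a column $i'$ over which you have no control: it can land inside the understrand columns of the other selected rows. So the selected rows need not be ``independent by support,'' and the step ``adding back the overstrand columns cannot decrease rank'' is not available, because the overstrand entries live in the very columns you want to use as pivots (a single row can even have both of its understrand columns hit by overstrand entries of other selected rows). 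You explicitly flag this as the main obstacle, but it is exactly the point that needs an idea, and the matching/cycle-parity count does not supply one. The paper's resolution is different: after labeling the strands $x_1,\dots,x_n$ consecutively along the knot, it discards the wrap-around crossing $c_n$ and splits the remaining crossings into $L=\{c_i \mid i'<i\}$ and $U=\{c_i\mid i'>i+1\}$; one of these has at least $\frac{n-1}{2}$ elements, and its rows are automatically independent because the overstrand entry sits consistently to one side of the understrand entries --- for $U$ the leftmost nonzero entry of row $i$ is at column $i$ (row echelon form), for $L$ the rightmost is at column $i+1$. Controlling \emph{where} the overstrand column lies relative to the understrand columns is the missing idea.

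Two smaller points. First, the degenerate case where the overstrand coincides with an understrand at a crossing is a kink (a crossing removable by a type-I move); it can occur in a diagram of any knot, not only the unknot, so it cannot be dismissed the way you suggest. The paper handles it by first invoking Theorem~\ref{thm:dimreidemeister} to pass to an equivalent diagram with no twists --- a reduction your plan omits but needs, since a kinked row has only two nonzero entries and breaks the three-column structure. Second, your worry about $t=1$ is unnecessary once the argument is set up as in the paper: the pivot entries used are the understrand entries $-1$ and $t$, which are units for every admissible $t$, and the entry $1-t$ never needs to be invertible.
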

\begin{proof}
By Theorem \ref{thm:dimreidemeister}, the Reidemeister moves do not affect the dimension of a Fox code of a knot diagram. Therefore, without loss of generality, let $D$ be a knot diagram that does not have any crossings which can be removed via the twist move, see Definition \ref{def:reidemeister} 
(that is, there exists no crossing in $D$ such that the overstrand and one of the understrands of the crossing are the same strand).  
Label the strands of $D$ as $x_1, x_2, ..., x_n$ by following the knot in one direction. Denote by $c_i$ the crossing where $x_i$ and $x_{i+1}$ are understrands and $x_{i'}$ is the overstrand, and with $x_{i+1}$ being the strand leaving the crossing with respect to the chosen orientation. Note that $i' \in \{1, \ldots, n\}$ depends on $i$. Since $D$ does not contain twists, we have that either $i' < i$ or $i' > i+1$. By going through the crossings $c_i$ with $1 \le i \le n-1$, we define the sets $L = \{ c_i \mid 1 \leq i \le n-1,\  i' < i\}$ and~$U= \{ c_i \mid 1 \leq i \le n-1,\  i' > i+1\}$. As $L \cap U = \emptyset$ and $L \cup U = \{1, \ldots, n-1\},$ we find that either $|L| \geq \frac{n-1}{2}$ or $|U| \geq \frac{n-1}{2}$, since $|L \cup U| = n-1.$ 
Then, the parity check matrix $H$ of $\mF_{D,t}$ can be constructed like in Definition~\ref{def:coloringMat} as follows. 

Let 
$$
H_{ij}(t) = 
\begin{cases}
x & \text{if} \ \ j=i,\\
y & \text{if} \ \ j \equiv i+1 \pmod n,\\
1-t & \text{if} \ \ j=i',\\
0 & \text{otherwise},
\end{cases}
$$ where $(x,y) \in \{(-1,t),(t,-1)\}$ depending on the diagram as in Definition~\ref{def:coloringMat}. 

Throughout the rest of the proof, we show that $\rk(H) \ge \frac{n-1}{2}$. When~$|U| \geq \frac{n-1}{2}$, take the submatrix $H'$ of $H$ consisting of the rows corresponding to the crossings in $U$. 
Then $H'$ is in row echolon form since the $i$-th row of $H$ such that $c_i \in U$ only has nonzero entries at positions $i$, $i + 1$ and $i'$ with $i' > i+1$. Thus, $\rk(H') = |U|$. This implies that $\rk(H) \ge |U|$ and consequently $\mF_{D,t}$ has dimension at most $n - |U|$. Similarly, if $|L| \geq \frac{n-1}{2}$ then the submatrix $H''$ consisting of the rows of $H$ corresponding to the crossings in $L$ is in column echelon form since the $i$-th row of $H$ such that $c_i \in L$ only has nonzero entries at positions $i$, $i + 1$ and $i'$ with $i' < i$. Therefore, we have $\dim(\mF_{D,t}) \leq n - \frac{n-1}{2} = \frac{n+1}{2}$.
\end{proof}

In addition to Theorem \ref{thm:dimension-ineq}, the following two results hold 
about the dimension of Fox knot codes.

\begin{proposition}
\label{prop:dimension}
Let $D$ be a knot diagram with $n$ strands and let $\mF_{D,t}$ be the corresponding Fox knot code over $\F_q$. 
Let $\dim(\mF_{D,t}) = k$. Then $k$ is the smallest integer with the property that~$E_k(M(t))=\F_q$.
\end{proposition}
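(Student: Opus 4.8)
The plan is to translate the equality $\dim(\mathcal{C}_{D,t}) = k$ into a statement about the rank of the coloring matrix $M(t)$ over $\mathbb{F}_q$, and then connect that rank to the elementary ideals $E_k(M(t))$. Since $M(t) \in \mathbb{F}_q^{n\times n}$ is a parity check matrix of $\mathcal{C}_{D,t}$, we have $\dim(\mathcal{C}_{D,t}) = n - \mathrm{rk}(M(t))$, so $\mathrm{rk}(M(t)) = n - k$. The key classical fact I would invoke is the characterization of matrix rank over a field (or more generally a PID, via the Smith normal form / Fitting ideals discussed in the appendix) in terms of the vanishing of minors: $\mathrm{rk}(M(t)) \le r$ if and only if every $(r+1)\times(r+1)$ minor of $M(t)$ vanishes, equivalently $E_{n-r}(M(t)) = 0$ (recall $E_j(M(t))$ is generated by the $(n-j)\times(n-j)$ minors when $M(t)$ is $n\times n$). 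Over the field $\mathbb{F}_q$, an ideal is either $0$ or all of $\mathbb{F}_q$, so $E_j(M(t)) = \mathbb{F}_q$ precisely when some $(n-j)\times(n-j)$ minor is nonzero, i.e. precisely when $\mathrm{rk}(M(t)) \ge n-j$.

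First I would record the chain of equivalences explicitly: $E_j(M(t)) = \mathbb{F}_q$ $\iff$ $\mathrm{rk}(M(t)) \ge n-j$ $\iff$ $n - \dim(\mathcal{C}_{D,t}) \ge n-j$ $\iff$ $\dim(\mathcal{C}_{D,t}) \le j$. Setting $j = k = \dim(\mathcal{C}_{D,t})$, we get $E_k(M(t)) = \mathbb{F}_q$. For minimality, I would note that for $j = k-1$ the same equivalence gives $E_{k-1}(M(t)) = \mathbb{F}_q \iff \dim(\mathcal{C}_{D,t}) \le k-1$, which is false; hence $E_{k-1}(M(t)) = 0 \ne \mathbb{F}_q$. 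Together with the monotonicity $E_0(M(t)) \subseteq E_1(M(t)) \subseteq \cdots$ of the elementary ideals, this shows $k$ is the smallest index $j$ with $E_j(M(t)) = \mathbb{F}_q$, which is exactly the claim.

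I would also want to double-check the indexing convention: in the appendix $E_j$ of an $n\times n$ matrix should be the ideal generated by its $(n-j)$-minors, so that $E_0 = (\det)$ and $E_n = (1)$. This matches the usage earlier in the paper, e.g. Proposition~\ref{p-pincipalM} where $E_1(M(T))$ is generated by the $(n-1)$-minors $\det(M_{ij}^*(T))$, and Proposition~\ref{p-elem-ideals-Alex-pol}. With this convention the argument above goes through verbatim; one should just be careful that the statement is for the specialized matrix $M(t)$ over $\mathbb{F}_q$, not the polynomial matrix $M(T)$.

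The routine nature of the argument means there is no serious obstacle; the only point requiring care is the boundary behavior, namely that $\dim(\mathcal{C}_{D,t}) \ge 1$ always (Remark~\ref{rem:dim1}), so $E_0(M(t)) = (\det M(t)) = 0$ is consistent with $k \ge 1$, and the "smallest integer" in the statement is genuinely at least $1$. One should also confirm that the standard minor-rank characterization of Fitting/elementary ideals is available from the appendix or is elementary enough to cite, since the whole proof hinges on it; if it is not stated there, I would include a one-line justification that rank equals the size of the largest nonvanishing minor over a field.
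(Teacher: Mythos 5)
Your argument is correct and follows essentially the same route as the paper: both translate $\dim(\mathcal{C}_{D,t})=k$ into $\mathrm{rk}(M(t))=n-k$ and then apply the fact that over a field $E_l(A)=\F_q$ for $l\ge n-\mathrm{rk}(A)$ and $E_l(A)=0$ otherwise, which is exactly Proposition~\ref{p-elem} in the appendix. The minor-rank characterization you were unsure about is indeed supplied there, so no additional justification is needed.
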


\begin{proof}
Let $\dim(\mF_{D,t}) = k$. Then $M(t)$ has rank $r=n-k$, since $\mF_{D,t}$ is the null space of the matrix $M(t)$. 
Proposition \ref{p-elem} states that $E_l(M(t))=\F_q$ if $l\geq n-r=k$ and $E_l(M(t))=0$ if $l<k$.
Hence $k$ is the smallest integer such that $E_k(M(t))=\F_q$.
\end{proof}

Note that Proposition~\ref{prop:dimension} is also stated in \cite[Corollary 12]{traldi2018}, where it has a longer proof. 

\begin{proposition}\label{p-dim-prime-e}
Let $p$ be a prime number and let $t$ an integer such that $1\leq t<p$.
Let~$D$ be a knot diagram and let
$\mF_{D,t}$ be the corresponding Fox knot code over $\F_p$. 
Let $e$ the largest integer such that $p^e$ divides $\Delta_K(t)$ in $\Z$. 
Then $\dim(\mF_{D,t}) \leq e+1$, and equality holds if $e=1$.
\end{proposition}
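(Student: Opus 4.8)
The plan is to work over the localization $\Z_t$ and reduce modulo $p$, using Proposition~\ref{p-number-color-Z} together with Proposition~\ref{p-elem-ideals-Alex-pol} to relate the dimension of $\mC_{D,t}$ to the exponent $e$. First I would observe that the knot code $\mC_{D,t}$ over $\F_p$ is exactly the module of Fox $(\overline{R},\overline{t})$-colorings of $D$ with $R=\Z_t$, $d=p$, and $\overline{R}=R/(p)\cong\F_p$; this is legitimate since $\gcd(p,t)=1$ because $1\le t<p$. Let $(d_1)\subseteq(d_2)\subseteq\cdots\subseteq(d_l)$ be the invariant factors of $M(t)$ over $\Z_t$ and set $a_i=\gcd(p,d_i)$, so that each $a_i$ is either $1$ or $p$ (since $p$ is prime). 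By Proposition~\ref{p-number-color-Z}, the number of Fox $(\overline{R},\overline{t})$-colorings equals $p\prod_{i=2}^{n}a_i=p^{1+N}$, where $N=\#\{\,i:2\le i\le n,\ a_i=p\,\}$. On the other hand this count is $p^{\dim(\mC_{D,t})}$, so $\dim(\mC_{D,t})=1+N$.

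The second step is to bound $N$ by $e$. By Proposition~\ref{p-elem-ideals-Alex-pol}, $\Delta_K(t)=\Delta_1=\prod_{j=2}^{l}d_j$ in $\Z_t$ (and it is nonzero by Proposition~\ref{p-det=pm1}). Each $d_i$ with $a_i=p$ contributes at least one factor of $p$ to this product, so $p^{N}$ divides $\Delta_K(t)$ in $\Z_t$. Since $t$ is a unit in $\Z_t$, the power of $p$ dividing $\Delta_K(t)$ is the same whether computed in $\Z$ or in $\Z_t$, hence $N\le e$ and $\dim(\mC_{D,t})=1+N\le e+1$, giving the inequality.

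For the equality statement when $e=1$: if $p\mid\Delta_K(t)$ then by Proposition~\ref{prop:determinant} (with $R=\Z$, $d=p$) the diagram $D$ is Fox $(\F_p,t)$-colorable, so it has a non-trivial coloring and $\dim(\mC_{D,t})\ge 2$ by Remark~\ref{rem:dim1}. Combined with $\dim(\mC_{D,t})\le e+1=2$, this forces $\dim(\mC_{D,t})=2=e+1$. The main obstacle I anticipate is making the reduction in the first step fully rigorous — namely checking that reducing the $\Z_t$-module of colorings modulo $p$ really does give the $\F_p$-kernel of $M(t)$ defining $\mC_{D,t}$, rather than something that differs by torsion; this is exactly what Proposition~\ref{p-number-color-Z} (via Proposition~\ref{p-struct-ker-im}) is set up to handle, so the argument should go through by citing it, but one must be careful that the invariant factor $d_1=0$ (coming from the trivial colorings / the zero determinant) is correctly accounted for, which is why the product in the coloring count runs from $i=2$.
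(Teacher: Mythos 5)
Your proposal is correct and follows essentially the same route as the paper's proof: both invoke Proposition~\ref{p-number-color-Z} with $R=\Z_t$ and $d=p$ to count the Fox $(\F_p,t)$-colorings as $p^{1+N}$ (your $N$ is the paper's $\overline{e}$), bound $N\le e$ via the factorization $\Delta_K(t)=\prod_{j\ge 2}d_j$ from Proposition~\ref{p-elem-ideals-Alex-pol}, and settle the $e=1$ case with Proposition~\ref{prop:determinant}. Your added remark that the $p$-adic valuation of $\Delta_K(t)$ agrees in $\Z$ and $\Z_t$ because $p\nmid t$ is a point the paper passes over silently, but it is the same argument.
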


\begin{proof}
Our proof uses Proposition \ref{p-number-color-Z} with $R=\Z_t$, the localization of $\Z$ at $t$, and $d=p$. In particular,  $\overline{R}=\F_p$ and 
$t$ is an invertible element in $\F_p$.
Let $(d_1)\subseteq (d_2)\subseteq \cdots \subseteq (d_l)$ be the invariant factors of the module of Fox $(\Z_t,t)$-colorings.
Then $\Delta_K(t)=\prod_{i=2}^n d_i$ and $d_1=0$ by Proposition \ref{p-elem-ideals-Alex-pol}.
Let $\overline{d_i}=\gcd (p,d_i)$ and let $\overline{e}$ be the number of integers $i$ with $2 \leq i \leq l$ and $\overline{d_i}=p$. 
Furthermore, $p^{\overline{e}}$ divides $\Delta_K(t)$ in $\Z$, hence
$\overline{e} \leq e$.
Then $p\prod_{i=2}^n\overline{d_i}$ is the number of Fox~$(\F_p,t)$-colorings of $D$ by Proposition~\ref{p-number-color-Z}. So $\dim(\mF_{D,t})=1 + \overline{e}$
and $\dim(\mF_{D,t})\leq 1 + e$. If~$e=1$, then the underlying knot, say $K$, is Fox $(\overline{R},t)$-colorable by Proposition \ref{prop:determinant}. 
So~$1<\dim(\mF_{D,t})\leq 1 + e$
and $\dim(\mF_{D,t})=2$.
\end{proof}

We finish this section with the following observation.

\begin{remark}
\label{rem:minimumdist}
The minimum distance of codes of knots is not a knot invariant. This can be seen, for example, from the generator matrix of the first Reidemeister move in the proof of Theorem~\ref{thm:dimreidemeister}.
\end{remark}

\section{Two Families of Fox Knot Codes}
\label{sec:families}
\label{sec:4}

This section is devoted to the study of two families of knots and their codes, namely \textit{torus knots around other knots} and \textit{pretzel knots}.
These can be both used to construct codes with 
interesting parameters and will be treated in dedicated subsections. In the sequel, for ease of notation we will write
$E_k(t)$
instead of
$E_k(M(t))$,
where $M(t)$ is the coloring matrix of the knot diagram at hand.

\subsection{Torus Knots}
\label{subsec:torus}
The notion of a torus knot was already introduced in Definition \ref{def:torusknot}. In this subsection we investigate the properties of these knots and their generalizations. We then study the dimension of codes of knot diagrams of these knots and show how to construct codes of arbitrary dimension.

\begin{remark}\label{tubular-nbhd}
There exits a closed \textbf{tubular neighbourhood} of $K$, denoted by $\tub(K)$, 
such that $\tub(K)$ is homeomorphic to $S^1\times D^2$ via a homeomorphism $h$ where $K$ is mapped to $S^1\times \{0\}$,
and the boundary of $\tub(K)$ is homeomorphic to the torus $S^1\times S^1$. See \cite{hirsch1968} for more details.
\end{remark}

Generalizations of torus knots are defined as follows; see \cite{burau1934knots,le1972knots}.

\begin{definition}
\label{def:cableknot}
Let $\tub(K)$ be a {tubular neighbourhood} of~$K$.
Let~$a,b$ be positive integers that are relatively prime.
Then the curve on $S^1\times S^1$ given by the parametrization $\varphi (t)=at$, $\theta (t)=bt$, is mapped via $h^{-1}$ of Remark \ref{tubular-nbhd} to a knot on the
boundary of the tubular neighbourhood $\tub(K)$. 
This knot is called the $(a,b)$-\textbf{torus knot around} $K$ and is denoted by $K(a,b)$. 
By induction, we can repeat this procedure for some integer $m \in \Z_{\ge 2}$ to obtain the $(a_1,b_1,\ldots ,a_m,b_m)$-\textbf{iterated torus knot} $K(a_1,b_1,\ldots ,a_m,b_m)$ \textbf{around} $K$, where
the pairs $(a_i,b_i)$ are relatively prime  and $K(a_1,b_1,\ldots ,a_i,b_i)$ is the $(a_i,b_i)$-torus knot around
$K(a_1,b_1,\ldots ,a_{i-1}, b_{i-1})$ for all~$i \in \{1,\ldots,m\}$. 
\end{definition}

\begin{remark}
    It can be seen that Definition~\ref{def:cableknot} generalizes torus knots, in the sense that the torus knot $T(a,b)$ is the $(a,b)$-torus knot around the trivial knot or unknot $U$.
\end{remark}

The Alexander polynomial of a torus knot has a rather simple expression.
 
\begin{proposition}\label{p-alex-torus}

Let~$a,b$ be positive integers that are relatively prime.
The Alexander polynomial of the torus knot $T(a,b)$ is given by 
$$
\Delta_{T(a,b)}(t) =\frac{(t^{ab}-1)(t-1)}{(t^a-1)(t^b-1)}.
$$
Moreover, the $k$-th elementary ideal of $T(a,b)$ is $\Z[t,t^{-1}]$ for all $k \geq 2$.
\end{proposition}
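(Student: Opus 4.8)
The plan is to prove the Alexander polynomial formula by producing an explicit knot diagram of $\mathcal{T}(a,b)$ together with its Alexander matrix, computing an $(n-1)$-minor, and then identifying the result (up to units $\pm T^s$) with the stated rational function, which must first be checked to actually be a polynomial. That last point is a preliminary: $\frac{(T^{ab}-1)(T-1)}{(T^a-1)(T^b-1)}$ is a polynomial precisely because $\gcd(a,b)=1$; concretely, writing $\Phi_d$ for cyclotomic polynomials, $T^n-1 = \prod_{d\mid n}\Phi_d$, so the numerator contributes $\prod_{d\mid ab}\Phi_d \cdot \prod_{d\mid 1}\Phi_d = \prod_{d\mid ab}\Phi_d \cdot \Phi_1$ and the denominator contributes $\prod_{d\mid a}\Phi_d\cdot\prod_{d\mid b}\Phi_d$; since $\gcd(a,b)=1$ the divisors of $a$ and of $b$ overlap only in $d=1$, so every $\Phi_d$ in the denominator divides the numerator and the quotient is a genuine element of $\mathbb{Z}[T]$ with positive constant term (it evaluates to $1$ at $T=0$ after the canonical normalization, or one tracks the sign via the degree $(a-1)(b-1)$).

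For the main computation I would use the standard closed-braid presentation of the torus knot: $\mathcal{T}(a,b)$ is the closure of the braid $(\sigma_1\sigma_2\cdots\sigma_{a-1})^b$ on $a$ strands, which has $b(a-1)$ crossings. Rather than wrestle with this diagram's Alexander matrix directly, the cleaner route is to pass through the \emph{reduced Burau representation}: it is classical (Burau) that for the closure of a braid $\beta\in B_a$, the Alexander polynomial is $\det\!\big(I - \bar{\rho}(\beta)\big)$ up to a unit, where $\bar\rho$ is the reduced Burau representation on $\mathbb{Z}[T,T^{-1}]^{a-1}$, and this $\det$ is exactly (up to $\pm T^s$) an $(n-1)$-minor of the Alexander matrix of Definition~\ref{def:coloringMat} of the braid-closure diagram — so invoking Proposition~\ref{p-pincipalM} and the invariance of the Alexander polynomial under Reidemeister moves (Proposition~\ref{p-ReidemesterM}, Definition~\ref{def:alex}) legitimizes the identification. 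For the full-twist-type braid $\beta=(\sigma_1\cdots\sigma_{a-1})^b$ one can diagonalize: $\bar\rho(\sigma_1\cdots\sigma_{a-1})$ has characteristic polynomial $\frac{T^a+(-1)^{a-1}\lambda^a \cdots}{}$ — more precisely its eigenvalues are the primitive-type roots and a direct computation (this is a well-known lemma) gives $\det(I-\bar\rho(\sigma_1\cdots\sigma_{a-1})^b) = \frac{(T^{ab}-1)(T-1)}{(T^a-1)(T^b-1)}$ up to a unit. I would present this eigenvalue computation as the technical core: the matrix $\bar\rho(\sigma_1\cdots\sigma_{a-1})$ is a companion-like matrix whose characteristic polynomial is $\sum_{i=0}^{a-1}(-T)^i \mu^{a-1-i}$ or similar, whose roots $\mu_j$ satisfy $\mu_j^a = $ (something like) $(-1)^{a-1}T^{\binom{a}{2}}$-adjusted; then $\det(I-\bar\rho(\beta)) = \prod_j (1-\mu_j^b)$, and summing the geometric-series identity $\prod_j(1-\mu_j^b)$ over the explicit roots collapses to the stated ratio. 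Alternatively, and perhaps more in the spirit of a self-contained paper, I could instead induct on $\gcd$-style relations or cite the cabling/Seifert-fibered structure, but the Burau route is the most economical fully explicit argument.

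The second sentence, $E_k(\mathcal{T}(a,b)) = \mathbb{Z}[T,T^{-1}]$ for all $k\ge 2$, should follow from Proposition~\ref{prop:dimension} combined with the first part plus the known fact that the Alexander module of a torus knot is \emph{cyclic} (generated by one element) — equivalently the knot group of a torus knot has a presentation with deficiency such that the Alexander matrix, after the universal row/column of zeros, is equivalent to a $1\times 1$ matrix $(\Delta_{\mathcal{T}(a,b)}(T))$ together with a block of $1$'s, or more simply: the torus knot group $\langle x,y\mid x^a = y^b\rangle$ has a $2$-generator $1$-relator presentation, so its Alexander matrix has a single relator row, hence (after Fox calculus) the Alexander matrix is equivalent to a matrix with at most one nonzero row beyond unimodular rows; therefore $E_2$ already contains a unit. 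Concretely I would argue: from the two-generator one-relator presentation, $E_1$ is principal generated by $\Delta$ and $E_2 = (1)$ because deleting any further row/column leaves a submatrix containing a unit entry (coming from the abelianization relations $x\mapsto T, y\mapsto T$ forcing $\pm1$ entries). The hardest part of the whole proof is the eigenvalue/determinant identity for the iterated-twist Burau matrix — getting the exponents and signs exactly right so the quotient lands with positive constant term as Definition~\ref{def:alex} demands — and the cleanest safeguard is to verify it on the trefoil $\mathcal{T}(2,3)$ against Example~\ref{ex:alexander} ($\Delta = T^2-T+1 = \frac{(T^6-1)(T-1)}{(T^2-1)(T^3-1)}$) before trusting the general manipulation.
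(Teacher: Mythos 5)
The paper itself does not prove this proposition at all: its ``proof'' is a pointer to Burau, to Crowell--Fox, and to Murasugi. So any honest argument you supply is already doing more than the paper. Your overall architecture is reasonable, and your treatment of the second claim is essentially sound: the torus knot group has the two-generator one-relator presentation $\langle x,y\mid x^a=y^b\rangle$, its Fox-calculus Jacobian is a $1\times 2$ matrix, so $E_k=(1)$ for all $k\ge 2$ follows directly from Definition~\ref{d-elem-id} (there $E_k(A)=R$ whenever $n-k\le 0$), and Remark~\ref{r-alex} together with Proposition~\ref{p-elem-id-1} transfers this to the diagram's Alexander matrix. The cyclotomic-polynomial argument that the stated ratio is a genuine polynomial is also correct.

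The genuine gap is in the Burau computation that you yourself call the technical core. You state that for the closure of $\beta\in B_a$ the Alexander polynomial is $\det\bigl(I-\bar\rho(\beta)\bigr)$ up to a unit; the correct classical formula divides this determinant by $1+T+\cdots+T^{a-1}$. Consequently your claimed identity $\det\bigl(I-\bar\rho((\sigma_1\cdots\sigma_{a-1})^b)\bigr)=\frac{(T^{ab}-1)(T-1)}{(T^a-1)(T^b-1)}$ is false. The true value is $\frac{T^{ab}-1}{T^b-1}$: the eigenvalues of $\bar\rho(\sigma_1\cdots\sigma_{a-1})$ are $\zeta T$ with $\zeta$ running over the nontrivial $a$-th roots of unity, so $\det(I-\bar\rho(\beta))=\prod_{\zeta^a=1,\,\zeta\ne 1}(1-\zeta^bT^b)=\frac{1-T^{ab}}{1-T^b}$, using $\gcd(a,b)=1$ to permute the roots; only after dividing by $\frac{T^a-1}{T-1}$ does one land on the stated Alexander polynomial. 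The trefoil check you propose as a safeguard would have exposed this immediately: $\det(I-\bar\rho(\sigma_1^3))=1+T^3=(1+T)(T^2-T+1)$, which differs from $\Delta_{\mT(2,3)}=T^2-T+1$ by the non-unit $1+T$. Moreover the eigenvalue computation is left as placeholders (``something like,'' ``or similar''), so the step carrying the entire first claim is not actually performed. Two repairs are available: either insert the missing normalization into the Burau formula and carry out the eigenvalue computation above, or simply reuse your own presentation argument for the first claim as well --- the $1\times 2$ Jacobian of $x^ay^{-b}$ abelianizes (under $x\mapsto T^b$, $y\mapsto T^a$) to $\bigl(\tfrac{T^{ab}-1}{T^b-1},\,-\tfrac{T^{ab}-1}{T^a-1}\bigr)$ up to units, and the ideal these two entries generate is principal with generator $\frac{(T^{ab}-1)(T-1)}{(T^a-1)(T^b-1)}$, which is the cleaner self-contained route.
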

\begin{proof}
See  \cite[Chapter 3]{burau1934knots}, \cite[VIII Exercise 3]{crowell2012introduction}, 
and \cite[Theorem 7.3.2]{murasugi1996knot}.
\end{proof}

We give an example to show how Proposition \ref{p-alex-torus} can be used.

\begin{example}\label{ex:torus}

Let~$a,b$ be positive integers that are relatively prime. Let $D$ be a diagram of $T(a,b)$. Proposition \ref{prop:determinant} implies the following.
\begin{enumerate}
    \item If $a$ and $b$ are odd, then $\Delta_{T(a,b)}(-1)=1$ and there are only trivial Fox $(\F_p,-1)$-colorings of $D$.
    \item If $a$ is odd and $b$ is even, then $\Delta_{T(a,b)}(-1)=a$, and $D$ is Fox $(\F_p,-1)$-colorable if and only if $p$ divides $a$.
    \item If $b$ is odd and $a$ is even, then $\Delta_{T(a,b)}(-1)=b$, and $D$ is Fox $(\F_p,-1)$-colorable if and only if $p$ divides $b$.
\end{enumerate}

The dimension over $\F_p$ of $\mF_D$ is 1 in the first case and 2 in the second and the third case by Proposition \ref{prop:dimension}, since the second elementary ideal is the whole ring by Proposition \ref{p-alex-torus}. If $ab$ divides $q-1$, then there exists an element $t$ in $\F_q^*$ of order $ab$. So $\Delta_{T(a,b)}(t)=0$, $D$ is $(\F_q,t)$-colorable, and the dimension over $\F_q$ of $\mF_{D,t}$ is 2.
\end{example}

\begin{remark}\label{r-p-order-det} 
The inequality in Proposition \ref{p-dim-prime-e} is in general not an equality (see \cite[Chapter 3, §4, Exercise 4.6]{livingston1993knot}), 
contrary to what is stated in \cite[Theorem 23]{henrich2022}. This can also be seen by taking $K=T(2,9)$. We have~$\Delta_K(-1)=9$, and thus the largest integer $e$ such that $3^e$ divides 9 is $e=2$ in this case. However, the dimension of the code over $\F_3$ is equal to $2$, showing that the bound of Proposition \ref{p-dim-prime-e} is not sharp in general. 
\end{remark}

We can determine the elementary ideals of the knot $K(a,b)$ in terms of the elementary ideals of the knot $K$ and the Alexander polynomial of the torus knot $T(a,b)$.

\begin{proposition}\label{p-elem-torus-around-K}
Let~$a,b$ be nonzero integers that are relatively prime. We have
$$
\tilde{E}_k(t) = \Delta_{T(a,b)}(t)E_k(t^b) + E_{k-1}(t^b).
$$
where $E_k(t)$ denotes the $k$-th elementary ideal of $K$ and $\tilde{E}_k(t)$ denotes the $k$-th elementary ideal  of the knot~$K(a,b)$.
\end{proposition}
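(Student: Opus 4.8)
The plan is to compute a presentation matrix (an Alexander matrix, up to equivalence) for the knot $K(a,b)$ in terms of a presentation matrix for $K$, and then read off the elementary ideals. First I would recall that the knot group of $K(a,b)$ can be computed from the knot group of $K$ via the fact that $K(a,b)$ lives on the boundary torus of $\tub(K)$: the complement of $K(a,b)$ is obtained by gluing the complement of $K$ to the complement of the torus knot $\mT(a,b)$ in the solid torus, along the boundary torus, via a van Kampen argument. Translated into the language of the Fox calculus and the Alexander module, this means that if $A(T)$ is an Alexander matrix for $K$, presenting the Alexander module over $\Z[T,T^{-1}]$, then an Alexander matrix for $K(a,b)$ is a block matrix having, schematically, a copy of $A(T^b)$ in one block (the longitude of $K$ wraps $b$ times, so $T$ gets replaced by $T^b$), a $1\times 1$ block carrying $\Delta_{\mT(a,b)}(T)$ coming from the torus-knot piece, and a coupling row/column identifying the meridian data along the gluing torus. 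The essential input here is the classical computation of the Alexander module of a satellite/cable knot; I would cite Burau \cite{burau1934knots} and the exercise in Crowell--Fox \cite{crowell2012introduction} already referenced in Proposition \ref{p-alex-torus}, and phrase it as: the Alexander matrix of $K(a,b)$ is equivalent to
\begin{equation*}
\widetilde{A}(T) = \begin{pmatrix} A(T^b) & 0 \\ v(T) & \Delta_{\mT(a,b)}(T) \end{pmatrix}
\end{equation*}
for a suitable row vector $v(T)$ whose precise form will not matter.

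Next I would compute the elementary ideals of this block matrix. The key algebraic fact is the behaviour of elementary (Fitting) ideals under the block structure above: for a matrix that is block-lower-triangular with a scalar $\delta = \Delta_{\mT(a,b)}(T)$ in the corner, one has, for each $k$,
\begin{equation*}
E_k(\widetilde{A}(T)) = \delta \cdot E_k(A(T^b)) + E_{k-1}(A(T^b)),
\end{equation*}
which is exactly the claimed identity once we remember that $E_k(T)$ and $\tilde E_k(T)$ are defined as the elementary ideals of the Alexander matrices of $K$ and $K(a,b)$, and that substituting $T\mapsto T^b$ in the Alexander matrix of $K$ replaces $E_k(T)$ by $E_k(T^b)$. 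The inclusion ``$\supseteq$'' is clear: minors of $\widetilde{A}(T)$ that use the last row contribute $\delta$ times the appropriate-size minors of $A(T^b)$, and minors that do not use the last row are (up to sign) minors of $A(T^b)$ of one size larger in terms of codimension, accounting for the $E_{k-1}$ term. The inclusion ``$\subseteq$'' follows by Laplace-expanding an arbitrary size-$(n+1-k)$ minor of $\widetilde{A}(T)$ along its last row or last column: each such minor is either a minor of $A(T^b)$ (possibly one landing in $E_{k-1}$) or $\delta$ times a minor of $A(T^b)$, so lies in $\delta E_k(A(T^b)) + E_{k-1}(A(T^b))$. This is the kind of routine Fitting-ideal bookkeeping that I would carry out carefully but briefly.

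Finally I would note two consistency checks that also serve to pin down the normalization: taking $k=1$ recovers $\tilde E_1(T) = \Delta_{\mT(a,b)}(T) E_1(T^b) + E_0(T^b) = \Delta_{\mT(a,b)}(T)\cdot(\Delta_K(T^b))$ up to units (using $E_0 = 0$ for a knot and $E_1 = (\Delta_K)$), which is the well-known satellite formula $\Delta_{K(a,b)}(T) = \Delta_{\mT(a,b)}(T)\,\Delta_K(T^b)$; and taking $K$ trivial recovers $\tilde E_k(T)$ equal to $\Z[T,T^{-1}]$ for $k\ge 1$ and $\Delta_{\mT(a,b)}(T)$ for $k=1$, consistent with Proposition \ref{p-alex-torus}. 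The main obstacle, and the step that deserves the most care, is establishing the block form of the Alexander matrix of $K(a,b)$: one must correctly track how the meridian and longitude of $K$ interact with the defining curve $\varphi(t)=at,\ \theta(t)=bt$ on the boundary torus — in particular justifying the substitution $T\mapsto T^b$ and the appearance of exactly $\Delta_{\mT(a,b)}(T)$ in the corner — whereas the passage from that matrix to the elementary-ideal identity is formal. I would either give this via a Wirtinger/van Kampen presentation and the Fox derivative computation, or cite the cable-knot Alexander module computation in \cite{burau1934knots} and \cite{le1972knots} and supply only the elementary-ideal extraction in detail.
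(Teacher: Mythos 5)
The paper itself does not prove this proposition; it defers entirely to \cite{le1972knots} and to \cite[Proposition 10.5]{pellikaan1981knopen}, so you are attempting more than the authors do. Your overall strategy --- produce a presentation matrix for the Alexander module of $K(a,b)$ built from $A(T^b)$ and a torus-knot block, then do Fitting-ideal bookkeeping --- is the right one and is what the cited sources carry out. However, there is a genuine gap at the point where you assert that the coupling row $v(T)$ ``will not matter.'' It does. For the block-lower-triangular matrix
$$
\widetilde{A}(T) = \begin{pmatrix} A(T^b) & 0 \\ v(T) & \delta \end{pmatrix}, \qquad \delta = \Delta_{\mT(a,b)}(T),
$$
the $(n+1-k)\times(n+1-k)$ minors split into three classes, not two: (i) those using the last row and the last column, which contribute $\delta\cdot E_k(A(T^b))$; (ii) those avoiding the last row, which contribute $E_{k-1}(A(T^b))$ (or $0$ if they use the last column); and (iii) those using the last row but \emph{not} the last column. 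Expanding a minor of type (iii) along the row $v$ yields a $v$-weighted combination of $(n-k)\times(n-k)$ minors of $A(T^b)$, i.e.\ an element of $E_k(A(T^b))$ that is neither multiplied by $\delta$ nor of the larger size needed to land in $E_{k-1}(A(T^b))$. Your Laplace-expansion argument silently omits this class, and for arbitrary $v$ one only gets $E_k(\widetilde{A}) = \delta E_k(A(T^b)) + E_{k-1}(A(T^b)) + I_v$ with $I_v$ a priori just contained in $E_k(A(T^b))$, which can be strictly larger than the right-hand side of the proposition.

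The fix is to upgrade the block-triangular form to a block-\emph{diagonal} one, i.e.\ to prove (or quote) that the Alexander module of the cable decomposes as a \emph{direct sum} of the module of $K$ with $T \mapsto T^b$ and a cyclic module $\Z[T,T^{-1}]/(\Delta_{\mT(a,b)}(T))$ coming from the pattern, rather than merely as an extension. Fitting ideals are only sub-multiplicative for extensions, with equality for direct sums; this is exactly the difference between $v \neq 0$ and $v = 0$. That direct-sum statement is the real content of the Mayer--Vietoris/van Kampen analysis of the cable complement and is precisely what the cited references establish; once you have it, your elementary-ideal extraction goes through verbatim with $v = 0$ and the claimed identity follows. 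So either carry out that geometric step in detail or cite it explicitly, but do not dismiss the coupling row as irrelevant. (Your consistency checks are fine, apart from the slip that for trivial $K$ one gets $\tilde{E}_k(T) = \Z[T,T^{-1}]$ for $k \geq 2$, not $k \geq 1$.)
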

\begin{proof}
See \cite{le1972knots} and \cite[Proposition 10.5]{pellikaan1981knopen}.
\end{proof}

We have the following two corollaries of Proposition \ref{p-elem-torus-around-K}.

\begin{corollary}\label{c-alex-torus-around-K}
Let~$a,b$ be positive integers that are relatively prime. Then the Alexander polynomial of the $(a,b)$-torus knot around $K$ is given by
$$
\Delta_{K(a,b)}(t)=\Delta_{T(a,b)}(t)\Delta_K(t^b).
$$
\end{corollary}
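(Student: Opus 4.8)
The plan is to specialize Proposition~\ref{p-elem-torus-around-K} to $k=1$ and then read off the Alexander polynomial as the generator of the first elementary ideal. First I would substitute $k=1$ in the identity
$$
\tilde{E}_k(T) = \Delta_{\mT(a,b)}(T)E_k(T^b) + E_{k-1}(T^b),
$$
which yields $\tilde{E}_1(T) = \Delta_{\mT(a,b)}(T)E_1(T^b) + E_0(T^b)$.

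Next I would dispose of the two pieces on the right. Since $K$ is a knot, its coloring matrix $M(T)$ has all row sums equal to zero (the discussion after Definition~\ref{def:Foxmodule}), so $\det M(T)=0$ and hence $E_0(M(T))=0$; substituting $T\mapsto T^b$ leaves this zero, so $E_0(T^b)=0$. By Definition~\ref{def:alex}, $E_1(M_K(T))$ is the principal ideal $(\Delta_K(T))$, and substituting $T\mapsto T^b$ gives $E_1(T^b)=(\Delta_K(T^b))$. Multiplying a principal ideal by an element is again principal, so
$$
\tilde{E}_1(T)=\Delta_{\mT(a,b)}(T)\cdot(\Delta_K(T^b))=\bigl(\Delta_{\mT(a,b)}(T)\Delta_K(T^b)\bigr).
$$
Thus $\Delta_{K(a,b)}(T)$, being by definition the generator of $\tilde{E}_1(T)$ with positive constant term, equals $\Delta_{\mT(a,b)}(T)\Delta_K(T^b)$ up to the unit $\pm T^s$.

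To pin down that unit and finish, I would check the normalization directly: the formula of Proposition~\ref{p-alex-torus} gives $\Delta_{\mT(a,b)}(0)=1>0$ (the rational expression is a genuine polynomial with value $1$ at $T=0$), while $\Delta_K(T^b)$ has the same constant term as $\Delta_K(T)$, which is positive by the convention of Definition~\ref{def:alex}; hence the product $\Delta_{\mT(a,b)}(T)\Delta_K(T^b)$ already has positive constant term and is therefore literally $\Delta_{K(a,b)}(T)$. There is no substantial obstacle here, as the corollary is an immediate consequence of the (assumed) Proposition~\ref{p-elem-torus-around-K}; the only point requiring care is the $\pm T^s$ normalization, handled by the constant-term computation above.
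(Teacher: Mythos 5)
Your proof is correct and follows the same route as the paper: specialize Proposition~\ref{p-elem-torus-around-K} to $k=1$, use $E_0=(0)$, and read off the generator of the resulting principal ideal. The only addition is your explicit check that the product already has positive constant term (via $\Delta_{\mT(a,b)}(0)=1$), which the paper leaves implicit; that is a harmless and correct refinement, not a different argument.
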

\begin{proof}
$\Delta_{K(a,b)}(t)$ is a generator of the principal ideal $\tilde{E}_1(t)$,
$\Delta_K(t)$ is a generator of the principal ideal $E_1(t)$ in Proposition \ref{p-elem-torus-around-K}, and 
$\tilde{E}_1(t) = \Delta_{T(a,b)}(t)E_1(t^b) + E_0(t^b)$. This gives the desired result since $E_0(t) =(0)$.
\end{proof}

\begin{corollary}\label{c-dim-torus-around-K}
Let~$a,b$ be nonzero integers that are relatively prime such that $a$ is even and $b$ is odd. Let $p$ be a prime divisor of $b$.
Let $K$ be a knot, $D$ a diagram of $K$, and let~$k$ denote the dimension of $\mF_{D,t}$ over $\F_p$.
Then the code $\mF_{\tilde{D},t}$ of a diagram $\tilde{D}$ of $K(a, b)$ has dimension $k + 1$ over $\F_p$.
\end{corollary}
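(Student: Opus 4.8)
The plan is to reduce the whole statement to a computation with elementary ideals, using Proposition~\ref{prop:dimension} as the dictionary between code dimension and elementary ideals, and then to feed in the recursion of Proposition~\ref{p-elem-torus-around-K}. Throughout I would work with $t=-1$, which is the value forced by the hypotheses: $a$ even, $b$ odd, $p\mid b$ are precisely the conditions of case~(3) of Example~\ref{ex:torus}, so $\Delta_{\mT(a,b)}(-1)=b$ and hence $\Delta_{\mT(a,b)}(-1)\equiv 0\pmod p$. (Observe that $p$ is odd, since $p\mid b$ and $b$ is odd, so $-1$ is a unit in $\F_p$ and Fox $(\F_p,-1)$-colorings are genuinely at issue.)

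First I would record the specialization (base-change) behaviour of elementary ideals: for a ring homomorphism $\phi\colon R\to S$ and an $n\times n$ matrix $A$ over $R$, the ideal of $(n-k)$-minors of $\phi(A)$ is generated by the $\phi$-images of the $(n-k)$-minors of $A$, i.e.\ $E_k(\phi(A))=\phi(E_k(A))\,S$. Applying this to the evaluation map $\psi\colon\Z[T,T^{-1}]\to\F_p$, $T\mapsto -1$, and to the Alexander matrices of the diagrams at hand gives
\[
E_k\big(M_{\tilde{D}}(-1)\big)=\psi\big(\tilde{E}_k(T)\big)\,\F_p,\qquad E_k\big(M_{D}(-1)\big)=\psi\big(E_k(T)\big)\,\F_p,
\]
where I use that $E_k(T)$ and $\tilde{E}_k(T)$ are genuine knot invariants (Remark~\ref{r-alex}, Proposition~\ref{p-ReidemesterM}), so the conclusion will not depend on the chosen diagrams $D,\tilde{D}$.

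Next I would substitute $T=-1$ into the identity $\tilde{E}_k(T)=\Delta_{\mT(a,b)}(T)\,E_k(T^b)+E_{k-1}(T^b)$ of Proposition~\ref{p-elem-torus-around-K}. Since $b$ is odd, $(-1)^b=-1$, so the composite of the endomorphism $T\mapsto T^b$ with $\psi$ is again $\psi$; hence $\psi(E_j(T^b))=\psi(E_j(T))=E_j(M_D(-1))$ for every $j$. Combined with $\psi(\Delta_{\mT(a,b)}(T))=\Delta_{\mT(a,b)}(-1)=b=0$ in $\F_p$, the recursion collapses to
\[
E_j\big(M_{\tilde{D}}(-1)\big)=E_{j-1}\big(M_{D}(-1)\big)\qquad\text{for all } j\ge 1.
\]
Now set $k=\dim(\mC_{D,-1})$. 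By Proposition~\ref{prop:dimension}, $k$ is the least index with $E_k(M_D(-1))=\F_p$, and $E_j(M_D(-1))=0$ for $j<k$. The displayed identity then forces $E_j(M_{\tilde{D}}(-1))=\F_p$ exactly when $j\ge k+1$, so $k+1$ is the least index with $E_{k+1}(M_{\tilde{D}}(-1))=\F_p$; applying Proposition~\ref{prop:dimension} once more yields $\dim(\mC_{\tilde{D},-1})=k+1$.

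The only genuinely delicate point is the bookkeeping: one must verify that ``$E_k$ of $K$ evaluated at $-1$'' (which governs $\dim(\mC_{D,-1})$), ``$E_k(T^b)$ evaluated at $-1$'' (which appears inside the recursion), and the Fitting ideal $E_k(M_D(-1))$ of the numerical matrix over $\F_p$ all coincide as ideals of $\F_p$ — this is exactly where the oddness of $b$ is used — and that reduction modulo $p$ commutes with forming $(n-k)$-minors. Everything else is a mechanical substitution; no structural input about iterated torus knots is needed beyond Proposition~\ref{p-alex-torus}, Proposition~\ref{p-elem-torus-around-K}, and the dimension criterion of Proposition~\ref{prop:dimension}.
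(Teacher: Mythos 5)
Your proposal is correct and follows essentially the same route as the paper: both arguments combine Proposition~\ref{prop:dimension} (dimension as the least $k$ with $E_k=\F_p$), the vanishing $\Delta_{\mT(a,b)}(-1)=b\equiv 0\pmod p$ from Example~\ref{ex:torus}, and the substitution $T=-1$ (with $(-1)^b=-1$) into the recursion of Proposition~\ref{p-elem-torus-around-K} to shift the elementary ideals by one index. The paper's proof is terser and leaves the specialization bookkeeping implicit, whereas you spell it out via Proposition~\ref{p-elem-id-2}, but there is no substantive difference.
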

\begin{proof}
Let $E_l(t)$ be the $l$-th elementary ideal of $K$. Then  $E_l(-1)=\F_p$ if $l\geq k$, and $E_l(-1)=0$ if $l<k$ by Proposition \ref{prop:dimension}.
Since $p$ is a prime that divides~$b$, $a$ is even and~$b$ is odd, we have $\Delta_{T(a,b)}(-1)=0$ in $\F_p$ by Example \ref{ex:torus} and $(-1)^b=-1$.
Let $\tilde{E}_l(t)$ be the $l$-th elementary ideal of $(a,b)$-torus knot around~$K$.
Then  $\tilde{E}_l(-1)=\F_p$ if $l\geq k+1$, and $\tilde{E}_l(-1)=0$ if $l<k+1$, by Proposition \ref{p-elem-torus-around-K}.
Hence~$\mF_{\tilde{D}}$ has dimension $k+1$ by Proposition~\ref{prop:dimension}, as claimed.
\end{proof}

In the next example we show how to build codes using 
iterated torus knots around other knots.

\begin{example}
Let $p$ be an odd prime. A diagram of the iterated torus knot $K(2,p,\ldots ,2,p)$, where $K$ is the unknot and $(2,p,\ldots ,2,p)$ is the $m$-fold
repetition of $(2,p)$, gives a code over~$\F_p$ of dimension $m+1$ by Corollary \ref{c-dim-torus-around-K}. 
The recursive formula of the length of the code of~$K(2,p,\ldots ,2,p)$ is given by $n_1=3$, $n_{m+1} =4n_m+p$.
\end{example}

We conclude this subsection with the following crucial example.

\begin{example}\label{ex:min-dist-fox-dehn}
Let $b=2l+1$ be a positive odd integer for some $l$ and let $T(2,b)$ be the torus knot as given in Definition \ref{def:torusknot}. Consider its diagram depicted in Figure \ref{fig:2btorus}. This is a diagram with $b$ crossings where the upper left understrand is connected with the lower left overstrand, and the upper right overstrand is connected with the lower right understrand.
Denote the upper left understrand by $x_1$ and the upper right overstrand by $y_1$. 
Denote the strands by following the knot's orientation from 
the upper left understrand $x_1$ to the lower right understrand by~$x_1, x_2, \ldots ,x_{l+2}$, respectively.
Denote the strands following the knot's orientation from the
upper right overstrand $y_1$ to the lower left overstrand by $y_1, y_2, \ldots ,y_{l+1}$, respectively.
Then~$x_{l+2}=y_1$ and $y_{l+1}=x_1$, see again Figure \ref{fig:2btorus}.

\begin{figure}[h]
   \centering
   \begin{subfigure}{.4\textwidth}
        \centering
       \subcaptionbox{$T(2,2l+1)$.\label{fig:2btorus}}[.6\linewidth]
       {\includegraphics[width=3cm]{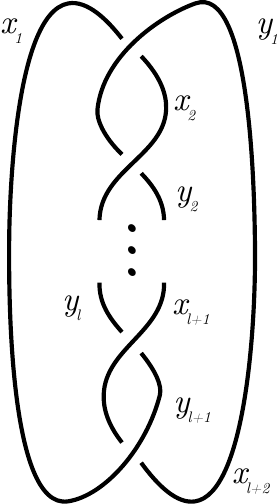}}
   \end{subfigure}
   \hspace{0.1\textwidth}
   \begin{subfigure}{.4\textwidth}
        \centering
        \subcaptionbox{A checkerboard coloring of $T(2,5)$.\label{fig:toruscheckerboard}}[.7\linewidth]
       {\includegraphics[width=2cm]{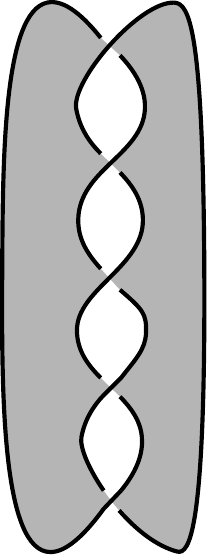}}
   \end{subfigure}
\caption{Diagram of $T(2,b)$ with $b$ is odd given in Example \ref{ex:min-dist-fox-dehn}.} 
\end{figure}

Let $p$ be a prime and suppose that the strands are Fox $(\F_p,-1)$ colored. 
Then we get by induction that $x_{i+1}=2iy_1-(2i-1)x_1$ and $y_{i+1}=(2i+1)y_1-2ix_1$.
So $x_{l+2}=y_1$ and~$y_{l+1}=x_1$ imply $by_1-bx_1=0$ in $\F_p$.
Hence $T(2,b)$ is  Fox $(\F_p,-1)$-colorable  if and only if $b$ is divisible by $p$.

If $b=p$, then we get a non-trivial coloring with~$x_{i+1}=2i$ and $y_{i+1}=2i+1$.
So all the strands have mutually distinct colors, which is in agreement with Theorem \ref{thm:conjecture}, since $\Delta_{T(2,p)} (-1)= p$ by Proposition \ref{p-alex-torus}. Furthermore, the Fox colorings have weight $1$ (all strands have color $0$), or $p$ (all strands have the same nonzero color), or $p-1$ for a non-trivial coloring, when the colorings are viewed as codewords as in Section \ref{sec:codes}. A checkerboard coloring (see Definition~\ref{def:checkerboard}) of the regions has~$p$ regions with color~$0$ (white in Figure \ref{fig:toruscheckerboard}), where the unbounded region is colored white, and two regions of nonzero color~(black in the figure). This gives a Dehn $(\F_p,-1)$-coloring of the diagram  of weight $2$. So the isomorphism of modules as mentioned in Proposition \ref{prop:fox2dehn} sends a word of weight $2$ to a word of weight $p$. Hence the isomorphism is not an isometry if $p>3$. 
\end{example}

\subsection{Pretzel Knots}
\label{subsec:pretzel}

In this subsection we prove that also pretzel knots codes can be used to construct codes with prescribed dimension. Moreover, we study the error correction capability of these codes. Starting from knots, one can create larger objects called \textit{links}.

\begin{definition}
\label{def:link}
Let $n \in \Z_{\ge 1}$. A \textbf{link} $L = \{K_1,\ldots, K_n\}$ is a finite collection of knots such that $K_i \cap K_j = \emptyset$ for all  $i,j \in \{1,\ldots,n\}$ with $i \neq j$. Each of the constituent knots is a \textbf{component} of the link. In particular, a \textbf{polygonal link} is a link each of whose component is a polygonal knot. 
\end{definition}

Since we only consider polygonal knots in this paper, we only consider polygonal links and simply write \textbf{link} for those.

\begin{figure}[h]
   \centering
    \includegraphics[width=3cm]{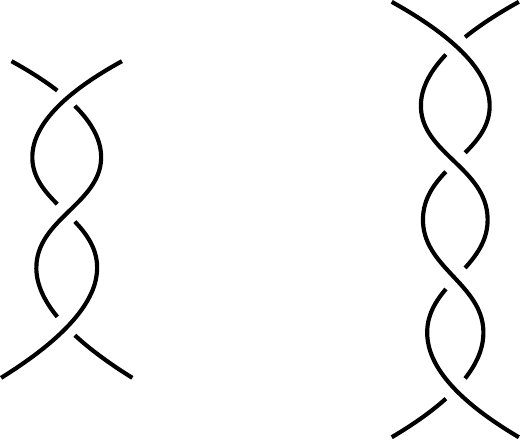}
    \caption{A (3)-crossing twist and a (-4)-crossing twist.}
    \label{fig:twists}
\end{figure}

Next, we informally define pretzel links following \cite{kawauchi1996survey}.

\begin{definition}
\label{def:pretzelink}
A \textbf{twist} is a part of a knot diagram consisting of two strands and at least a crossing such that all the crossings are obtained using both strands together, and it is of the form depicted in in Figure~\ref{fig:twists}. A twist with $|b| \in \Z_{>0}$ crossings is called a $(|b|)$-crossing twist if the top right strand is an overstrand, and is called a $(-|b|)$-crossing twist if the top right strand is an understrand. Let $p_1,\ldots,p_m$ be nonzero integers for some $m \in Z_{> 0}$. A \textbf{pretzel link} is a link with its diagram depicted as in Figure~\ref{fig:pretzellink}, where each rectangle denotes a twist with $|p_i|$ crossings. We denote this object by $P(p_1,\ldots,p_m)$. It is obtained when multiple twists are placed next to each other, where for each pair of neighboring strands the top and bottom right strands of the left twist are connected to the top and bottom left strands of the right twist, respectively, and the the top and bottom left strands of the leftmost twist are connected to the top and bottom right strands of the rightmost twist, respectively. 
\end{definition}

\begin{figure}[ht]
    \centering
    \includegraphics[width=5cm]{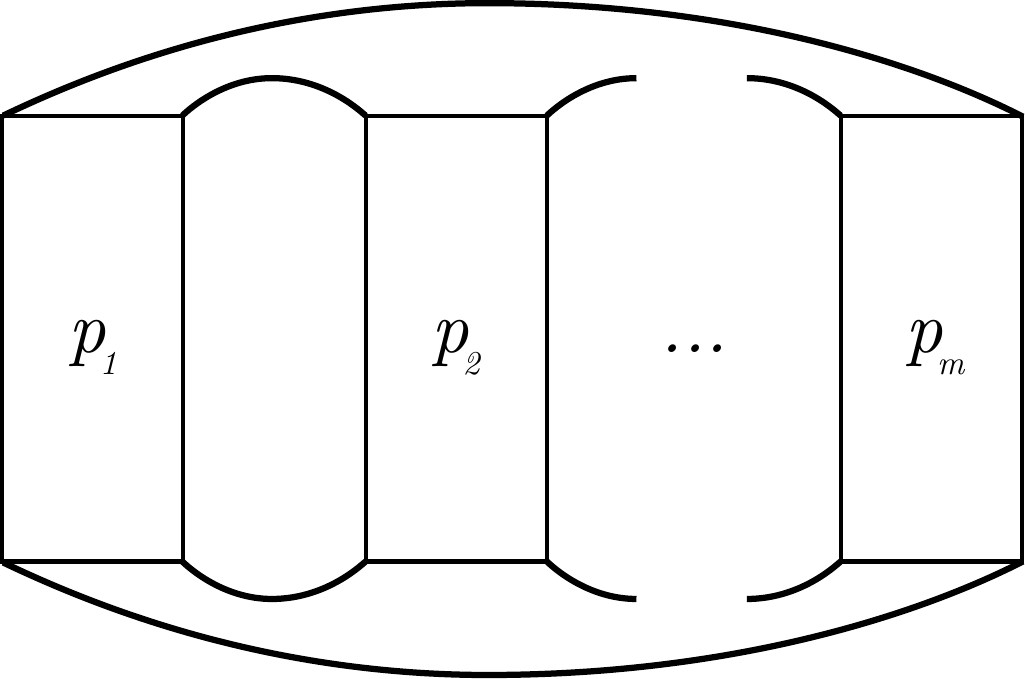}
    \caption{A general diagram of a pretzel link.}
    \label{fig:pretzellink}
\end{figure}

The sufficient and necessary condition when a pretzel link is a knot is proven in \cite{kawauchi1996survey}. 
\begin{proposition}
\label{lem:knot}
A pretzel link $P(p_1,\ldots,p_m)$ is a knot if and only if $m$ and $p_i$ are odd integers for all $i \in \{1,\ldots,m\}$, 
or $m \ge 1$ and exactly one of the the $p_i$ is even.
\end{proposition}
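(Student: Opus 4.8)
The statement is a combinatorial characterization of when the pretzel link $P(p_1,\ldots,p_m)$ is connected, so the natural approach is to trace through the diagram in Figure~\ref{fig:pretzellink} and count components directly. The plan is to set up a parity bookkeeping argument: each twist with $|p_i|$ crossings acts on its two incoming strands at the top-left and bottom-left and produces two outgoing strands at the top-right and bottom-right; the key observation is that a twist ``swaps'' the two strands if $p_i$ is odd and ``keeps'' them parallel (no swap) if $p_i$ is even. I would make this precise by viewing each twist as inducing a permutation $\tau_i \in S_2$ on the pair $\{\text{top},\text{bottom}\}$ of strand-ends, with $\tau_i$ the transposition when $p_i$ is odd and the identity when $p_i$ is even. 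Concatenating all $m$ twists cyclically (because of the closure identifications described in Definition~\ref{def:pretzelink}) and then counting orbits gives the number of link components.

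First I would formalize the ``local wiring'' claim: in a $(|b|)$-crossing twist on two strands $u,v$ with $u$ entering at the top-left and $v$ at the bottom-left, the strand exiting at the top-right equals $u$ when $|b|$ is even and equals $v$ when $|b|$ is odd, and symmetrically for the bottom-right. This is a direct induction on $|b|$ using the definition of a twist (each additional crossing exchanges which of the two strands is on top), and the sign of $b$ is irrelevant to connectivity. Next I would assemble the global picture: label the top rail and bottom rail of the whole diagram and follow a strand around. Because the top-right/bottom-right ends of twist $i$ are glued to the top-left/bottom-left ends of twist $i+1$ (indices mod $m$), the number of components of $P(p_1,\ldots,p_m)$ is the number of cycles of the permutation of the $2$-element set obtained by composing $\tau_1,\ldots,\tau_m$ — except one must be careful, since a strand traversing the whole cycle of twists and returning can either close up into one or two loops depending on the total parity. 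Concretely, if an even number of the $p_i$ are odd, the composite permutation is the identity, so following the top rail returns to the top rail and we risk getting two components (the ``top'' loop and the ``bottom'' loop); if an odd number of the $p_i$ are odd, the composite is the transposition, the top rail flows into the bottom rail, and everything closes into a single loop.

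So the two cases to analyze are: (i) all $p_i$ odd with $m$ odd — then there are $m$ transpositions composed, $m$ odd gives a transposition, hence one component; (ii) exactly one $p_i$ even and all others odd — then $m-1$ (one fewer than... no: $m-1$ of the twists are odd-type) transpositions; for a single component we need $m-1$ odd, i.e. $m$ even, OR we need to recount. Here is where I would be most careful, and this is the \textbf{main obstacle}: getting the parity conditions in the proposition ($m$ and all $p_i$ odd; \emph{or} exactly one $p_i$ even, with no constraint on $m$ in that branch) to fall out exactly requires tracking not just the composite $S_2$-permutation but the actual identification pattern at closure — a strand can pass through the \emph{same} twist twice (once along the top, once along the bottom) before closing, so the naive ``count cycles of $\tau_1\cdots\tau_m$'' undercounts. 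The clean fix is to work with the permutation on the \emph{four} local ends per twist, or equivalently to directly trace: start on the top-left of twist $1$, walk right through all twists applying the local wiring rule, close up at twist $m$'s right side back to twist $1$'s left side, and see whether one pass exhausts all strands. I would do this trace explicitly in each of the two regimes and in the complementary regime (at least two $p_i$ even, or $m$ even with all $p_i$ odd), showing the complementary regime always produces at least two components (it decomposes into an unlink-like or split structure), which completes the ``only if'' direction.

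Finally I would invoke that the diagram is a diagram of a \emph{link} by Definition~\ref{def:link} and that having exactly one component is precisely the definition of (the diagram representing) a knot, so the component count computed above gives the claimed equivalence. The routine parts — the induction on $|b|$ for the local wiring, and the explicit strand trace in the three regimes — I would carry out in full but compress in the writeup; the conceptual content is entirely the parity/permutation bookkeeping, and the one subtlety worth flagging to the reader is the double-traversal-of-a-twist issue that makes the bare $S_2$ heuristic insufficient.
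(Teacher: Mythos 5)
The paper itself does not prove this proposition; it only cites Kawauchi's book, so there is no in-paper argument to compare against. Your overall strategy---count the components of the diagram by tracing strands through the twist regions and reducing everything to the parities of the $p_i$---is the right, and essentially the only elementary, approach. But as written the proposal has a genuine gap at its central step: your local wiring lemma is set up for the wrong geometry, and your own computation in case (ii) already contradicts the statement being proved.

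Concretely, you model each twist as routing its two \emph{left} ends to its two \emph{right} ends (swapping top and bottom iff $p_i$ is odd). With the closure identifications of Definition~\ref{def:pretzelink}, this turns the diagram into a necklace whose component count is exactly the number of cycles of the composite permutation $\tau_m\cdots\tau_1\in S_2$; no strand ever passes through a twist twice ``before closing,'' so the undercounting you flag as the main obstacle does not occur in your model---the model itself simply describes a different link. Your criterion would be ``knot iff an odd number of the $p_i$ are odd,'' which fails for example on $P(2,3,3)$: it has exactly one even parameter, hence is a knot by the proposition, yet it has an even number of odd parameters. The actual geometry of a pretzel link (see Figures~\ref{fig:pretzellink} and~\ref{fig:3235pretzel}, and the strand labelling in the proof of Proposition~\ref{prop:pretzmind}) has \emph{vertical} twists: each twist joins its two top ends to its two bottom ends, swapping iff $p_i$ is odd, while the inter-twist arcs run along the top (top-right of twist $i$ to top-left of twist $i+1$) and along the bottom. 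The correct bookkeeping object is therefore the $2$-regular graph on the $2m$ connecting arcs $A_1,\dots,A_m$ (top) and $B_1,\dots,B_m$ (bottom), in which an even twist $i$ contributes the edges $A_{i-1}B_{i-1}$ and $A_iB_i$ while an odd twist contributes $A_{i-1}B_i$ and $A_iB_{i-1}$ (indices modulo $m$). Counting cycles of this graph gives $e$ components when $e\ge 1$ of the $p_i$ are even, and $1$ or $2$ components according to the parity of $m$ when all $p_i$ are odd---which is exactly the stated criterion. So the fix is not ``track four ends per twist'' in the abstract, but to replace your left-to-right routing by the top-to-bottom routing; with that correction your tracing argument goes through.
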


Next, we show that codes with any dimension can be constructed using pretzel knots.

\begin{theorem}(see \text{\cite[Theorem 17]{kolay2023}})
\label{thm:pretzeldim}
Let $D$ be a diagram of the pretzel knot $P(p_1, p_2, ..., p_m)$.
Let $q$ be a power of the prime $p$.
\begin{enumerate}
    \item If $p_i$ is coprime with $q$ for all $i \in \{1,\ldots,m\}$, 
then the dimension of a Fox knot code $\mF_D$ over~$\F_q$ is given by
$$
\dim(\mF_D) = 
\begin{cases}
2 & \text{if} \ \ p \mid \Delta_K(-1),\\
1 & \text{otherwise}.
\end{cases}
$$
\item If there exists a~$p_i$ that is not coprime with $q$ for some $i \in \{1,...,m\}$, then the dimension of $\mF_D$ over $\F_q$ is 
$|\{i \mid \gcd(p_i, q) \neq 1,\, i \in \{1, \ldots, m\} \}|$.
\end{enumerate}

\end{theorem}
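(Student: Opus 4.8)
The plan is to write down the Fox $(\F_q,-1)$-coloring matrix of the standard diagram $D$ of $P(p_1,\dots,p_m)$ explicitly and compute the rank of its reduction mod $p$. First I would set up notation for the diagram in Figure~\ref{fig:pretzellink}: label by $a_0,a_1,\dots,a_m=a_0$ the $m$ arcs running along the top of the tangles (one arc between consecutive twist regions) and similarly $b_0,\dots,b_m=b_0$ along the bottom, plus the internal strands of each $(p_i)$-twist. Inside a single $(|p_i|)$-crossing twist, the coloring relations telescope exactly as in Example~\ref{ex:min-dist-fox-dehn}: iterating \eqref{eq:Fox} with $t=-1$ through the $|p_i|$ crossings of the $i$-th tangle expresses every internal strand as an integer affine combination of the two endpoint colors, and the ``exit'' relation of the tangle reads, after simplification,
\begin{equation*}
a_i - a_{i-1} = p_i\,(a_{i-1}-b_{i-1}) \quad\text{(up to sign conventions depending on the twist's type)},
\end{equation*}
together with a companion relation on the bottom strands; the key point is that the combinatorics of the twist forces the coefficient $p_i$ to appear as the multiplier. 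Eliminating the internal strands, the whole system on the $2m$ ``boundary'' colors $a_i,b_i$ reduces to a circulant-like system whose essential content is: the differences $a_i-b_i$ are all equal (call the common value $d$), and $p_i d = a_i - a_{i-1}$ for each $i$, with the cyclic closure condition $\sum_i (a_i - a_{i-1}) = 0$, i.e. $\bigl(\sum_i p_i\bigr) d = 0$ in $\F_q$. One checks $\Delta_K(-1) = |{\sum_i p_i}|$ for the all-odd case (and the analogous value when one $p_i$ is even), consistent with Proposition~\ref{prop:determinant}.

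For part (1), when every $p_i$ is coprime to $q$, each relation $p_i d = a_i-a_{i-1}$ determines $a_i$ from $a_{i-1}$ and $d$, so the solution space is parametrized by $(a_0,d)$ subject only to $\bigl(\sum p_i\bigr)d=0$; hence $\dim\mC_D=2$ if $p\mid\sum p_i$ and $\dim\mc_D=1$ otherwise — but I must be careful that the trivial $1$-dimensional repetition subcode (Remark~\ref{rem:dim1}) is exactly the $d=0$ locus, which it is. Using Proposition~\ref{prop:dimension} together with Proposition~\ref{p-alex-torus}-style facts about the second elementary ideal (here I would instead invoke Proposition~\ref{prop:determinant} directly, since the Singleton-type argument already pins the dimension to $1$ or $2$) closes this case. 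For part (2), if $\gcd(p_i,q)\neq1$ for $i$ in some nonempty set $S$, then $p_i\equiv 0$ is no longer invertible, and the relation at the $i$-th tangle for $i\in S$ degenerates: it no longer forces $a_i=a_{i-1}$, giving one extra free parameter per such $i$, while the closure condition becomes $\sum_{i\notin S}p_i\,d = 0$ combined with $p_i d=0$ for $i\in S$; since at least one $p_i$ with $i\notin S$ (if $S\neq\{1,\dots,m\}$) is a unit one deduces $d=0$, collapsing the count to exactly $|S|$ free parameters — and when $S=\{1,\dots,m\}$ one gets $d$ free plus the $a_i$ constrained, again totalling $|S|$. I would organize this as a rank computation: the coloring matrix, after the internal-strand elimination, is an $m\times(m+1)$ matrix whose $(i,i)$ and $(i,i{-}1)$ entries encode $p_i$, and its corank over $\F_q$ is read off from how many rows vanish mod $p$.

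The main obstacle I expect is getting the twist-elimination bookkeeping exactly right — in particular tracking the over/understrand conventions of Definition~\ref{def:coloringMat} through a $(p_i)$-twist versus a $(-p_i)$-twist, and confirming that the induction inside the tangle really produces the clean relation $p_i d = a_i - a_{i-1}$ with no stray sign or off-by-one error, and that the global cyclic closure produces precisely $\sum p_i$ (this is where the hypothesis in Proposition~\ref{lem:knot} that forces the diagram to be a single knot rather than a link is used, so that the boundary arcs close up into one cycle). A secondary subtlety is making sure the ``trivial colorings'' contribute exactly the expected $1$-dimensional (Fox) subspace in every case so that the dimension counts are not off by one; this should follow from Remark~\ref{rem:dim1} and the observation that the trivial coloring is the $a_0$-parameter with $d=0$. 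Once the reduced matrix is in hand, both parts are a one-line rank argument, so essentially all the work is in the reduction, which is why I would cite \cite{kolay2023} for the detailed computation and present here only the structural outline above.
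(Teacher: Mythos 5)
The paper itself offers no proof of Theorem~\ref{thm:pretzeldim}; it is imported wholesale from \cite[Theorem 17]{kolay2023}, so the only thing to assess is your sketch on its own terms. Your overall strategy --- eliminate the internal strands of each twist and read the dimension off the corank of the resulting small system on the boundary arcs --- is the right one. But the reduced system you write down is incorrect, and the problem is not the sign ambiguity you hedge against: it is which quantity the factor $p_i$ multiplies. Telescoping the Fox relations down a $p_i$-crossing twist, exactly as in Example~\ref{ex:min-dist-fox-dehn} (where the bottom pair of a $b$-crossing twist is the top pair shifted by $b\,\delta$, with $\delta$ the difference of the two top colors), yields $(\text{vertical difference across twist } i)=\pm\,p_i\,(a_i-a_{i-1})$; you have inverted this to $a_i-a_{i-1}=p_i\,(a_{i-1}-b_{i-1})$. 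With the correct relations the common vertical difference $d$ satisfies $p_i(a_i-a_{i-1})=\pm d$ for every $i$, and the cyclic closure $\sum_i(a_i-a_{i-1})=0$ becomes $\bigl(\sum_i\pm\prod_{j\neq i}p_j\bigr)d=0$, consistent with the classical formula $|\Delta_K(-1)|=\bigl|\sum_i\prod_{j\neq i}p_j\bigr|$ and with the paper's own Example~\ref{ex:pretzel}, where $P(3,2,3,5)$ has determinant $123=2\cdot3\cdot5+3\cdot3\cdot5+3\cdot2\cdot5+3\cdot2\cdot3$. Your closure condition $\bigl(\sum_i p_i\bigr)d=0$ and your claim $\Delta_K(-1)=\bigl|\sum_i p_i\bigr|$ would give $13$ for that knot, which is false.

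The inversion also breaks part (2), and your own text exposes the inconsistency: from $a_i-a_{i-1}=p_id$, a twist with $p\mid p_i$ forces $a_i=a_{i-1}$ and frees nothing, so your system can never produce dimension $|S|$ for $S=\{i\mid\gcd(p_i,q)\neq1\}$; you assert that the relation ``degenerates'' and donates a free parameter, but with your equation it does the opposite. With the correct relation $p_i(a_i-a_{i-1})=\pm d$, a twist with $p\mid p_i$ forces $d=0$ and leaves $a_i-a_{i-1}$ unconstrained, the unit coefficients then force $a_j=a_{j-1}$ for $j\notin S$, and the free parameters are $a_0$ together with the $|S|$ differences indexed by $S$ subject to the single constraint that they sum to zero --- giving dimension $|S|$ as claimed. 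So the skeleton of your argument is salvageable, but the central identity it rests on must be corrected before either part goes through; as written, the sketch proves a different (and false) statement.
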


We give an example to show an application of Theorem \ref{thm:pretzeldim}.

\begin{figure}[H]
    \centering
    \includegraphics[width=9cm]{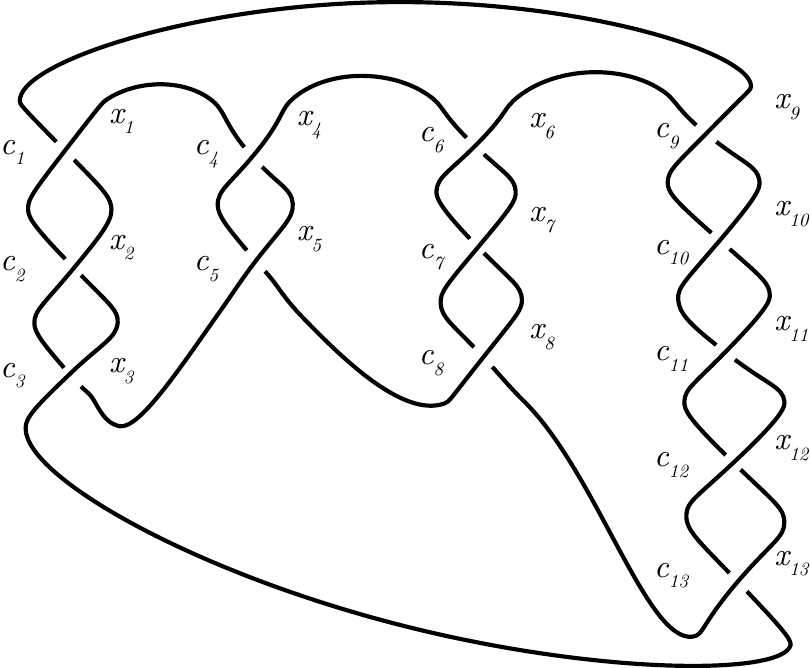}
    \caption{Diagram of the $P(3,2,3,5)$ pretzel knot.}
    \label{fig:3235pretzel}
\end{figure}

\begin{example}
\label{ex:pretzel}
The $P(3,2,3,5)$ pretzel knot is depicted in Figure \ref{fig:3235pretzel} has determinant $123 = 3 \cdot 41$, which means it is non-trivially colorable over $\mathbb{F}_3$ and $\mathbb{F}_{41}$ by Proposition \ref{prop:determinant}. By the second part of Theorem~\ref{thm:pretzeldim}, we then find that the code obtained from the colorings over $\mathbb{F}_3$ has dimension~2. By the first part of Theorem~\ref{thm:pretzeldim}, we also have that the code obtained from the colorings over $\mathbb{F}_{41}$ has dimension 2, as well.
\end{example}

For some special family of pretzel knots, we can determine the exact code parameters of the corresponding code. Computing the minimum distance in the general case seems to be a hard task.

\begin{proposition}
\label{prop:pretzmind}
Let $p$ be an odd prime and $D$ be a diagram of $P(p_1,\ldots ,p_m)$ with $p_i=p$ for all $i \in \{1,\ldots,m\}$. Then,  
$\mF_D$ is a $[pm,m,2p-2]_p$ code.
\end{proposition}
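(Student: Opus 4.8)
The plan is to write down an explicit parity check matrix for the pretzel knot $P(p,\ldots,p)$ ($m$ copies) and read off all three parameters from it. First I would set up coordinates for the diagram of Figure~\ref{fig:pretzellink}: the $i$-th twist has $p$ crossings, hence $p$ strands, for a total of $n=pm$ strands, which matches the length claimed. I would label the strands of the $i$-th twist as $x_{i,1},\ldots,x_{i,p}$ following the orientation through the twist, so that the boundary strands of consecutive twists are identified according to Definition~\ref{def:pretzelink}. Inside a single $(p)$-crossing twist with $t=-1$, the Fox coloring rule $c=-a+2b$ forces (as in the computation of Example~\ref{ex:min-dist-fox-dehn}) an arithmetic-progression behavior: if the two entering strands carry colors $u$ and $v$, then after $j$ crossings the two strand-colors are affine combinations $ju+(1-j)v$ and similar, so that passing through the whole twist acts on the pair $(u,v)$ by the matrix $\left(\begin{smallmatrix}1-p & p\\ -p & 1+p\end{smallmatrix}\right)$, which reduces to the identity modulo $p$. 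This already shows heuristically why the code is $2$-dimensional: over $\F_p$ each twist is "transparent", and the colorings are governed by the two global parameters entering the first twist, subject to one cyclic closure relation — but that relation is automatically satisfied mod $p$, so we expect a $2$-parameter family, consistently with Theorem~\ref{thm:pretzeldim}(2) since $\gcd(p_i,p)=p\neq 1$ for all $i$ and there are $m$ indices... wait, that would give dimension $m$, not $2$.

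Let me reconsider: Theorem~\ref{thm:pretzeldim}(2) gives $\dim(\mathcal{C}_D)=|\{i:\gcd(p_i,q)\neq 1\}|=m$ when every $p_i=p$ and $q=p$. So the dimension is $m$, matching the claim $[pm,m,2p-2]_p$. Good — so the structural picture is instead that \emph{within} each twist the $p$ strand-colors, working mod $p$, are forced to be $x_{i,1}, x_{i,1}+\delta_i, x_{i,1}+2\delta_i,\ldots$ for a twist-local "slope" $\delta_i$, and the global identifications plus the single vertex at the top/bottom of the pretzel impose $m-1$ independent linear conditions relating the $2m$ parameters $(x_{i,1},\delta_i)$, leaving an $m$-dimensional solution space. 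I would make this precise by exhibiting $m$ explicit linearly independent colorings: for each $i$, the coloring that is the trivial all-$1$ coloring "twisted" only in the $i$-th block (this uses $p\equiv 0$), plus the global all-constant coloring; checking independence of the resulting weight pattern is routine linear algebra. So the key steps are: (i) parametrize colorings via twist-slopes; (ii) count the closure relations to get $\dim=m$ (or simply invoke Theorem~\ref{thm:pretzeldim}(2)); (iii) analyze weights to get $d=2p-2$.

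For the minimum distance, the upper bound $d\le 2p-2$ I would get from an explicit codeword: take the coloring that is trivial on all twists except one, where the $p$ strands receive the $p$ distinct nonzero-and-zero residues $0,1,2,\ldots,p-1$ in arithmetic progression — as computed in Example~\ref{ex:min-dist-fox-dehn}, exactly one strand per "side" can be forced to $0$ at the junctions, so such a codeword has $p-1$ nonzero strands in each of the two "rows" of that twist while all other strands are $0$, giving weight $2(p-1)=2p-2$. For the lower bound $d\ge 2p-2$, I would argue that any nonzero codeword $c$ restricted to a twist in which it is not identically zero must be a nonconstant arithmetic progression on each of the two rows of $p$ strands (since a constant progression propagates and would force $c$ to agree with a trivial coloring globally), and a nonconstant arithmetic progression of length $p$ over $\F_p$ hits the value $0$ at most once per row, hence contributes at least $2(p-1)$ nonzero coordinates from that one twist alone; combined with $d\ge 2$ from Proposition~\ref{prop:minimumdistance} this gives $d\ge 2p-2$. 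The main obstacle I anticipate is the lower bound bookkeeping: one must rule out a nonzero codeword that is "nearly trivial" on every twist (constant on each row of every twist) yet globally nonzero — I expect this is impossible because the junction relations at the top/bottom vertices and the identifications force such a codeword to be globally constant, hence trivial, hence zero; pinning down that the only row-constant codewords are the genuine trivial colorings (the repetition code, which is the zero codeword's coset... rather, which does lie in $\mathcal{C}_D$ and has weight $n=pm>2p-2$ for $m\ge 2$, and for $m=1$ the whole claim reduces to Example~\ref{ex:min-dist-fox-dehn}) is the delicate point.
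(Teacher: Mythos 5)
The length and dimension parts of your proposal are fine and coincide with the paper's route (both are read off from Theorem~\ref{thm:pretzeldim}), but your minimum-distance argument contains a genuine error that breaks both the upper and the lower bound. A $(p)$-crossing twist owns exactly $p$ strands of the diagram --- that is precisely why the length is $pm$ --- not ``two rows of $p$ strands'' each; its two sides have $l+2$ and $l+1$ strands with $p=2l+1$, two of which are shared with the neighbouring twists. Consequently a single twist can never contribute $2(p-1)>p$ nonzero coordinates ``on its own'', so your lower-bound count is impossible as stated, and the codeword you propose for the upper bound (nontrivial on one twist, zero elsewhere) does not exist: if every other twist is identically zero, the boundary identifications force the remaining twist to be zero as well, since a twist vanishes exactly when both of its entering colors $x_{i,1}$ and $x_{i+1,1}$ vanish.

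The paper's proof supplies exactly the missing two-twist bookkeeping that you flag as ``the delicate point'' but do not resolve. Parametrizing codewords by the entering colors $x_{1,1},\ldots,x_{m,1}$ (one per twist), one shows via the arithmetic-progression computation of Example~\ref{ex:min-dist-fox-dehn} that each twist with at least one nonzero entering color has at least $p-1$ nonzero strands among the $p$ strands it owns. A nonzero codeword has some $x_{i,1}\neq 0$, and this single parameter enters \emph{two} adjacent twists (the $i$-th and the $(i-1)$-th), whose owned strand sets are disjoint; this yields $d\geq 2(p-1)$. For the upper bound one takes $x_{1,1}=1$ and $x_{i,1}=0$ otherwise, giving a codeword that is nonconstant on exactly two adjacent twists with precisely $p-1$ nonzero strands in each, hence of weight $2p-2$. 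Your intuition about arithmetic progressions hitting zero at most once is the right local ingredient, but without the correct per-twist strand count and the observation that every block parameter feeds two blocks, the argument does not close.
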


\begin{proof}
The statement about the dimension is already shown in Theorem \ref{thm:pretzeldim}.
Consider the numbering of the strands of the diagram $D$ analogous to Example \ref{ex:min-dist-fox-dehn}.
Let $p=2l+1$ for some $l \in \Z$.
For $i \in \{1,\ldots,m\}$, number the strands in the $i$-th block as follows. 
Following the knot's orientation, label the strands from the upper left $x_{i,1}$ to the lower right by $x_{i,1}, x_{i,2}, \ldots ,x_{i,l+2}$.
Following the knot's orientation, label
the strands from the
upper right $y_{i,1}$ to the lower left by $y_{i,1}, y_{i,2}, \ldots ,y_{i,l+1}$.
The strand $y_{i,1}$ is equal to $x_{i+1,1}$ for all~$i \in \{1,\ldots,m-1\}$, and $y_{m,1}$ is equal to $x_{1,1}$.
The strand $y_{i,l+1}$ is equal to $x_{i-1,l+2}$ for all~$i \in \{2,\ldots,m\}$, and $y_{1,l+1}$ is equal to $x_{m,l+2}$.

The $i$-th block consist of $2l+1=p$ strands
$\smash{x_{i,1}, x_{i,2}, \ldots ,x_{i,l+2}}$ and $\smash{y_{i,2}, \ldots ,y_{i,l}}$, since $\smash{y_{i,1}=x_{i+1,1}}$ and $\smash{y_{i,l+1}=x_{i-1,l+2}}$.
The values of $\smash{x_{1,1}, x_{2,1}, \ldots ,x_{m,1}}$ determine all the values of the other strands due the our assigning above.
Now $x_{i,1}=0$ and $x_{i+1,1}=0$ if and only if all the strands in the~$i$-th block have value zero.
If $x_{i,1}\not=0$, then at least $p-1$ strands of the $i$-th block and~$p-1$ strands of the $(i-1)$-th block have nonzero value as shown in Example~\ref{ex:min-dist-fox-dehn}.
Hence the weight of a nonzero codeword is at least $2(p-1)$.

Choosing $x_{1,1}=1$ and $x_{i,1}=0$ for all $i\neq 1$ gives a codeword of weight $2(p-1)$.
Hence~$\mF_D$ has indeed minimum distance $2(p-1)$ and rate $R=m/pm= 1/p$.
\end{proof}

\section{Knot Graphs and Their Codes}
\label{sec:graph}

Starting from Tait diagram of knots one can construct graphs, see \cite{kauffman1983,kauffman1987,kauffman1988,kauffman1989,kauffman1991,kauffman2012}. We assume that the reader is familiar with basic concepts in graph theory, see \cite{west2001introduction} as a reference. 

\begin{definition}
\label{def:signedgraph}
Let $D$ be a Tait diagram of a knot, and $D^*$ be equal to $D$ with the interchanged checkerboard coloring. The \textbf{black graph} of $D$ is the planar graph $\Gamma _D$ whose vertices are the black regions of $D$. There is an edge between two vertices if the black regions in the Tait diagram corresponding to these vertices have a crossing in their common boundaries. Similarly, $\Gamma _{D^*}$ is called the \textbf{white graph} of $D$. The graphs can be made directed by choosing the direction from the region without a dot to the region that has a dot near the crossing in their common boundary. See Figure \ref{fig:signedgraph} for illustration.
\end{definition}

\begin{figure}[ht!]
    \centering
\subcaptionbox{Tait Diagram of Figure \ref{fig:trefoil}.}
[.3\linewidth]{\begin{tikzpicture}[use Hobby shortcut,
every trefoil component/.style={thick, draw}, pics/arrow/.style={code={%
  \draw[line width=0pt,{Computer Modern Rightarrow[line
  width=1pt,width=3ex,length=2ex]}-] (-0.5ex,0) -- (0.5ex,0);
  }}]
\path[spath/save=trefoil] ([closed]90:2) foreach \k in {1,...,3} {
.. (-30+\k*240:.5) .. (90+\k*240:2) } (90:2);
\tikzset{spath/knot={trefoil}{0pt}{1,3,5}}
\node[text=black] at (0,2) {$\boldsymbol{>}$};

\node[text=black] at (0.9,0.5) {\tiny{$\bullet$}};
\node[text=black] at (0.6,0.6) {\tiny{$\bullet$}};

\node[text=black] at (0.2,-0.8) {\tiny{$\bullet$}};
\node[text=black] at (0,-1) {\tiny{$\bullet$}};

\node[text=black] at (-0.9,0.5) {\tiny{$\bullet$}};
\node[text=black] at (-0.8,0.2) {\tiny{$\bullet$}};
\end{tikzpicture}
}
 \hspace{.03\textwidth}
\subcaptionbox{Black directed graph.}
[.3\linewidth]{
\begin{tikzpicture}
\tikzset{nnode/.style = {shape=circle,fill=myg,draw,inner sep=1.5pt, minimum
size=0.2em}}
\tikzset{edge/.style = {->,> = stealth}}

\node[nnode] (S1) {};
\node[shape=coordinate,right=0.5\mynodespace of S1] (K) {};
\node[nnode,right=\mynodespace of S1] (S2) {};
\node[nnode,above=0.8\mynodespace of K] (S3) {};

\draw[edge,bend left=0] (S3)  to node{} (S1);

\draw[edge,bend left=0] (S1)  to node{} (S2);

\draw[edge,bend right=0] (S2)  to node{} (S3);

\end{tikzpicture}}
 \hspace{.03\textwidth}
\subcaptionbox{White directed graph.}
[.3\linewidth]{\begin{tikzpicture}
\tikzset{nnode/.style = {shape=circle,fill=myg,draw,inner sep=1.5pt, minimum
size=0.2em}}
\tikzset{edge/.style = {->,> = stealth}}

\node[nnode] (S1) {};
\node[nnode,right=\mynodespace of S1] (S2) {};

\draw[edge,bend left=0] (S1)  to node{} (S2);

\draw[edge,bend left=40] (S1)  to node{} (S2);

\draw[edge,bend right=40] (S1)  to node{} (S2);

\end{tikzpicture}}
\caption{The black and white directed graphs of the Tait diagram of the oriented trefoil knot depicted in Figure \ref{fig:trefoil}. We consider the checkerboard coloring where the outside region is colored with white. \label{fig:signedgraph}}
\end{figure}
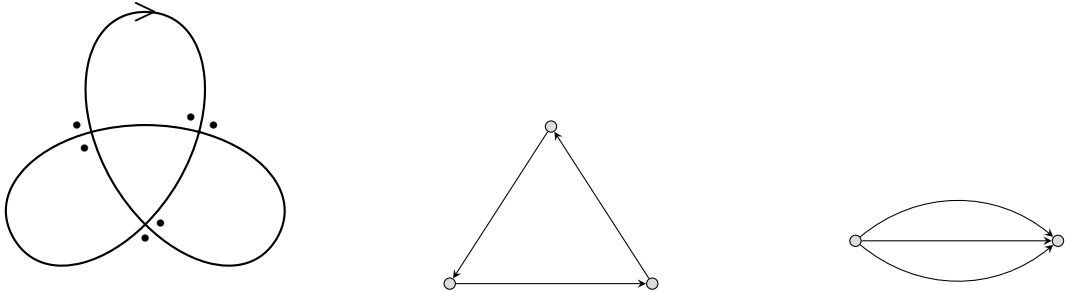

We define \textit{graph codes} from such directed graphs using their incidence matrices.

\begin{definition}
\label{def:graphcode}
Let $\Gamma$ be a directed graph, $v_1,\ldots,v_m$ be an enumeration of the vertices of the graph, and $e_1,\ldots ,e_n$ be an enumeration of the edges of the graph. Let $R$ be a ring and let $t\in R$ be an invertible element of $R$. Define $A(t)$ be the $m \times n$ matrix with entries:
$$
A(t)_{ij} = 
\begin{cases}
1 & \text{if} \ \ e_j \textnormal{ is an outgoing edge of } v_i,\\
t & \text{if} \ \ e_j \textnormal{ is an ingoing edge of } v_i,\\
0 & \text{otherwise}.
\end{cases}
$$
Then, the $R$-linear code with parity check matrix $A(t)$ is denoted by $\mC_{\Gamma,t}$.
\end{definition}

\begin{remark}\label{r-A(t)}
If $t=-1$ in Definition \ref{def:graphcode}, then $A(-1)$ is the incidence matrix of $\Gamma$ and has the property that the sum of the rows is the all-zero vector.
So, deleting a row of $A(-1)$ gives a matrix that is still a parity check matrix of $\mC_{\Gamma,-1}$.
The code $\mC_{\Gamma,-1}$ is abbreviated by $\mC_{\Gamma}$ and is called the \textbf{cycle code} of $\Gamma $,
and its dual is called its \textbf{graph code}.
Sometimes the cycle code is called graphic or cographic, see \cite[\S 8.1.2]{pellikaan2018}. 
\end{remark}

We note that the graph code is denoted by $C_{\Gamma}$ in \cite[\S 8.1.2]{pellikaan2018}. It corresponds to the notation $\mC_{\Gamma}^\perp $ in this paper.

\begin{remark}\label{r-param-cyclecode}
Let $\Gamma$ be a connected graph with $V$ vertices and $n$ edges.
The cycle code $\mC_{\Gamma}$ of $\Gamma $ is an $[n,k,d]$ code,
where $k=n-V+1$ and $d$ is the girth, the length of the smallest cycle, of $\Gamma$, see \cite[Proposition 8.1.22]{pellikaan2018}. 
\end{remark}

Throughout this section, we point out whether the defined codes are asymptotically good or not, and propose an open question at the end, see Definition \ref{def:asympgood}. Graphic and cographic codes are not asymptotically good \cite{kashyap2008}.

Definition \ref{def:signedgraph} motivates the following definition.

\begin{definition}
\label{def:blackcode}
Let $R$ be a ring and let $t\in R$ be an invertible element of $R$.
Let $\Gamma _D$ and $\Gamma_{D^*}$ be the black and white directed graphs of the Tait diagram $D$ of a knot. 
The codes $\mC_{\Gamma_D,t}$ and $\mC_{\Gamma_{D^*},t}$ of these graphs are called \textbf{black code} and \textbf{white code}, 
and denoted by $\mC_{D,t}$ and $\mC_{D^*,t}$, respectively.
And $\mC_{D,-1}$ and $\mC_{D^*,-1}$ are abbreviated by $\mC_{D}$ and $\mC_{D^*}$, respectively.
\end{definition}

Let $B_D$ be the incidence matrix of the black directed graph $\Gamma_D$ of $D$ and let  $W_D$ be the incidence matrix of the white directed graph $\Gamma_{D^*}$ of $D$.
Let $\mathbf{b}$ be a row of $B_D$ corresponding to the black region $B$. Then, the entries of $\mathbf{b}$ correspond to edges of $\Gamma _D$ which correspond to crossings of $D$. The entry is $0$ if the crossing is not in the boundary of $B$, it is $1$ if the crossing is in the boundary of $B$ and there is no dot in $B$ near that crossing, and it is $-1$ if the crossing is in the boundary of $B$ and there is no dot in $B$ near that crossing. Similarly, we do the same for $W_D$. These matrices are the parity check matrices of the black and white codes, respectively.

 \begin{theorem}
 \label{p-dual-black-white}
 Let $D$ be a reduced Tait diagram of a knot.
 If the characteristic is $2$ or the diagram is alternating, 
 then the black and white codes are dual to each other, i.e., $\mC_D^\perp =\mC_{D^*}$.
 \end{theorem}
 
 \begin{proof}
 Let $v$ be a crossing in the the intersection of the boundaries of a black and a white region of $D$.
 Then $v$ is a crossing of $D$ and it lies on a piece of a strand, call it $e$, between $v$ and another crossing $v'$ and 
 that is in the boundary of both a black and a white region.
 Then $v\not=v'$, otherwise $e$ can be deformed such that one get a loop that it is not self-intersecting and is in the interior of one the regions except $v$. So we get an unknot that intersects the diagram $D$ in exactly $v$, that means that $v$ is a reducible crossing which contradicts the assumption that $D$ is reduced. Hence $e$ is not a loop and there is a unique crossing $v'$ which is distinct from $v$ and is incident to $e$. In particular, in the the intersection of the boundaries of a black and a white region of $D$ the number of crossings is even.
 
 Let $\mathbf{b}_i$ be a row $B_D$ corresponding to the black region $B_i$ and $\mathbf{w}_j$ a row of $W_D$ corresponding to the white region $W_j$. If the characteristic is $2$, then
 \begin{equation*}
 \mathbf{b}_i \cdot \mathbf{w}_j =  \sum_{v \in \partial B_i \cap \partial W_j}1 = 0
 \end{equation*}
 is equal to 0 since $|\partial B_i \cap \partial W_j |$ is even, proving the result.
 Now, suppose that the diagram is alternating. 
 If $\mathbf{b}_i \cdot \mathbf{w}_j $ has a nonzero contribution at a crossing $v$ in the summation, 
 then the crossing is in the intersection of the boundaries $B_i$ and $W_j$. 
 The crossings appear in pairs, so  there are distinct crossings $v$ and $v'$ that are endpoints of the piece of a strand $e$ that is contained  
 $ \partial B_i \cap \partial W_j$. 
 Since the diagram is alternating, we may assume that $e$ is part of an overcrossing at $v'$ and of an undercrossing at $v$.
 Suppose that $B_i$ is on the right-hand side of $e$ and $W_j$ is on the left-hand side of $e$. (Similar reasoning follows if it is the other way around.) Then the entry of $ \mathbf{b}_i$ at $v'$ is $1$ and the entry of $ \mathbf{w}_j$ is $-1$, since $e$ is part of an overcrossing at $v'$. 
 So the contribution to the inner product is $1\cdot (-1)=-1$. 
 The entries of $ \mathbf{b}_i$ and $\mathbf{w}_j$ at $v$ are both $1$ or both $-1$, since $e$ is part of an undercrossing at $v$.
 So the contribution to the inner product is $1$ in that case. Hence, the nonzero contributions to $ \mathbf{b}_i \cdot \mathbf{w}_j$ appear in pairs of $\pm 1$,
 and they sum up to zero. Therefore $\mC_D \perp \mC_{D^*}$.
 
 Suppose that the diagram $D$ consists of $n$ crossings and $b$ black regions, then $\Gamma_D$ is a graph with $n$ edges and $b$ vertices.
 Hence $\mC_D $ has length $n$ and dimension $n-b+1$ by Remark \ref{r-param-cyclecode}. 
 The total number of regions is $n+2$ by Lemma \ref{lem:knot_diag}. So the number of white regions is $n+2-b$.
 Hence $\Gamma_{D^*}$ is a graph with $n$ edges and $n+2-b$ vertices. 
 Therefore $\mC_{D^*}$ has length $n$ and dimension $n-(n+2-b)+1=b-1$ by Remark \ref{r-param-cyclecode}. 
 Hence the codes $\mC_D$ and $\mC_{D^*}$ have complementary dimensions. Therefore $\mC_D^\perp =\mC_{D^*}$, concluding the proof.
 \end{proof}

Theorem \ref{p-dual-black-white} does not generalize to the case of arbitrary $t$, since in general $\mC_{D,t}$ and $\mC_{D^*,t}$ do not have complementary dimensions, and they are not perpendicular to each other. The fact that the proof of Theorem \ref{p-dual-black-white}  works for $t=-1$ boils down to two facts:
\begin{enumerate}
    \item The sum of rows of the parity check matrix of the black graph is the all-zero vector. The same holds for the white graph. So the corresponding codes have complementary dimensions,
    \item The inner product of a row of the parity check matrix of the black graph with a row of the parity check matrix of the white graph is zero.
\end{enumerate}

\begin{proposition}\label{p-AB-hull}
 The code $\mC_{D,t} \cap \mC_{D^*,t}$ is equal to the Alexander-Briggs code $\mA_{D,t}$. 
 If $t=-1$, then $\mA_{D}$ is equal to the hull of $\mC_{D}$.
 \end{proposition}

\begin{proof}
The Alexander-Briggs code $\mA_{D,t}$ is defined by the parity checks defined by both the black and white regions.
Hence $\mA_{D,t}=\mC_{D,t} \cap \mC_{D^*,t}$. If $t=-1$, then $\mC_{D^*}= \mC_{D^*,-1}=\mC_{D}^\perp $ by Theorem \ref{p-dual-black-white}.
Hence $\mA_{D}$ is the hull of $\mC_{D}$.
\end{proof}

Combining Theorem \ref{p-dual-black-white} and Proposition \ref{p-AB-hull}, we get the next result related to LCD codes.

\begin{corollary}
\label{c-dual}
Let $D$ be a reduced Tait diagram of a knot.
If the characteristic is $2$ or the knot is alternating, then the Alexander-Briggs code $\mA_{D}$ (when $t=-1$) is LCD.
\end{corollary} 

We add the next remark about LCD codes and whether graph codes of Tait diagrams of knots can lead to ``good" LCD codes.

\begin{remark}
If $\mC$ and $\mD$ are $(\pm1)$-permutation equivalent codes, then their hulls (see Definition \ref{def:lcd}) are also $(\pm1)$-permutation equivalent. This is not true for monomial equivalent codes. If $q>3$, then every linear code is monomial equivalent to an LCD code \cite{carlet2018}. 
So the question about the existence of LCD codes is the same as the question about the existence of linear codes in the case of $q>3$. 
However, the cases $q=2$ and $q=3$ need separate attention, see \cite{dougherty2017}. 
It was shown that that LCD codes are asymptotically good \cite{massey1992}, in fact they attain the Gilbert-Varshamov bound \cite{sendrier2004}. However, the graph codes of Tait diagrams of knots cannot give ``good" LCD codes since cycle codes are not asymptotically good as mentioned before. 
\end{remark}

We conclude the section with an open problem. 

\begin{oproblem}
Do Alexander-Briggs codes of knots give asymptotically good
codes?
\end{oproblem}

\section{Connected Sum of Knot Diagrams}
\label{sec:connected}
\label{sec:5}

Using the \textit{connected sum} operation, two knot diagrams form a new knot diagram. This will give us a way 
of constructing Fox knot codes
with arbitrary dimension. 
This section is devoted to studying how the codes of two knot diagrams are related to the code of their connected sum.

\begin{definition}
\label{def:knotsum}
The \textbf{connected sum}
of oriented knots $K_1$ and $K_2$
is the oriented knot $K_1 \# K_2$ 
whose diagram is obtained by taking an arc from a strand of each knot and connecting the open ends with two new arcs, in such a way that the orientation is preserved in the sum; see Figure \ref{fig:knotsum}. In this way we get a diagram $D_1 \# D_2$ of $K_1 \# K_2$, where $D_1$ and $D_2$ are the diagrams of $K_1$ and $K_2$, respectively. \end{definition}

It can be shown that the 
connected sum of knots indeed
does not depend on the choice of the strands.

\begin{figure}[H]
    \centering
    \includegraphics[width=0.6\textwidth]{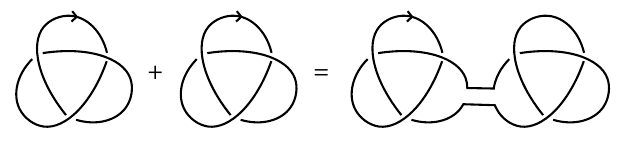}
    \includegraphics[width=0.6\textwidth]{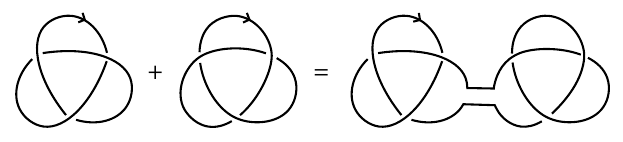}
    \caption{On the top, the composition of two trefoil knots results in what is called a so-called \textit{granny knot}. On the bottom, the composition of trefoil knot and its mirror image results in what is called a \textit{square knot}.}
    \label{fig:knotsum}
\end{figure}

The following concepts naturally arise from Definition \ref{def:knotsum}.

\begin{definition}
\label{def:prime}
A knot that cannot be written as the sum of two non-trivial knots is called a \textbf{prime} knot, otherwise it is called a \textbf{composite} knot.
\end{definition}

All composite knots have a unique decomposition into prime knots \cite{schubert2013}. Determining whether a knot is composite or not is generally a hard task.

We establish the notation for the rest of this section. 

\begin{notation}
\label{not:sum}
In the sequel we let
$D_1$ and $D_2$ be knot diagrams of (oriented, polygonal) knots $K_1$ and~$K_2$ with strands $x_1, ..., x_n$ and $y_1, ..., y_m$, respectively. 
We let $\mF_1$ and $\mF_2$ be their respective codes in~$\mathbb{F}_{q}^n$ and~$\mathbb{F}_{q}^m$, as in Definition~\ref{def:knotcode},
where $p$ is prime and $a$ is a positive integer. Moreover, we let 
$$\mF_1' = \{ c \in \mF_1 \mid c_n = 0 \}, \ \ \mF_2' = \{ d \in \mF_2 \mid d_m = 0 \}.$$
\end{notation}

The following 
result provides an explicit description of
the connected sum of knot diagrams. 

\begin{lemma}\label{lem:colorknotsum}
The Fox code of the sum $K_1\#K_2$ taken by connecting the knots diagrams~$D_1$ and~$D_2$, respectively over strands $x_n$ and $y_m$ is 
$$\mF_1\# \mF_2 = \{(c,d) \mid c \in \mF_1, d \in \mF_2, c_n = d_m\}.$$
\end{lemma}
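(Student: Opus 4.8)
The plan is to analyze the structure of the Alexander (coloring) matrix of the connected sum diagram and read off its kernel. First I would fix a strand $x_n$ of $D_K$ and a strand $y_m$ of $D_L$ and form the connected sum by cutting both, inserting two new arcs. The key geometric observation is that connecting the two diagrams in this way introduces \emph{no new crossings}: the two new arcs are drawn in a region of the plane away from all existing crossings (this is why the connected sum is well-defined up to the choice of strands). Consequently the strand set of $D_K \# D_L$ can be identified with the disjoint union $\{x_1,\dots,x_n\} \sqcup \{y_1,\dots,y_m\}$, except that at the two splice points the arcs $x_n$ and $y_m$ become (or are connected to) the same strand. I would set up the labeling carefully so that the newly formed single strand running through the splice is counted once, say as the last coordinate, with the bookkeeping arranged so that a word of $\mC\#\mD$ is a pair $(c,d)$ with the shared value $c_n = d_m$.

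Next I would describe the coloring matrix of $D_K\#D_L$ explicitly. Since the crossings of the sum are exactly the crossings of $D_K$ together with the crossings of $D_L$, each row of the sum's Alexander matrix is either a row coming from a crossing of $D_K$ or a row coming from a crossing of $D_L$. A row coming from a $D_K$-crossing involves only strands among $x_1,\dots,x_n$ (with $x_n$ now possibly also carrying the label of the spliced strand), and similarly for $D_L$. So, after the identification above, the coloring matrix of the sum is block-structured: it acts on $c$ exactly as $M_K$ does, and on $d$ exactly as $M_L$ does, with the only linkage being that the coordinate position assigned to $x_n$ and the one assigned to $y_m$ are forced to agree because they label the same physical strand. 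Concretely, a vector $z$ is in the kernel of the sum's matrix if and only if its restriction to the $x$-coordinates is killed by $M_K$ (i.e.\ lies in $\mC$) and its restriction to the $y$-coordinates is killed by $M_L$ (i.e.\ lies in $\mD$), and the two restrictions agree on the shared strand, i.e.\ $c_n = d_m$. That is exactly the claimed description $\mC\#\mD = \{(c,d) \mid c\in\mC,\ d\in\mD,\ c_n = d_m\}$.

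I would then verify the two directions. For ``$\supseteq$'': given $c\in\mC$ and $d\in\mD$ with $c_n=d_m$, the pair defines a consistent coloring of all strands of $D_K\#D_L$ (the only strand appearing in both pieces is the spliced one, and the hypothesis $c_n=d_m$ makes the assignment unambiguous); every crossing equation of the sum is a crossing equation of $D_K$ or of $D_L$, hence is satisfied. For ``$\subseteq$'': given a coloring of $D_K\#D_L$, restrict it to the strands inherited from $D_K$; every crossing of $D_K$ is a crossing of the sum, so the restriction satisfies all the $D_K$-coloring relations and thus lies in $\mC$, and symmetrically the $D_L$-restriction lies in $\mD$; the two restrictions assign the same value to the single spliced strand, giving $c_n=d_m$.

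\textbf{Main obstacle.} The only real subtlety is the bookkeeping at the splice: I must be precise about which strand of $D_K$ and which of $D_L$ get merged, how the two new connecting arcs are absorbed into existing strands rather than counted as fresh strands, and how the crossing-relations at the crossings nearest the splice are unaffected (they still see the same over/under data and the same neighboring strands). Making this identification canonical — and checking that the relabeling is a permutation of coordinates, so that passing to the stated coordinate convention $c_n=d_m$ is legitimate — is where the argument needs care; the coding-theoretic content afterward is immediate from the block structure of the parity-check matrix.
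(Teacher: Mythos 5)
Your overall strategy --- the crossings of $D_{K\#L}$ are exactly those of $D_K$ together with those of $D_L$, so the coloring system decomposes into the two original systems linked only through the spliced strand(s) --- is the same as the paper's, whose own proof is essentially your argument compressed into one sentence. However, the point you yourself single out as the main obstacle, the bookkeeping at the splice, is where your argument goes wrong. The connected sum removes a sub-arc from $x_n$ and a sub-arc from $y_m$ and joins the four loose ends with two new arcs; this produces \emph{two} hybrid strands (each consisting of a piece of $x_n$, a connecting arc, and a piece of $y_m$), not a single merged strand. Hence $D_{K\#L}$ has $(n-1)+(m-1)+2=n+m$ strands and $\mC\#\mD$ has length $n+m$, exactly as in the statement, where the coordinates $n$ and $n+m$ are the two hybrid strands and are required to carry equal colors; compare the $7\times 7$ coloring matrix of the trefoil/figure-eight sum in Example \ref{ex:detsum}, where columns $3$ and $7$ play these roles. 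Your ``single strand counted once'' convention would instead produce a code of length $n+m-1$, which is not the code in the statement and would also break the weight computation $\wt(x)=\wt(c)+\wt(d)$ used later in Theorem \ref{prop:sumknotmindist}, where the shared value is counted twice.

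This is not merely a labeling issue: it hides the one genuinely non-trivial step. For the inclusion ``$\subseteq$'' you must show that every Fox $(\F_q,t)$-coloring of $D_{K\#L}$ assigns the \emph{same} color to the two hybrid strands; only then do the restrictions to the $D_K$- and $D_L$-parts define well-defined elements $c\in\mC$ and $d\in\mD$ with $c_n=d_m$. With a single merged strand this condition is vacuous, which is why your write-up never proves it. The missing ingredient is the classical fact that if a knot diagram is cut open along one strand, the two resulting end-strands receive equal colors in any Fox coloring (equivalently, one crossing relation of the Wirtinger/Alexander system is redundant --- the same dependency that makes $\det M(T)=0$); applying this to, say, the $D_K$-part of the sum forces the two hybrid colors to agree. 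Your ``$\supseteq$'' direction is fine once both hybrid strands are assigned the common value $c_n=d_m$. To be fair, the paper's one-line proof elides this point as well, but since you explicitly identified the splice as the crux, you should resolve it correctly rather than by collapsing the two strands into one.
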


\begin{proof}
A Fox coloring for $D_{K_1\#K_2}$ consists of a Fox coloring of $D_1$ and a Fox coloring of $D_2$ where the colors of the strands $x_n$ and $y_m$ that have been connected have the same color.
\end{proof}

Lemma \ref{lem:colorknotsum} implies that if $\mF_1$ and $\mF_2$ have parity check matrices $H_{\mF_1}$ and $H_{\mF_2}$, then $\mF_1\# \mF_2$ has parity check matrix

\begin{equation}
\label{eq:paritycheck}
  H_{\mF_1 \# \mF_2} = \left(
\begin{array}{cccc|cccc}
& H_{\mF_1} & & & & 0_{(n - \dim(\mF_1)) \times m} &  & \\
& 0_{(m - \dim(\mF_2)) \times n} & & & & H_{\mF_2} & & \\
0 & \dots & 0 & 1 & 0 & \dots & 0 & -1
\end{array} \right).  
\end{equation}
The last line of the matrix consists only of zeroes, except for a 1 on the $n$-th column and -1 on the $(n+m)$-th column.

The following proposition shows that the connected sum gives us another method, besides pretzel knots (see Theorem \ref{thm:pretzeldim}), to construct codes with any dimension.

\begin{proposition}
\label{prop:dimsum}
We have
$$\dim(\mF_1 \# \mF_2) = \dim(\mF_1) + \dim(\mF_2) -1.$$
\end{proposition}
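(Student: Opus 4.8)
The plan is to exploit the explicit description of $\mC \# \mD$ from Lemma~\ref{lem:colorknotsum} together with the ``last-coordinate-zero'' subcodes $\mC'$ and $\mD'$ of Notation~\ref{not:sum}. First I would observe that $\mC'$ has codimension at most $1$ in $\mC$: it is the kernel of the linear functional $c \mapsto c_n$ on $\mC$. Moreover this codimension is exactly $1$, not $0$, because the trivial (all-equal) coloring lies in $\mC$ and has $c_n \ne 0$ whenever the common color is nonzero; so $\dim(\mC') = \dim(\mC) - 1$, and likewise $\dim(\mD') = \dim(\mD) - 1$. These trivial colorings also supply, for any $\lambda \in \F_q$, a codeword of $\mC$ whose last coordinate equals $\lambda$ (and similarly for $\mD$), a fact I will use to ``glue'' colorings.

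Next I would build a generating set for $\mC \# \mD$ and count. A natural decomposition: $\mC \# \mD$ contains the subspace $\mC' \times \mD' = \{(c,d) : c \in \mC', d \in \mD'\}$ (here $c_n = 0 = d_m$, so the condition $c_n = d_m$ holds), which has dimension $\dim(\mC') + \dim(\mD') = \dim(\mC) + \dim(\mD) - 2$. It remains to produce one more independent codeword. Pick the trivial colorings $t_K \in \mC$ with $(t_K)_n = 1$ and $t_L \in \mD$ with $(t_L)_m = 1$; then $(t_K, t_L) \in \mC \# \mD$ since $1 = 1$, and $(t_K, t_L) \notin \mC' \times \mD'$ because its $n$-th coordinate is $1 \ne 0$. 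I claim $\mC \# \mD$ is spanned by $\mC' \times \mD'$ together with $(t_K, t_L)$: given any $(c,d) \in \mC \# \mD$ with $c_n = d_m = \mu$, the vector $(c,d) - \mu (t_K, t_L)$ has $n$-th coordinate $0$; I then need its $\mC$-part to lie in $\mC'$, which holds, and its $\mD$-part to lie in $\mD'$, which also holds since $d - \mu t_L \in \mD$ has $m$-th coordinate $d_m - \mu = 0$. Hence $(c,d) \in \mathrm{span}(\mC' \times \mD', (t_K,t_L))$. Therefore
$$\dim(\mC \# \mD) = \bigl(\dim(\mC) + \dim(\mD) - 2\bigr) + 1 = \dim(\mC) + \dim(\mD) - 1.$$

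Alternatively, and perhaps more cleanly, I would run the argument via the rank of the parity-check matrix in~\eqref{eq:paritycheck}: $H_{\mC \# \mD}$ has $(n - \dim \mC) + (m - \dim \mD) + 1$ rows, and one checks its rank equals this number (the block structure makes the $H_{\mC}$ rows and $H_{\mD}$ rows independent of each other and of the last row, using that $H_{\mC}$, $H_{\mD}$ are assumed full rank here, or more carefully that the added row is not in the span of the others because of the isolated $1$ in column $n$); then $\dim(\mC \# \mD) = (n+m) - \mathrm{rank}(H_{\mC \# \mD}) = \dim(\mC) + \dim(\mD) - 1$. The only subtle point in either route — and the step I would treat most carefully — is verifying that $\dim(\mC') = \dim(\mC) - 1$ rather than $= \dim(\mC)$, i.e.\ that the functional $c \mapsto c_n$ is not identically zero on $\mC$; this is precisely where the existence of nontrivial-valued trivial colorings (Remark~\ref{rem:dim1}) is essential, and it is why $q \ge 3$ is implicitly needed for the statement to be non-vacuous in the way intended. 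I would spell this out explicitly.
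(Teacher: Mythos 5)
Your primary argument is correct, and it takes a genuinely different route from the paper's. The paper works entirely on the parity-check side: it counts the rows of the matrix $H_{\mC\#\mD}$ in \eqref{eq:paritycheck}, observes that the block rows coming from $H_{\mC}$ and $H_{\mD}$ are independent, and then argues by contradiction that the extra gluing row $(0,\dots,0,1,0,\dots,0,-1)$ cannot lie in their span, because otherwise some combination of the rows of $H_{\mC}$ would equal $(0,\dots,0,1)$, forcing $c_n=0$ for every codeword and contradicting the presence of the repetition code. Your argument instead works on the code side: you decompose $\mC\#\mD$ as $\mC'\times\mD'$ plus the single glue word $(t_K,t_L)$, and you compute $\dim(\mC')=\dim(\mC)-1$ directly from the surjectivity of $c\mapsto c_n$. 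Both proofs ultimately rest on the same key fact --- the trivial all-ones coloring has nonzero last coordinate --- but your version is constructive (it exhibits an explicit basis-style generating set for $\mC\#\mD$) and has the side benefit of identifying $\dim(\mC')$ and $\dim(\mD')$ exactly, which feeds naturally into Theorem~\ref{prop:sumknotmindist}; the paper's version is shorter and stays closer to the matrix description that the surrounding text has already set up. Your sketched ``alternative'' route is essentially the paper's proof. One small correction: your closing claim that $q\ge 3$ is implicitly needed is not right --- the all-ones codeword has last coordinate $1\ne 0$ over any field, including $\F_2$, so the functional $c\mapsto c_n$ is nonzero on $\mC$ for every $q$ and no restriction on $q$ is required.
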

\begin{proof}

The parity check matrices $H_{\mF_1}$ and $H_{\mF_2}$ of $\mF_1$ and $\mF_2$ are of size $(n - \dim(\mF_1)) \times n$ and $(m - \dim(\mF_2)) \times m$, respectively. Using the above construction from matrix \eqref{eq:paritycheck} we then get a parity check matrix $H$ for $\mF_1\# \mF_2$ of size~$(n+m - (\dim(\mF_1) + \dim(\mF_2) - 1)) \times (n + m)$ such that the first $n + m - (\dim(\mF_1) + \dim(\mF_2))$ rows are linearly independent. 

Towards a contradiction, assume that the last row can be written as a linear combination of the other rows of $H$. That would mean that there exists a linear combination of the rows of $H_{\mF_1}$ equal to $(0,...,0,1)$, which means that the strand~$x_m$ should always be colored with 0. The possible trivial colorings contradict this, as these include vectors with the same nonzero element on each position. So we find that the last row of $H_{\mF_1\# \mF_2}$ is independent from the other rows. Therefore, the rank of the matrix is $n + m - (\dim(\mF_1) + \dim(\mF_2) - 1)$. This proves the 
desired result.
\end{proof}

The diagram of the $m$-fold sum construction of the trefoil knot gives a code over $\F_3$ of length $3m$ and dimension $m+1$. 
Hence its rate is $R=(m+1)/3m \approx 1/3$. 

As one expects, the Alexander polynomials of two knots and their knot sum are also related; see~\cite[Theorem~6.3.5]{murasugi1996knot}.

\begin{proposition}
\label{prop:detsum}
We have 
$$
\Delta_{K_1\#K_2}(t) = \Delta_{K_1}(t)\Delta_{K_2}(t)
.$$
\end{proposition}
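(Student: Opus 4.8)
The plan is to derive the identity $\Delta_{K\#L}(T) = \Delta_K(T)\Delta_L(T)$ from the description of the parity check matrix of the connected sum code given in~\eqref{eq:paritycheck}, combined with the characterization of the Alexander polynomial as a generator of the first elementary ideal $E_1(M(T))$ (Definition~\ref{def:alex}) and Proposition~\ref{p-pincipalM}, which says this ideal is principal and generated by any $(n-1)$-minor of the Alexander matrix. The key point is that the matrix in~\eqref{eq:paritycheck} was written down for the specialization $t=-1$, but the very same block construction works for the indeterminate $T$: the Alexander matrix $M_{K\#L}(T)$ of the connected sum diagram is obtained by stacking $M_K(T)$ and $M_L(T)$ in block-diagonal fashion and adjoining one extra row that encodes the identification $x_n = y_m$ of the two connected strands, together with one extra column coming from the new crossing created (or, depending on how one draws the connecting arcs, simply the relation that the connecting arc carries a single color).

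First I would fix a diagram for $K\#L$ obtained by the band-connect-sum operation of Definition~\ref{def:knotsum} on diagrams $D_K$, $D_L$ with strands $x_1,\dots,x_n$ and $y_1,\dots,y_m$, and write out its Alexander matrix $M(T)$ explicitly in the block form
\[
M(T) = \begin{pmatrix} M_K(T) & 0 \\ 0 & M_L(T) \\ r_K & r_L \end{pmatrix},
\]
where the bottom row has a single nonzero entry in the $x_n$-column and a single nonzero entry in the $y_m$-column (after a harmless relabeling of strands so the connecting arcs are among the last columns of each block; the extra crossings introduced by the connecting arcs can be removed by Reidemeister I/II moves, which by Proposition~\ref{p-ReidemesterM} change $M(T)$ only up to equivalence and hence do not affect $E_1$). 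Then I would compute a suitable $(n+m-1)\times(n+m-1)$ minor of $M(T)$: delete the row and column corresponding to the connecting strand on the $L$-side and expand along the bottom row. Because $M_L(T)$ already has linearly dependent columns (its row sums vanish, as noted after Definition~\ref{def:Foxmodule}), the relevant minor factors as a product of an $(n-1)$-minor of $M_K(T)$ and an $(m-1)$-minor of $M_L(T)$, which by Proposition~\ref{p-pincipalM} are generators of $E_1(M_K(T))$ and $E_1(M_L(T))$, i.e. $\Delta_K(T)$ and $\Delta_L(T)$ up to units $\pm T^s$. Matching constant-term signs as in Definition~\ref{def:alex} then pins down the equality on the nose.

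Alternatively — and this is probably the cleaner route to write up — I would argue entirely at the level of elementary ideals, mirroring the proof of Corollary~\ref{c-alex-torus-around-K}: show directly from the block structure that $E_1(M_{K\#L}(T)) = E_1(M_K(T))\cdot E_1(M_L(T))$ as ideals in $\Z[T,T^{-1}]$ (the bottom identification row contributes exactly one unit of rank, which is why the degree-$1$ elementary ideal of the sum picks up the product of the degree-$1$ elementary ideals of the summands and not, say, $E_0\cdot E_2$ terms — here one uses that $E_0$ of each factor is $(0)$), and then pass to generators. This is structurally the same computation as Proposition~\ref{p-elem-torus-around-K} but with the trivial ``gluing'' data, and it makes the normalization of signs transparent since the elementary ideal is sign-agnostic and the final normalization is imposed by Definition~\ref{def:alex}.

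The main obstacle I expect is the bookkeeping around the connecting arcs: the connected-sum diagram literally has a few extra crossings and extra strands beyond the disjoint union of $D_K$ and $D_L$, so $M_{K\#L}(T)$ is not exactly the clean $3$-block matrix above but a perturbation of it, and one must check carefully that (i) those extra crossings can be absorbed by Reidemeister I/II moves without disturbing $E_1$, via Proposition~\ref{p-ReidemesterM}, and (ii) the surviving ``identification'' row really does contribute a single unit to the rank over the fraction field $\Z(T)$ — equivalently, that $x_n$ is not forced to vanish, which is exactly the trivial-coloring argument already used in the proof of Proposition~\ref{prop:dimsum}. Once that rank-one contribution is isolated, the factorization of the minor is a routine Laplace expansion and the rest is sign-chasing.
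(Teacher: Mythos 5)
The paper does not actually prove Proposition~\ref{prop:detsum}: it is quoted from the literature, with the sentence preceding the statement pointing to \cite[Theorem~6.3.5]{murasugi1996knot}. Your proposal therefore supplies an argument where the paper supplies a citation, and the argument you sketch is essentially the standard textbook proof of that cited theorem: the Alexander matrix of a connected-sum diagram is, up to the equivalence of Definition~\ref{d-equiv-matrix}, a block-diagonal juxtaposition of $M_K(T)$ and $M_L(T)$ together with one identification of generators, so a suitable $(n+m-1)$-minor splits as the product of an $(n-1)$-minor of $M_K(T)$ and an $(m-1)$-minor of $M_L(T)$, and Proposition~\ref{p-pincipalM} plus the normalization in Definition~\ref{def:alex} finish the job. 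Your second route, via elementary ideals, is also correct and arguably the cleanest fit with the paper's own toolkit: it is the Fitting-ideal identity for a direct sum, and your parenthetical observation that $E_0=(0)$ for each summand is exactly what kills the cross terms and yields $E_1(M_{K\#L}(T))=E_1(M_K(T))\,E_1(M_L(T))$; this mirrors how Corollary~\ref{c-alex-torus-around-K} is deduced from Proposition~\ref{p-elem-torus-around-K}. Two small points of care if you write this up: rows of the coloring matrix index crossings, not strands, so ``delete the row corresponding to the connecting strand'' needs rephrasing (delete one crossing relation made redundant by the vanishing row sums, and the column of the shared strand); and if the connecting arcs are chosen in a common unbounded region no new crossings arise at all, so the matrix of Example~\ref{ex:detsum} and display~\eqref{eq:paritycheck} is already the exact object, not a perturbation — which removes most of the bookkeeping you were worried about.
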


Next, we give an example of a connected sum of two knot diagrams and compute the determinant using Proposition \ref{prop:detsum}.

\begin{example}
\label{ex:detsum}
In Figure \ref{fig:sum3141}, the diagrams of the trefoil knot, figure-eight knot and their connected sum are depicted. 

\begin{figure}[H]
    \centering
    \includegraphics[width=\textwidth]{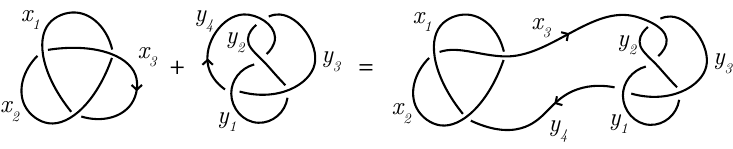}
    \caption{}
    \label{fig:sum3141}
\end{figure}

\noindent Using the matrix in \eqref{eq:paritycheck}, their coloring matrices are derived as follows:
$$
\begin{pmatrix}
 1 & 1 & -2 \\ 
 -2 & 1 & 1 \\
  1 & -2 & 1 \\
\end{pmatrix} \textnormal{, }
\begin{pmatrix}
  1 & 1 & -2 & 0 \\
  0 & 1 & 1 & -2 \\
  -2 & 0 & 1 & 1 \\
  1 & -2 & 0 & 1 \\
  \end{pmatrix},
  \textnormal{ and }
$$
\begin{equation}\label{eq: first matrix}
  \begin{pmatrix}
  1 & 1 & -2 & 0 & 0 & 0 & 0 \\
  -2 & 1 & 1 & 0 & 0 & 0 & 0 \\
  0 & 0 & 1 & 1 & -2 & 0 & 0 \\
  0 & 0 & 0 & 1 & 1 & -2 & 0 \\
  0 & 0 & -2 & 0 & 1 & 1 & 0 \\
  0 & 0 & 0 & -2 & 0 & 1 & 1 \\
  1 & -2 & 0 & 0 & 0 & 0 & 1
\end{pmatrix},  
\end{equation}
which are the parity check matrices of $\mF_1$, $\mF_2$ and $\mF_1\# \mF_2$, respectively. These knots have determinants 3, 5 and 15 by Proposition \ref{prop:detsum}, respectively. 
\end{example}

Lastly, we prove that the minimum distance of a code of the diagram of a connected sum is determined by the weight distributions of the codes of the constituent knot diagrams.

\begin{theorem}
\label{prop:sumknotmindist}
Let $\mF_1$, $\mF_1'$, $\mF_2$, and $\mF_2'$
and $\mF_1 \# \mF_2$ be as in Notation \ref{not:sum} and Lemma \ref{lem:colorknotsum} and 
let $d(\mF_1)$, $d(\mF_2)$ and $d(\mF_1 \# \mF_2)$ their respective minimum distances.  
The minimum distance of $\mF_1 \# \mF_2$ is equal to 
\begin{equation}
    \min \{ \ d(\mF_1'),\ d(\mF_2'),\ v+w \ \mid \  v \in  \wt(\mF_1 \setminus \mF_1'),\ w \in \wt(\mF_2 \setminus \mF_2') \ \}.
\end{equation}
\end{theorem}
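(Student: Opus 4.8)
The plan is to use the explicit description of $\mC\#\mD$ from Lemma~\ref{lem:colorknotsum}: its codewords are precisely the pairs $(c,d)$ with $c\in\mC$, $d\in\mD$ and $c_n=d_m$, and the Hamming weight of such a pair is $\wt(c)+\wt(d)$. I would split the nonzero codewords of $\mC\#\mD$ into two types according to whether the shared coordinate $c_n=d_m$ is zero or nonzero, compute the smallest weight contributed by each type, and take the minimum of the two.

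\textbf{Type 1: shared coordinate zero.} If $c_n=d_m=0$ then $c\in\mC'$ and $d\in\mD'$; conversely $(c,0)\in\mC\#\mD$ for every $c\in\mC'$ and $(0,d)\in\mC\#\mD$ for every $d\in\mD'$, of weights $\wt(c)$ and $\wt(d)$ respectively. Hence the smallest weight among Type~1 codewords with a zero block equals $\min\{d(\mC'),d(\mD')\}$, where as usual $d(\{0\})=\infty$, so the degenerate cases $\mC'=\{0\}$ or $\mD'=\{0\}$ cause no trouble. If instead both blocks are nonzero, then $\wt(c)+\wt(d)\ge d(\mC')+d(\mD')$, which strictly exceeds $\min\{d(\mC'),d(\mD')\}$, so such codewords never attain the minimum.

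\textbf{Type 2: shared coordinate nonzero.} If $c_n=d_m\ne0$ then $c\in\mC\setminus\mC'$ and $d\in\mD\setminus\mD'$, so $\wt(c)+\wt(d)$ is of the form $v+w$ with $v\in\wt(\mC\setminus\mC')$ and $w\in\wt(\mD\setminus\mD')$; this gives the inequality ``$\ge$''. For the reverse, pick any $v$ witnessed by some $c$ with $c_n=\alpha\ne0$ and any $w$ witnessed by some $d$ with $d_m=\beta\ne0$; then $(c,\alpha\beta^{-1}d)\in\mC\#\mD$, since $\mD$ is $\F_q$-linear and the last coordinate of $\alpha\beta^{-1}d$ equals $\alpha$, and multiplying by the nonzero scalar $\alpha\beta^{-1}$ preserves the Hamming weight, so this codeword has weight exactly $v+w$. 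This scaling step is what decouples the two choices $v$ and $w$ despite the constraint $c_n=d_m$, turning the naive ``$\ge$'' into an equality. Note also that $\mC\setminus\mC'$ and $\mD\setminus\mD'$ are nonempty, as each contains the all-ones trivial coloring; in particular $\mC\#\mD\ne\{0\}$ and the index sets in the statement are not vacuous.

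Combining the two types yields the asserted value of $d(\mC\#\mD)$. There is no serious obstacle; the only care needed is the bookkeeping for empty or zero subcodes (absorbed by the convention $d(\{0\})=\infty$) and the scaling argument in Type~2, which removes the apparent coupling between the weights $v$ and $w$.
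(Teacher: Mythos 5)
Your proof is correct and follows essentially the same route as the paper's: split the nonzero codewords $(c,d)$ of $\mC \# \mD$ according to whether the glued coordinate $c_n = d_m$ vanishes, and read off the smallest weight arising in each case. The one place you go beyond the paper is the scaling step $(c,\alpha\beta^{-1}d)$ in the second case: the paper asserts that every pair $(v,w)$ with $v\in\wt(\mC\setminus\mC')$, $w\in\wt(\mD\setminus\mD')$ is realized without explicitly addressing the coupling constraint $c_n=d_m$, whereas your rescaling by $\alpha\beta^{-1}$ makes the decoupling explicit --- a small but worthwhile refinement.
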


\begin{proof}
As the codewords of $\mF_1$ and the codewords of $\mF_2$ only affect the weight of the codewords of $\mF_1 \# \mF_2$ at the first $n$ positions and the last $m$ positions, respectively, we look how minimum weight codewords of $\mF_1 \# \mF_2$ can be constructed by concatenating codewords of $\mF_1$ and $\mF_2$. 
Now $x \in \mF_1 \# \mF_2$ if and only if $x=(c,d) $ with $c \in \mF_1$ and $d \in \mF_2$  and $c_n = d_m$. Then $\wt(x)=\wt(c)+ \wt(d)$. We investigate two cases to finish the proof.
\begin{enumerate}
    \item Let $c_n = d_m = 0$, that is $c \in \mF_1'$ and $d \in \mF_2'$. 
In this case $\min\{ d(\mF_1'), d(\mF_2') \}$ is the smallest nonzero weight and is obtained by means of $(c, 0^m)$ or $(0^n,d)$ with the all-zeros codeword $0^n \in \mF_1$ and the all-zeros codeword $0^m \in \mF_2$.
    \item Let $c_n = d_m \neq 0$, that is $c \in \mF_1 \setminus \mF_1'$ and $d \in \mF_2 \setminus  \mF_2'$.
If $v =\wt (c)$, then $v \in \wt(\mF_1 \setminus \mF_1')$. Similarly, if $w =\wt (d)$, then $w \in \wt(\mF_2 \setminus \mF_2')$.
Conversely, if $v \in \wt(\mF_1 \setminus \mF_1') $, then there exists a  $c \in \mF_1 \setminus \mF_1'$ with $v =\wt (c)$. 
Similarly, if $w \in \wt(\mF_2 \setminus \mF_2')$, then there exists a  $d \in \mF_2 \setminus \mF_2'$ with $w =\wt (d)$. 
Hence, $\min \{ v+w \mid v \in  \wt(\mF_1 \setminus \mF_1'),\ w \in \wt(\mF_2 \setminus \mF_2')\}$ 
is the smallest weight of a nonzero codeword of $\mF_1 \# \mF_2$ obtained in this case. \qedhere
\end{enumerate}
\end{proof}

We give the following two remarks related to Theorem \ref{prop:sumknotmindist}.

\begin{remark}
If $\mF_1' =\{0\}$ and $\mF_2' =\{0\}$, then $K$ and $L$ have only trivial colorings, so $ \mF_1 \# \mF_2$ has only trivial colorings and $d(\mF_1 \# \mF_2)= n+ m$.
This is in agreement with the statement in Theorem \ref{prop:sumknotmindist}, since we defined the minimum distance of the zero code to be $\infty$ in Definition~\ref{def:weight_distance}.

If $\mF_1' =\{0\}$ and $\mF_2' \not=\{0\}$, then $d(\mF_1 \# \mF_2)=\min \{ \   d(\mF_2'),\ n+w \ \mid \ w \in \wt(\mF_2 \setminus \mF_2')\  \}$.\\
And a similar formula holds in case $\mF_1' \not=\{0\}$ and $\mF_2' =\{0\}$.
\end{remark}

\begin{remark}
Theorem \ref{prop:sumknotmindist} also follows from~\cite[Proposition 6.1.1]{nijsten2022knots}, 
where a formula for the weight enumerator of $\mF_1 \# \mF_2$ is given in terms of the weight enumerators of $\mF_1 $, $\mF_2$, $\mF_1'$, and~$\mF_2'$.
The formula is: 
$$
W_{\mF_1 \# \mF_2}(t) = W_{\mF_1'}(t)\cdot W_{\mF_2'}(t) + \frac{1}{q - 1} (W_{\mF_1}(t) - W_{\mF_1'}(t))(W_{\mF_2}(t) - W_{\mF_2'}(t)).
$$
This is in agreement with Theorem \ref{prop:sumknotmindist}, since $a_v(\mF_1') < a_v(\mF_1)$ if and only if $ v\in \wt(\mF_1 \setminus \mF_1')$,
and $a_w(\mF_2') < a_w(\mF_2)$ if and only if $w \in \wt(\mF_2 \setminus \mF_2')$.
\end{remark}

The next example shows applications of some of our results in this section.
\begin{example}
Let $q=3$ and $\mF=\mF_D$ where $D$ is the knot diagram of the trefoil knot depicted in Figure \ref{fig:trefoil}. By Example \ref{ex:detsum}, its parity check matrix is all-ones matrix and thus has rank 1. Thus, $\mF$ is a $[3,2,2]_3$ MDS code such that
\begin{align*}
    \mF &= \{(0,0,0),(0,1,2),(0,2,1),(1,0,2),(1,2,0),(1,1,1),(2,0,1),(2,1,0),(2,2,2)\}, \\
    \mF' &= \{(0,0,0),(1,2,0),(2,1,0)\}.
\end{align*}
We have $d(\mF')=2$ and $\wt(\mF \setminus \mF') = \{2,3\}$. By Proposition \ref{prop:dimsum} and Theorem \ref{prop:sumknotmindist} one can create a code with parameters
$$[n+m, \dim(\mF_2)+1, \min\{2,d(\mF_2')\}]_3$$
where $\mF_2$ is a code a knot diagram of some knot with $m$ strands. For example, if $\mF_2 = \mF_1$, then one gets a $[6,3,2]_3$ code which is a non-MDS. \end{example}

For the remaining part of this section, we focus on cycle codes.

\begin{definition}\label{d-sum-graph}
Let $\Gamma$ and $\Sigma$ be two (directed) graphs.
The \textbf{disjoint sum} of $\Gamma$ and $\Sigma$ is denoted by $\Gamma \sqcup\Sigma$
and has as nodes the disjoint union of the nodes of $\Gamma$ and $\Sigma$, 
and has as edges the disjoint union of the edges of $\Gamma$ and $\Sigma$.

Let $p$ be a node of $\Gamma$, 
and let $q$ be a node of $\Sigma$.
Then $(\Gamma \sqcup\Sigma )/ (p,q)$ is the graph
$\Gamma \sqcup\Sigma$ where the the node $p$ is identified with $q$.
\end{definition}

\begin{proposition}\label{p-sum-code-graph-pol}
Let $\Gamma_1$ and $\Gamma_2$ be two directed graphs. Let $p_1$ be a node of $\Gamma_1 $ and let $p_2$ be a node of $\Gamma_2 $.
Let $\Gamma =(\Gamma_1 \sqcup \Gamma_2 )/ (p_1,p_2)$.
Then 
$$
C_{\Gamma}=C_{\Gamma_1} \oplus C_{\Gamma_2}
$$  
\end{proposition}
\begin{proof}
Let $A_1$, $A_2$ and $A$ be the matrices of the directed graphs $\Gamma_1$,  $\Gamma_2$ and $\Gamma$, respectively as given in 
Definition \ref{def:graphcode} for $t=-1$. Then $A_1$, $A_2$ and $A$ are parity check matrices of the cycle codes 
 $C_{\Gamma_1}$, $ C_{\Gamma_2}$ and $C_{\Gamma}$, respectively by definition.
Let $A_1'$ be the matrix obtained from $A_1$ by deleting the row corresponding to $p_1$.
Let $A_2'$ be the matrix obtained from $A_2$ by deleting the row corresponding to $p_2$.
Let $A'$ be the matrix obtained from $A$ by deleting the row corresponding to $p_1=p_2$.
Then $A_1'$, $A_2'$ and $A'$ are also parity check matrices of the cycle codes 
$C_{\Gamma_1}$, $ C_{\Gamma_2}$ and $C_{\Gamma}$, respectively by Remark \ref{r-A(t)}, since $t=-1$. 
Now 
 $$
 A'=
 \left(
\begin{array}{c c}
 A_1' & 0 \\
 0    & A_2' \\
 \end{array}
 \right)
 $$
This proves the proposition.
\end{proof}

The graph $\Gamma =(\Gamma_1 \sqcup \Gamma_2 )/ (p_1,p_2)$ varies with the choices of the nodes $p_1$ and $p_2$, that is they are in general not isomorphic. But its graph code is independent of the choices of $p_1$ and $p_2$.

\begin{proposition}\label{p-prod-sum-knots}
Let the black regions of the constituent knots $K_1$ and $K_2$, and their Tait diagrams $D_1$ and $D_2$, respectively 
be such that their unbounded regions are white. 
Let $D_1\# D_2$ be the Tait diagram of $K_1\# K_2$ where the regions $B_1$ and $B_2$ of $D_1$ and $D_2$, respectively, are glued together.
Let $\Gamma_1$ and $\Gamma_2$ be the graphs of the black regions of $D_1$ and $D_2$, respectively.
Then $ (\Gamma_1 \sqcup \Gamma_2 )/ (B_1,B_2)$ is the graph of the black regions of the diagram of the connected sum $K_1\# K_2$.
\end{proposition}
\begin{proof}
This is a direct consequence of the definitions.  
\end{proof}

As a result of Propositions \ref{p-sum-code-graph-pol} and \ref{p-prod-sum-knots} we see that the cycle code of the connected sum of two knots 
does not depend on the choice of the strands and regions where the constituent knots are glued together.

\section{Dual of Fox Knot Codes}
\label{sec:dual}
\label{sec:6}

It is a standard problem in coding theory to understand how properties of a code determine or influence properties of the dual code.
In this short section, we ask ourselves 
if the dual of a Fox code of a knot diagram is also a Fox code of a knot diagram. 
We start by proving a necessary but not sufficient condition for a dual code to be a knot diagram.
\begin{proposition}
\label{prop:pdividesn}
Let $\mF$ be a Fox code of a knot diagram with $n$ strands over $\F_q.$ Then~$q$ divides~$n$ if $\mF^\perp$ is a Fox code of a knot diagram. 
\end{proposition}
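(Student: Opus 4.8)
The plan is to exploit the one structural feature shared by \emph{every} knot code: the all-ones vector lies in it. Each row of an Alexander matrix $M(t)$ has entries summing to $(1-t)+(-1)+t=0$, as recorded right after Definition~\ref{def:Foxmodule}; equivalently, the $n$-repetition code is always a subcode of a knot code (Remark~\ref{rem:dim1}, via trivial colorings). So the first step is simply to note $\mathbf{1}=(1,\dots,1)\in\mC$.

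Next I would run the same observation for $\mC^\perp$. Since $\mC^\perp$ is assumed to be the code of some knot diagram and $\mC^\perp\subseteq\F_q^n$, that diagram has $n$ strands: a diagram with $k$ crossings has exactly $k$ strands by Lemma~\ref{lem:knot_diag}, and its code sits in $\F_q^k$, forcing $k=n$. Hence, by the same reasoning as above applied to $\mC^\perp$, we also get $\mathbf{1}\in\mC^\perp$.

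The final step is to combine the two facts. From $\mathbf{1}\in\mC$ and $\mathbf{1}\in\mC^\perp$ we get $\mathbf{1}\in\mC\cap\mC^\perp$, so $\mathbf{1}$ is self-orthogonal: $\mathbf{1}\,\mathbf{1}^\top=n\cdot 1_{\F_q}=0$, i.e. $n\equiv 0$ in $\F_q$. This gives the asserted divisibility (literally $q\mid n$ when $q$ is prime; in general it reads $\operatorname{char}(\F_q)\mid n$).

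There is essentially no obstacle here; the statement is really a two-line consequence of "knot codes contain the repetition code." The only points deserving a sentence of care are: (i) the row-sum-zero property, and hence $\mathbf{1}\in\mC$, holds for every invertible $t$ and every diagram, including non-reduced ones where two of the three strands meeting at a crossing coincide — the coefficients still add to $0$; and (ii) the bookkeeping, made explicit above via Lemma~\ref{lem:knot_diag}, that a knot code equal to $\mC^\perp\subseteq\F_q^n$ must arise from a diagram with $n$ strands, so the argument can be applied symmetrically to $\mC$ and $\mC^\perp$.
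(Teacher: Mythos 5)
Your proof is correct and follows essentially the same route as the paper: both arguments rest on the fact that every knot code contains the $n$-repetition code (Remark~\ref{rem:dim1}), apply this to $\mC$ and to $\mC^\perp$, and conclude from the self-orthogonality of the all-ones vector that $n\cdot 1_{\F_q}=0$. Your parenthetical remark that this literally yields $\operatorname{char}(\F_q)\mid n$ rather than $q\mid n$ when $q$ is a proper prime power is a fair refinement of the statement as written.
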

\begin{proof}
By Remark \ref{rem:dim1}, the Fox code $\mF$ of a knot diagram with $n$ strands has the $n$-times repetition code $$\langle \underbrace{(1,1,...,1)}_n \rangle$$ as a subcode. If  $\mF^\perp$ is a code of some knot diagram, then it should also have the $n$-times repetition code as a subcode. 
We have that $$\underbrace{(a,a,...,a)}_{n}\underbrace{(a,a,...,a)}_{n} {^\top} = na^2$$ for all $a \in \F_q^n$. So in order for both a code and its dual to have the $n$ times repetition code as their subcode, it must be 
that $na^2 = 0$ for all $a \in \F_q^n$. Therefore, we must have that $n$ is divisible by $q$, as desired.
\end{proof}

Using results we obtained on the dimension of a Fox code of a knot diagram in Subsection~\ref{subsec:dim}, we can obtain information about the dual code as well.

\begin{proposition}
\label{prop:duallimit}
Let $\mF$ be the Fox code of a knot diagram. If $\dim(\mF) < \frac{n-1}{2}$, then~$\mF^\perp$ is not monomial equivalent to the Fox code of a knot diagram.
\end{proposition}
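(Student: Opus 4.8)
The plan is to obtain this as a direct consequence of the dimension bound of \Cref{thm:dimension-ineq}, combined with the elementary fact that (monomial) equivalence of codes preserves both length and dimension. First I would observe that a knot diagram with $n$ crossings has exactly $n$ strands (\Cref{lem:knot_diag}), and that the code it defines via \Cref{def:knotcode} therefore has length $n$. Since $\mC \subseteq \F_q^n$, its dual $\mC^\perp$ also lies in $\F_q^n$; hence if $\mC^\perp$ were equivalent to a code of a knot diagram, that knot diagram would necessarily have $n$ strands as well.

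Next I would invoke \Cref{thm:dimension-ineq} on this hypothetical knot code: its dimension would be bounded above by $\tfrac{n+1}{2}$. Since equivalence preserves dimension, this forces $\dim(\mC^\perp) \le \tfrac{n+1}{2}$.

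Finally, I would derive a contradiction by a one-line computation. Using $\dim(\mC^\perp) = n - \dim(\mC)$ and the hypothesis $\dim(\mC) < \tfrac{n-1}{2}$, we get
$$\dim(\mC^\perp) = n - \dim(\mC) > n - \frac{n-1}{2} = \frac{n+1}{2},$$
which contradicts the bound just obtained. Therefore $\mC^\perp$ is not equivalent to any code of a knot diagram.

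There is no real obstacle here; the statement is essentially a corollary of \Cref{thm:dimension-ineq}. The only points that need a word of care are the bookkeeping facts that code equivalence preserves length and dimension, and that the length of a knot code equals the number of strands of its diagram — both immediate from the definitions — so the write-up will be short.
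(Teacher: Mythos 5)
Your proof is correct and follows essentially the same route as the paper: apply $\dim(\mC^\perp) = n - \dim(\mC)$ to get $\dim(\mC^\perp) > \frac{n+1}{2}$ and contradict the upper bound of \Cref{thm:dimension-ineq}. The paper's version is just a two-line condensation of your argument, leaving the bookkeeping about length and dimension preservation implicit.
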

\begin{proof}
If $\dim(\mF) < \frac{n-1}{2}$ then $\dim(\mF^\perp) > \frac{n+1}{2}$.
The result then follows from Theorem~\ref{thm:dimension-ineq}.
\end{proof}

This result can be used on composite knot diagrams to determine whether the duals of their codes are codes of knot diagrams.
\begin{proposition}
\label{prop:4componentdual}
Let $\mF = \mF_1 \# \mF_2 \# \cdots \# \mF_i$ be the Fox code of a diagram of $i$ composed knots~$K = K_1 \# K_2 \# \cdots \# K_i$, where each $K_j$ has $n_j$ strands in their corresponding knot diagrams. If $i \geq 4$, then $\mF^\perp$ is not a Fox code of a knot diagram. 
\end{proposition}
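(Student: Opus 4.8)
The plan is to combine three facts already established in the paper: the additivity of the dimension under connected sums (Proposition~\ref{prop:dimsum}), the upper bound $\dim \le (n+1)/2$ for a knot code (Theorem~\ref{thm:dimension-ineq}), and the dual obstruction of Proposition~\ref{prop:duallimit}. First I would record that, writing $n = \sum_{j=1}^i n_j$ for the number of strands of the composite diagram (the connected-sum operation introduces no new crossings, so this is clear, and it is also visible from the shape of the parity-check matrix in~\eqref{eq:paritycheck}), iterating Proposition~\ref{prop:dimsum} gives $\dim(\mC) = \sum_{j=1}^i \dim(\mC_j) - (i-1)$.

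Next, I would apply Theorem~\ref{thm:dimension-ineq} to each summand to get $\dim(\mC_j) \le (n_j+1)/2$, and substitute this into the dimension formula:
$$
\dim(\mC) \;\le\; \sum_{j=1}^i \frac{n_j+1}{2} - (i-1) \;=\; \frac{n+i}{2} - (i-1) \;=\; \frac{n-i+2}{2}.
$$
For $i \ge 4$ this yields $\dim(\mC) \le \frac{n-2}{2} < \frac{n-1}{2}$. Proposition~\ref{prop:duallimit} then applies verbatim and shows that $\mC^\perp$ is not equivalent to a code of a knot diagram, hence in particular is not a code of a knot diagram, which is the claim.

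There is no genuine obstacle in this argument; the only points that deserve an explicit line are the inductive iteration of the dimension formula and the identification $n = \sum n_j$ of the strand count of the composite diagram, both of which follow immediately from the cited results. It is perhaps worth remarking that the argument never uses primality or non-triviality of the summands $K_j$: it only needs each $D_j$ to be a genuine knot diagram (so that Theorem~\ref{thm:dimension-ineq} applies), and the bound degrades exactly as needed once $i$ reaches $4$.
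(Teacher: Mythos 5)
Your proposal is correct and follows exactly the paper's own argument: iterate Proposition~\ref{prop:dimsum} to get $\dim(\mC)=\sum_j\dim(\mC_j)-(i-1)$, bound each summand by $(n_j+1)/2$ via Theorem~\ref{thm:dimension-ineq} to obtain $\dim(\mC)\le\frac{n-i}{2}+1$, and conclude with Proposition~\ref{prop:duallimit} once $i\ge 4$. The only difference is presentational (you make the strand-count identification $n=\sum_j n_j$ and the induction explicit, which the paper leaves implicit).
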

\begin{proof}
Let $n = n_1 + n_2 + \cdots + n_i$.
Using Proposition \ref{prop:dimsum} we find that 
\begin{align*}
    \dim(\mF_1 \# \mF_2 \#\cdots \# \mF_i) &= \dim(\mF_1) + \dim(\mF_2) + \cdots + \dim(\mF_i) - i + 1 \\
    &\leq \frac{n_1 + 1}{2} + \frac{n_2 + 1}{2} + \cdots + \frac{n_i + 1}{2} - i + 1 \\
    &= \frac{n - i}{2} + 1.
\end{align*}
Therefore $\dim(\mF) < \frac{n-1}{2}$ if $i \geq 4$ and the result follows from Proposition \ref{prop:duallimit}.
\end{proof}

\subsection*{Data Availability}
There is no data associated with this article. This article is self-contained. 
\subsection*{Competing Interest}
The authors have no conflicts of interest that could potentially influence or bias this article. 

\bigskip

\bibliographystyle{abbrv}
\bibliography{ADV}

\bigskip

\appendix

\section{Commutative Algebra} \label{commut-alg} 

For the basic definitions and properties of commutative algebra such as modules and morphisms we refer to \cite{atiyah2018introduction,eisenbud1995,lang2012algebra}.
In this paper, a ring will always mean a Noetherian, commutative ring with a unit element $1$. 
So, the ideals of a ring are finitely generated. Furthermore, all modules will be assumed to be finitely generated.

\begin{remark}\label{r-row-column}
In this appendix, we adopt the usual convention in commutative algebra to consider the elements of $R^{(n)}$ as column vectors of length $n$ with entries in $R$,
contrary to the rest of this paper where we align to the convention in coding theory where the elements of $R^n$ are row vectors of length $n$ with entries in $R$.
So this difference is stressed by the notation $R^{(n)}$ for column vectors and $R^n$ for row vectors.

The set of $m \times n$ matrices with entries in the ring $R$ is denoted by $R^{m \times n}$.
The matrix~$A \in R^{m \times n}$ gives a morphism of $R$-modules $R^{(n)} \rightarrow R^{(m)}$ 
defined by $x \mapsto Ax$ for $ x \in R^{(n)}$.
The \textbf{kernel} of $A \in R^{m \times n}$ is $\Ker (A) =\{ x\in R^n \mid Ax^T =0 \}$.
\end{remark}

To define equivalence of matrices, row/column operations are used.

\begin{definition}\label{d-matrix-equiv}
The \textbf{elementary row operations} on a matrix with entries in a ring are:
\begin{enumerate}
    \item interchanging rows,
\item adding a row to another row,
\item multiplying a row with an invertible element of the ring.
\end{enumerate}
\end{definition}

One defines \textbf{elementary column operations} similarly.
If $A$  is the $m \times n$ matrix in the left upper submatrix of the $(m+1)\times (n+1)$ matrix $B$ such that the entries of the last row and column of $B$ are all zero, except a pivot $1$ at the entry corresponding to the last row and last column, then we say that $B$ is obtained from $A$ by \textbf{adding a pivot}, and $A$ from $B$ by \textbf{deleting a pivot}.

\begin{definition}\label{d-equiv-matrix}
Matrices are called \textbf{equivalent} if they can be obtained from each other by a sequence of
\begin{itemize}
    \item elementary row and column operations,
    \item adding and deleting a zero row, 
    \item adding and deleting a pivot.
\end{itemize}
\end{definition}

Definition \ref{d-equiv-matrix} is taken from \cite[Chapter VII \S 4]{crowell2012introduction} and is more general than the one given in  \cite[Chapter II]{newman1972}, where  equivalent matrices must have the same size.

\begin{proposition}\label{p-module-equiv}
Let $A$ and $B$ be  matrices with entries in $R$.
If $A$ and $B$ are equivalent, then $\Ker (A) \cong \Ker (B)$ as $R$-modules.
\end{proposition}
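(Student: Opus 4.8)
The statement to prove is Proposition~\ref{p-module-equiv}: if matrices $A$ and $B$ over $R$ are equivalent (in the sense of Definition~\ref{d-equiv-matrix}), then $\Ker(A) \cong \Ker(B)$ as $R$-modules.

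My plan is to reduce the claim to checking it for each of the elementary generating moves that define matrix equivalence, since an isomorphism is obtained by composing the isomorphisms for each move in a finite sequence. So it suffices to verify that $\Ker(A) \cong \Ker(B)$ whenever $B$ is obtained from $A$ by a single move of one of the following types: (i) an elementary row operation, (ii) an elementary column operation, (iii) adding or deleting a zero row, (iv) adding or deleting a pivot.

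\begin{itemize}
\item For an elementary row operation, $B = PA$ where $P$ is an invertible matrix (a product of elementary matrices; note that over a commutative ring the three row operations of Definition~\ref{d-matrix-equiv} each correspond to an invertible elementary matrix, the third because it uses an invertible scalar). Then $Ax = 0 \iff PAx = 0$, so $\Ker(A) = \Ker(B)$ on the nose; the identity map is the isomorphism.
\item For an elementary column operation, $B = AQ$ with $Q$ invertible. Then $x \mapsto Q^{-1}x$ is an $R$-module automorphism of $R^{(n)}$ sending $\Ker(A)$ onto $\Ker(B)$: indeed $Bx = AQx$, so $Bx=0 \iff A(Qx)=0 \iff Qx \in \Ker(A)$, i.e. $x \in Q^{-1}\Ker(A)$. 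Hence $\Ker(B) = Q^{-1}\Ker(A) \cong \Ker(A)$.
\item For adding a zero row, the system of equations is unchanged, so the kernel is literally the same subset of $R^{(n)}$.
\item For adding a pivot, if $B$ is the $(m+1)\times(n+1)$ matrix with $A$ in the upper-left block, a final column of zeros except a $1$ in the bottom-right entry, and a final row of zeros except that same $1$, then $x = (x', x_{n+1})^\top \in \Ker(B)$ forces $x_{n+1} = 0$ (from the last row) and then $Ax' = 0$ (from the first $m$ rows, the last column contributing nothing since $x_{n+1}=0$). The map $(x',0)^\top \mapsto x'$ is therefore an $R$-module isomorphism $\Ker(B) \to \Ker(A)$.
\end{itemize}

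Each inverse move (deleting a zero row, deleting a pivot) is handled by the inverse isomorphism. Composing along a finite sequence of moves witnessing $A \sim B$ yields the desired isomorphism $\Ker(A) \cong \Ker(B)$. This argument is essentially routine bookkeeping; the only point requiring a little care is the pivot case, where one must observe that the extra coordinate is forced to be zero so that the kernel does not actually grow — that is the step I would write out most carefully, though it is hardly a deep obstacle.
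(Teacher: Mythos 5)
Your proof is correct. The paper actually states this proposition without any proof at all, so there is nothing to compare against; your argument --- reducing to the generating moves of Definition~\ref{d-equiv-matrix} and checking that row operations preserve the kernel exactly, column operations transport it by the automorphism $x \mapsto Q^{-1}x$, zero rows change nothing, and a pivot forces the extra coordinate to vanish --- is precisely the routine verification the authors are implicitly relying on, and you have carried it out completely.
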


Given a matrix, one defines ideals generated by the determinant of all submatrices of some fixed size.

\begin{definition}\label{d-elem-id}
Let $A \in R^{m \times n}$ and $k \in \Z_{\geq 0}$. 
Let $E_k(A)$ denote the $k$-th \textbf{elementary} (or \textbf{Fitting}) \textbf{ideal} of $A$, that is the ideal generated by determinants of all $(n-k)\times (n-k)$ submatrices of $A$ if $0<n-k\leq m$, $E_k(A)=0$ if $n-k>m$, and $E_k(A)=R$ if $n-k\leq 0$. 
\end{definition}

Elementary ideals of equivalent matrices are the same. Moreover this fact can be slightly refined,
as the following two propositions formalize.

\begin{proposition}\label{p-elem-id-1}
Let $A \in R^{m \times n}$ and $k \in \Z_{\geq 0}$. The elementary ideals $E_k(A)$ form an increasing sequence of ideals with respect to inclusion. If $A$ and $B$ are equivalent matrices, then $E_k(A)=E_k(B)$.
\end{proposition}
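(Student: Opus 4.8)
The plan is to establish the two claims separately, and for each to reduce to a handful of elementary facts about determinants. For the monotonicity claim $E_k(A) \subseteq E_{k+1}(A)$: by Definition \ref{d-elem-id} the ideal $E_k(A)$ is generated by the $(n-k) \times (n-k)$ minors of $A$ (with the boundary conventions $E_k(A) = 0$ when $n-k > m$ and $E_k(A) = R$ when $n-k \le 0$). So it suffices to show that every $(n-k)\times(n-k)$ minor lies in the ideal generated by the $(n-k-1)\times(n-k-1)$ minors. First I would dispose of the degenerate ranges: if $n-k \le 0$ then $E_k(A) = R = E_{k+1}(A)$; if $n-k-1 \ge m$, i.e. $n-k > m$, the conventions force $E_k(A) = 0 \subseteq E_{k+1}(A)$; if $n - k = 1$ then $E_{k+1}(A) = R$ and there is nothing to prove. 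In the remaining ``generic'' range $2 \le n-k \le m$, I take any $(n-k)\times(n-k)$ submatrix and expand its determinant along a row (Laplace/cofactor expansion); each term is a unit times a $(n-k-1)\times(n-k-1)$ minor of that submatrix, which is also a minor of $A$. Hence the determinant lies in $E_{k+1}(A)$, giving $E_k(A) \subseteq E_{k+1}(A)$.

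For the invariance claim, the approach is to check that each of the three kinds of moves allowed in Definition \ref{d-equiv-matrix} preserves every $E_k$, and then conclude by induction on the length of the sequence of moves connecting $A$ to $B$. The three moves to handle are: (i) elementary row and column operations; (ii) adding or deleting a zero row; (iii) adding or deleting a pivot. For (i), an elementary row operation multiplies $A$ on the left by an invertible matrix $P$ (a permutation matrix, a transvection $I + e_{ij}$, or a diagonal matrix with a unit entry), and a column operation multiplies on the right by such a $Q$; the Cauchy--Binet formula (or the more elementary observation that each minor of $PA$ is an $R$-linear combination of minors of $A$, and symmetrically for $A = P^{-1}(PA)$) shows the ideal generated by the $(n-k)$-minors is unchanged. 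For (ii), appending a zero row to an $m \times n$ matrix changes $m$ to $m+1$ but leaves $n$, hence $n-k$, fixed; any $(n-k)\times(n-k)$ submatrix that uses the zero row has determinant $0$, so the collection of nonzero minors — and the boundary conventions, since $n-k \le m$ iff $n-k \le m+1$ in the relevant range once $m \ge n-k$, and the $m+1 > m$ subtlety only helps — yields the same ideal. For (iii), if $B$ is the $(m+1)\times(n+1)$ matrix obtained from $A$ by adding a pivot, then expanding a $(n+1-k)\times(n+1-k)$ minor of $B$ along the last row or last column shows it equals either $0$ or $\pm$ a $(n-k)\times(n-k)$ minor of $A$ (depending on whether it includes the pivot row and column together), and conversely every $(n-k)$-minor of $A$ arises this way after bordering with the pivot; one checks the index bookkeeping — $n+1 - k$ against $m+1$ versus $n-k$ against $m$ — matches the boundary conventions. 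Thus $E_k(B) = E_k(A)$ in all cases, and the general statement follows.

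The main obstacle I anticipate is not any single deep idea but the careful handling of the boundary conventions in Definition \ref{d-elem-id} — the cases $n-k \le 0$ (ideal is $R$), $n - k > m$ (ideal is $0$), and the transitions between these and the generic range as $m$ and $n$ change under moves (ii) and (iii). In particular, adding a pivot increments both $m$ and $n$, and one must verify that ``$n-k > m$'' is equivalent to ``$(n+1)-k > m+1$'' so that the zero-ideal convention is respected, and likewise for the $R$-convention; and deleting a zero row decrements only $m$, so one must ensure that no minor ideal silently jumps to $0$. These are routine but must be stated explicitly. I would present the generic-range argument in full and then remark that the boundary cases follow by unwinding the conventions, citing that $E_k$ is by construction insensitive to the presence of all-zero rows and to bordering by a unit pivot.
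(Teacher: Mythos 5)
Your overall strategy is sound, and it is worth noting that the paper itself offers no argument for this proposition, only a citation to Crowell and Fox, so a self-contained proof along your lines is a reasonable thing to attempt. The monotonicity argument via Laplace expansion is correct (one quibble: each term of the cofactor expansion is $\pm$ an \emph{entry} of the matrix times a smaller minor, not a unit times a smaller minor; this does not matter, since an $R$-multiple of a generator still lies in the ideal). The treatment of elementary row and column operations and of zero rows is also fine.

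There is, however, a genuine flaw in your case analysis for the pivot move. You assert that every $(n+1-k)\times(n+1-k)$ minor of the bordered matrix $B$ ``equals either $0$ or $\pm$ a $(n-k)\times(n-k)$ minor of $A$''. This is false for a submatrix that avoids \emph{both} the pivot row and the pivot column: such a submatrix lies entirely inside $A$, and its determinant is an $(n+1-k)\times(n+1-k)$ minor of $A$, which in general is neither zero nor a unit times an $(n-k)$-minor. The containment $E_k(B)\subseteq E_k(A)$ survives, but only because these larger minors generate $E_{k-1}(A)$, which is contained in $E_k(A)$ by the monotonicity statement you proved in the first half --- so the two halves of the proposition are not independent, and you must explicitly invoke the first to finish the second. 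With that repair (and the index bookkeeping you defer, which does check out, including the slightly delicate case $n-k=0$, where $E_k(B)=R$ because the pivot entry $1$ is itself a $1\times 1$ minor of $B$ rather than because $(n+1)-k\le 0$), the proof is complete.
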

\begin{proof}
See \cite[Chapter VII (4.1)]{crowell2012introduction}.
\end{proof}

\begin{proposition}\label{p-free-matrix}
Let $A \in R^{m \times n}$ and $B \in R^{m \times (n+l)}$ be matrices such that $B$ is equivalent to $\smash{(A \mid O_{m\times l})}$, where 
 $0_{m\times l} \in R^{m \times l}$ is the matrix with all zero entries.
Then $E_{k}(B)=E_{k-l}(A)$ for all $k$.
\end{proposition}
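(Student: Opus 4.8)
The plan is to reduce to the definition of the elementary ideals (Definition~\ref{d-elem-id}) and track how the index shifts when $l$ zero columns are appended. Since $E_k$ is invariant under matrix equivalence by Proposition~\ref{p-elem-id-1}, it suffices to prove the identity for $B = (A \mid 0_{m\times l})$ itself. So first I would replace $B$ by this explicit matrix and forget about the equivalence.

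Next I would compare, for each $j$, the $j\times j$ minors of $B$ with the $j\times j$ minors of $A$. A $j\times j$ submatrix of $B$ is obtained by choosing $j$ rows and $j$ columns; if any of the chosen columns lies among the last $l$ (all-zero) columns, that submatrix has a zero column and hence determinant $0$. Thus the nonzero $j\times j$ minors of $B$ are exactly the $j\times j$ minors of $A$ (using only the first $n$ columns), and consequently the ideal generated by all $j\times j$ minors of $B$ equals the ideal generated by all $j\times j$ minors of $A$, with the convention that the latter is $0$ when $j > n$. Now I would translate this into the $E_k$ notation: $B$ has $n+l$ columns, so $E_k(B)$ is generated by the $(n+l-k)\times(n+l-k)$ minors of $B$, i.e.\ the $j\times j$ minors with $j = n+l-k$; and $E_{k-l}(A)$ is generated by the $(n-(k-l))\times(n-(k-l))$ minors of $A$, i.e.\ the $j\times j$ minors with $j = n-(k-l) = n+l-k$, the same value of $j$. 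Hence the two ideals have the same generating set of minors.

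Finally I would check that the boundary conventions in Definition~\ref{d-elem-id} are consistent with this shift. Writing $r = n+l-k$, we have: if $0 < r \le m$ then both $E_k(B)$ and $E_{k-l}(A)$ are the ideal of $r\times r$ minors (of $B$, resp.\ of $A$), which agree by the previous paragraph, noting $r \le \min(m, n+l)$ on the $B$ side and that minors with $r > n$ simply vanish so they match the $A$ side where they are $0$; if $r > m$ then $E_k(B) = 0$ and likewise $n-(k-l) = r > m$ gives $E_{k-l}(A) = 0$; and if $r \le 0$ then $E_k(B) = R$ and $n-(k-l) = r \le 0$ gives $E_{k-l}(A) = R$. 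In all cases $E_k(B) = E_{k-l}(A)$, which is the claim. The only mildly delicate point — and the one I would be most careful about — is making sure the edge cases of the $n-k \le 0$, $n-k > m$ conventions line up after the index shift and that appending zero columns can only kill minors, never create new ones; everything else is immediate from the definition.
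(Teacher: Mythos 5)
Your argument is correct and is essentially the paper's own proof: the paper likewise reduces to the case $B=(A\mid 0_{m\times l})$ via Proposition~\ref{p-elem-id-1} and then appeals directly to Definition~\ref{d-elem-id}, merely leaving implicit the minor-counting and boundary-convention checks that you spell out. Your careful handling of the case $n < n+l-k \le m$ (where $A$ has no submatrices of the required size and all the corresponding minors of $B$ vanish) is a welcome addition but does not change the route.
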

\begin{proof}
The result follows directly from the definitions if $B=(A \mid O_{m\times l})$, and from Proposition \ref{p-elem-id-1} otherwise.
\end{proof}

Adding zero rows to a matrix does not change its elementary ideals. Thus, we have the following result that is independent of the number of columns of the matrix.

\begin{proposition}\label{p-elem}
Let $R$ be a field and let $A \in R^{m \times n}$. 
If $A$ has rank $r$, then  $E_k(A)=R$ if~$k\geq n-r$, and $E_k(A)=0$ otherwise.
\end{proposition}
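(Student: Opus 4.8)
The plan is to reduce everything to the classical fact that, over a field, the rank of a matrix coincides with its \emph{minor rank}: $r$ is the largest integer for which $A$ has a nonsingular $r\times r$ submatrix, and every submatrix of $A$ of size strictly greater than $r$ is singular. Equivalently, for each $s\le r$ the matrix $A$ contains a nonsingular $s\times s$ submatrix, while for each $s>r$ all $s\times s$ minors vanish. Granting this standard statement, the proposition is almost immediate, and the only care needed is to match the boundary conventions of Definition~\ref{d-elem-id}.

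First I would treat the case $k\ge n-r$. If $n-k\le 0$, then $E_k(A)=R$ holds by definition, so assume $0<n-k\le r$. Since $r\le\min(m,n)$ we also have $n-k\le m$, so $E_k(A)$ is genuinely the ideal generated by the $(n-k)\times(n-k)$ minors of $A$. By the minor-rank characterization $A$ has a nonsingular submatrix of size $n-k$, and its determinant is a nonzero element of the field $R$, hence a unit; therefore the ideal it generates is all of $R$, and $E_k(A)=R$. For the complementary case $k<n-r$, i.e.\ $n-k>r$: if $n-k>m$ then $E_k(A)=0$ by definition, and otherwise $E_k(A)$ is generated by the $(n-k)\times(n-k)$ minors of $A$, all of which vanish because $n-k>r$; so $E_k(A)=(0)$.

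A tidier packaging, which I would use in the write-up, is to prove only the two threshold statements $E_{n-r}(A)=R$ and (when $n-r-1\ge 0$) $E_{n-r-1}(A)=0$, and then invoke Proposition~\ref{p-elem-id-1}, which gives the chain $E_0(A)\subseteq E_1(A)\subseteq\cdots$; monotonicity propagates $R$ upward to all $k\ge n-r$ and $0$ downward to all $k<n-r$. The only genuine subtlety in the whole argument is bookkeeping the degenerate ranges $n-k\le 0$ and $n-k>m$ in Definition~\ref{d-elem-id}, and checking that they are consistent with the claimed dichotomy: $n-k\le 0$ forces $k\ge n\ge n-r$, which lands in the $E_k(A)=R$ regime, and $n-k>m\ge r$ forces $k<n-r$, which lands in the $E_k(A)=0$ regime. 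I do not anticipate any real obstacle; the content is essentially the restatement of the minor-rank characterization of rank over a field in the language of Fitting ideals.
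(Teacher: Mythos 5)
Your proof is correct. It takes a different route from the paper's: the paper transforms $A$ by elementary row and column operations into a normal form consisting of an $r\times r$ identity block surrounded by zeros, invokes the invariance of elementary ideals under equivalence (Proposition~\ref{p-elem-id-1}) together with the pivot-deletion rule to reduce to an all-zero $(m-r)\times(n-r)$ matrix, and reads off the ideals from there. You instead work directly with the minors of $A$ itself, using the minor-rank characterization of rank over a field: a nonsingular $(n-k)\times(n-k)$ submatrix exists precisely when $n-k\le r$, and its determinant is a unit in the field, which settles both regimes at once. Your argument has the advantage of not needing the equivalence machinery at all (except in your optional ``tidier packaging,'' which only uses the monotonicity of the chain $E_0\subseteq E_1\subseteq\cdots$), and your explicit bookkeeping of the degenerate ranges $n-k\le 0$ and $n-k>m$ against Definition~\ref{d-elem-id} is more careful than the paper's, which leaves those cases implicit. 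The paper's normal-form approach, on the other hand, fits more naturally into the surrounding appendix, where equivalence of matrices and invariance of Fitting ideals are the recurring tools (e.g.\ Propositions~\ref{p-elem-id-1} and~\ref{p-free-matrix}), and it previews the Smith-normal-form argument used later for principal ideal domains. Both proofs are sound; the content is the same elementary fact viewed from two angles.
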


\begin{proof}
If $A$ has rank $r$, then one can transform $A$ by elementary row and column operations into a matrix $B$ that has the $r \times r$ identity matrix $I_r$ as a submatrix and entries equal to zero outside that identity matrix. 
Deleting the $r$ rows and columns corresponding to the pivots of the  matrix gives the $(m-r) \times (n-r)$ matrix with zeros as entries.
The elementary ideals remain the same under these transformations by Proposition \ref{p-elem-id-1}.
Hence $E_k(A)=R$ if $k\geq n-r$ and $E_k(A)=0$ otherwise. 
\end{proof}

\begin{proposition}\label{p-elem-id-2}
Let $\varphi : R \rightarrow S$ be a morphism of rings and let $A$ be a matrix with entries $a_{ij}$ in $R$.
Denote by $\varphi (A)$ the matrix with entries $\varphi (a_{ij})$ in $S$. 
If $\varphi $ is surjective, then~$E_k (\varphi (A)) = \varphi (E_k(A))$.
\end{proposition}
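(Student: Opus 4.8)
The plan is to unwind the definition of the elementary ideal $E_k$ and exploit the fact that applying a ring morphism to a matrix commutes with taking determinants of submatrices. First I would fix $A \in R^{m \times n}$ and $k \in \Z_{\ge 0}$, and split into the trivial cases: if $n - k \le 0$ then $E_k(A) = R$ and $E_k(\varphi(A)) = S$, and since $\varphi$ is surjective $\varphi(R) = S$, so the equality holds; if $n - k > m$ then both ideals are zero and there is nothing to prove. So assume $0 < n - k \le m$, the genuine case.

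Next I would recall that $E_k(A)$ is generated by the set $\{\det(A_I) : I \text{ indexes an } (n-k)\times(n-k) \text{ submatrix of } A\}$, and similarly $E_k(\varphi(A))$ is generated by $\{\det(\varphi(A)_I)\}$ over the same index set $I$. The key observation is that forming a submatrix and applying $\varphi$ entrywise commute, i.e. $\varphi(A)_I = \varphi(A_I)$, and that $\varphi$, being a ring homomorphism, commutes with determinants: $\det(\varphi(A_I)) = \varphi(\det(A_I))$, because the determinant is a fixed polynomial expression in the matrix entries with integer coefficients. Hence the generating set of $E_k(\varphi(A))$ is exactly the image under $\varphi$ of the generating set of $E_k(A)$.

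Finally I would invoke the elementary fact that for a surjective ring morphism $\varphi : R \to S$ and an ideal $\mathfrak{a} = (g_\lambda)_\lambda \subseteq R$, the ideal of $S$ generated by $\{\varphi(g_\lambda)\}_\lambda$ equals $\varphi(\mathfrak{a})$ (here $\varphi(\mathfrak{a})$ denotes the image of $\mathfrak{a}$, which is an ideal of $S$ precisely because $\varphi$ is surjective: for $s \in S$ and $r \in R$ with $\varphi(r) \in \varphi(\mathfrak{a})$, write $s = \varphi(r')$ and note $s\varphi(r) = \varphi(r'r) \in \varphi(\mathfrak{a})$). Applying this with $\mathfrak{a} = E_k(A)$ and $g_\lambda$ ranging over the $(n-k)$-minors of $A$ yields $E_k(\varphi(A)) = \varphi(E_k(A))$, as claimed. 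The only mild subtlety — and the one point worth stating carefully rather than the main obstacle — is that surjectivity of $\varphi$ is genuinely needed both to make $\varphi(E_k(A))$ an ideal and to handle the case $n - k \le 0$; without it one would only get the inclusion $E_k(\varphi(A)) \subseteq (\varphi(E_k(A)))$ inside $S$. There is no real difficulty here; the proof is essentially a bookkeeping argument once the commutation of $\varphi$ with minors is noted.
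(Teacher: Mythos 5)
Your proof is correct and complete; the paper itself gives no argument for this proposition and only cites Crowell--Fox, Chapter VII (4.3), and your reasoning --- submatrices commute with entrywise application of $\varphi$, determinants are integer polynomial expressions in the entries so $\det(\varphi(A_I))=\varphi(\det(A_I))$, and surjectivity both makes $\varphi(E_k(A))$ an ideal of $S$ and disposes of the degenerate cases $n-k\le 0$ and $n-k>m$ --- is exactly the standard argument the cited reference supplies. No gaps.
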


\begin{proof}
See \cite[Chapter VII (4.3)]{crowell2012introduction}
\end{proof}

For the rest of the appendix, we focus on the principal ideals of a principal ideal domain~$R$ and its relations with the elementary ideals of a matrix whose entries are coming from $R$.

\begin{proposition}[{\bf Smith Normal Form}]\label{p-struct-pid}
Let $R$ be a principal ideal domain and let~$A$ be a matrix with entries in~$R$.
Then there is an increasing sequence of principal ideals~$(d_1)\subseteq (d_2)\subseteq \cdots \subseteq (d_l)\not= R $ 
such that $A$ is equivalent to a diagonal square matrix with~$(d_1, d_2, \ldots ,d_l) $ on the diagonal.
\end{proposition}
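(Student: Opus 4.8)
The plan is to prove existence of the Smith normal form by the classical ``peel off one invariant factor at a time'' induction on the size of the matrix, working throughout with the equivalence of Definition~\ref{d-equiv-matrix}. Since $R$ is a PID it is a unique factorization domain, so for each nonzero $a\in R$ there is a well-defined number $\ell(a)\in\Z_{\ge 0}$ of irreducible factors counted with multiplicity, with $\ell(u)=0$ precisely for units $u$. If every entry of $A$ equals $0$, the matrix is already in the required form (with $l=0$), so assume not; among all matrices equivalent to $A$ and all of their nonzero entries, fix one of minimal length and move it to position $(1,1)$ by interchanging rows and columns, calling it $\delta$.

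The crucial step is to show that $\delta$ divides every entry of this matrix. If $\delta\nmid a_{1j}$ for some $j$, put $d=\gcd(\delta,a_{1j})$, so that $\ell(d)<\ell(\delta)$, and write $d=u\delta+va_{1j}$ by Bézout; applying to columns $1$ and $j$ the transformation given by $\left(\begin{smallmatrix} u & -a_{1j}/d\\ v & \delta/d\end{smallmatrix}\right)\in SL_2(R)$ yields a matrix equivalent to $A$ having $d$ in position $(1,1)$, contradicting minimality. Hence $\delta$ divides the whole first row, and symmetrically the whole first column; subtracting suitable multiples of the first row and first column by elementary operations brings the matrix to $\left(\begin{smallmatrix}\delta & 0\\ 0 & A'\end{smallmatrix}\right)$. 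If some entry of $A'$ were not divisible by $\delta$, adding its row to the first row and rerunning the gcd argument contradicts minimality once more, so $\delta$ divides every entry of $A'$ too.

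Now I would induct on the size. If $\delta$ is a unit, multiplying its row by $\delta^{-1}$ and deleting the resulting pivot shows $A$ is equivalent to $A'$, to which the inductive hypothesis applies. If $\delta$ is a non-unit, the inductive hypothesis gives $A'$ equivalent to $\mathrm{diag}(d_1,\dots,d_{l-1})$ with $(d_1)\subseteq\cdots\subseteq(d_{l-1})\neq R$; put $d_l:=\delta$. Since $\delta$ divides every entry of $A'$, it divides every invariant factor of $A'$ (the divisibility is carried through the same reduction), in particular $d_{l-1}$, so $(d_{l-1})\subseteq(d_l)\neq R$ and $A$ is equivalent to $\mathrm{diag}(d_1,\dots,d_l)$, as required; any zero rows or columns created along the way are added or deleted to keep the final matrix square, and $A=0$ corresponds to the empty chain. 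Uniqueness of the ideals $(d_i)$ is not asked for, but would follow from the invariance of the elementary ideals under equivalence, Proposition~\ref{p-elem-id-1}.

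The one genuine subtlety I expect is the admissibility, under Definition~\ref{d-equiv-matrix}, of the $SL_2(R)$ transformation used above: over an arbitrary PID this matrix need not be a product of the listed elementary operations. This is harmless for the rings appearing in this paper, since $\Z$, $\F_p[T]$ and their relevant localizations are Euclidean and the reduction then becomes the Euclidean algorithm carried out with elementary row and column operations; for a general PID one may instead invoke $SK_1(R)=0$, which makes such a transformation a product of elementary operations after adjoining finitely many pivots, an operation permitted by Definition~\ref{d-equiv-matrix}. An alternative that avoids the point entirely is to apply the structure theorem for finitely generated modules over a PID to the cokernel $R^{(m)}/A R^{(n)}$: its invariant-factor decomposition provides a diagonal presentation matrix, and one checks, using Proposition~\ref{p-module-equiv} together with Proposition~\ref{p-free-matrix}, that $A$ is equivalent to it.
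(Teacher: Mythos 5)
The paper offers no argument here at all: its ``proof'' of Proposition~\ref{p-struct-pid} is the single line ``See [Newman, Theorem~II.9]'', so there is nothing of the paper's to compare your reduction against. What you give is the standard existence proof (entry of minimal length, B\'ezout step to strictly decrease the length, clear the first row and column, induct on size), and it is correct in substance; the bookkeeping against the paper's ordering convention $(d_1)\subseteq\cdots\subseteq(d_l)\neq R$ --- units deleted as pivots, the new $\delta$ appended as $d_l$, divisibility of $d_{l-1}$ by $\delta$ via the gcd of the entries of $A'$ --- also checks out. Two cautions on the set-up. First, the minimization must be taken over matrices obtained from $A$ by row and column operations of a \emph{fixed} size, not over the full equivalence class of Definition~\ref{d-equiv-matrix}: adding a pivot always creates an entry of length $0$, so over the whole class the minimum is trivially attained by a unit and the induction stalls. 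Second, Definition~\ref{d-matrix-equiv} literally only allows adding a row (not an arbitrary ring-element multiple of a row) to another row, so even your clearing step uses operations the paper never licenses; this is a defect of the paper's definition rather than of your argument, since Crowell--Fox allow adding linear combinations.

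The subtlety you flag at the end is real, but both of your escape routes have problems. It is \emph{not} true that $SK_1(R)=0$ for every principal ideal domain: Grayson (\emph{$SK_1$ of an interesting principal ideal domain}, 1981) exhibits a PID with nontrivial $SK_1$, and over such a ring an invertible matrix representing a nontrivial class cannot be reduced to the identity (equivalently, to the empty matrix) by elementary operations, unit scalings, and pivot moves --- so the proposition as literally stated, with the paper's notion of equivalence, actually fails there. Newman's Theorem~II.9 concerns $GL$-equivalence ($B=UAV$ with $U,V$ invertible), which coincides with Definition~\ref{d-equiv-matrix} exactly when $SK_1(R)=0$. Your module-theoretic alternative does not ``avoid the point entirely'' either: Proposition~\ref{p-module-equiv} is a one-way implication (equivalent matrices have isomorphic kernels), and the converse --- that presentation matrices of isomorphic modules are equivalent in the sense of Definition~\ref{d-equiv-matrix} --- is precisely where the elementary-versus-$GL$ issue lives. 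The honest resolution is the first one you give: every ring to which the paper applies this proposition ($\Z$, $\F_p[T]$, and their localizations $\Z_t$, $\F_p[T]_t$) is Euclidean, the B\'ezout block is then replaced by the Euclidean algorithm carried out with genuine row and column operations, and your proof goes through verbatim.
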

\begin{proof}
See \cite[Theorem II.9]{newman1972}.
\end{proof}

The principal ideals  $(d_i)$ in the previous proposition are called  \textbf{invariant factors} of the matrix $A$.
A generator of $(d_i)$ is unique up to an unit and the invariant factors are unique.
Note that the
principal ideals might be zero. 
Let $r$ be the smallest non-negative integer such that $d_{r}=0$ and $d_{r+1} \not= 0$,
where
$d_0=0$ and $d_{l+1}=1$. 
Then the smallest non-negative integer $r$ such that $d_{r}=0$ and $d_{r+1} \not= 0$
is called the \textbf{rank} of the matrix $A$.

\begin{corollary}\label{c-struct-pid}
Let $R$ be a principal ideal domain. Let $M$ be a matrix with entries in $R$
and invariant factors $(d_1)\subseteq (d_2)\subseteq \cdots \subseteq (d_l)$.
Then $E_k(M)$ is generated by $$\Delta_k:=\prod_{j=k+1}^l d_j.$$
Conversely, let $E_k(M)=(\Delta_k)$. Then $\Delta_{k-1}$ is divisible by  $\Delta_k$ and $d_k =\Delta_{k-1}/\Delta_k$ is the $k$-th invariant factor of $M$.
\end{corollary}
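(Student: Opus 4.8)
The plan is to reduce everything to the Smith normal form of $M$ and then compute the elementary ideals of a diagonal matrix directly. First I would invoke Proposition~\ref{p-struct-pid}: since $R$ is a principal ideal domain, $M$ is equivalent to the diagonal square matrix $D$ with diagonal entries $d_1,\dots,d_l$. By Proposition~\ref{p-elem-id-1} equivalent matrices have the same elementary ideals, so $E_k(M)=E_k(D)$ for all $k$, and it suffices to compute $E_k(D)$. The cases $k\ge l$ are immediate, since then both sides equal $R$ (the empty product is $1$ and $l-k\le 0$), so I would assume $0\le k<l$.

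Next I would compute $E_k(D)$ by hand. The key observation is that for a diagonal matrix any $(l-k)\times(l-k)$ submatrix indexed by different row and column subsets contains a zero row, hence has vanishing determinant; the only nonzero minors are the products $\prod_{j\in S}d_j$ over subsets $S\subseteq\{1,\dots,l\}$ with $|S|=l-k$. Thus $E_k(D)$ is generated by these products, and since $R$ is a PID it equals the ideal generated by their greatest common divisor. To identify this gcd I would use that, in the convention $(d_1)\subseteq(d_2)\subseteq\cdots$ of Proposition~\ref{p-struct-pid}, one has $d_b\mid d_a$ whenever $a\le b$, together with the elementary combinatorial fact that if $S=\{j_1<\cdots<j_{l-k}\}$ then $j_i\le k+i$ for every $i$ (downward induction from $j_{l-k}\le l$). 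These give $d_{k+i}\mid d_{j_i}$, hence $\Delta_k=\prod_{j=k+1}^{l}d_j$ divides $\prod_{j\in S}d_j$ for every admissible $S$; as $\Delta_k$ is itself the minor for $S=\{k+1,\dots,l\}$, it is a gcd of the generators and $E_k(D)=(\Delta_k)$. This settles the first assertion.

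For the converse I would just apply the first assertion at the two indices $k-1$ and $k$: then $E_{k-1}(M)=(\Delta_{k-1})$, $E_k(M)=(\Delta_k)$, and the explicit formula gives $\Delta_{k-1}=\prod_{j=k}^{l}d_j=d_k\,\Delta_k$. This immediately yields $\Delta_k\mid\Delta_{k-1}$ (consistently with the inclusion $E_{k-1}(M)\subseteq E_k(M)$ from Proposition~\ref{p-elem-id-1}) and $d_k=\Delta_{k-1}/\Delta_k$ up to a unit whenever $\Delta_k\neq 0$, which identifies the $k$-th invariant factor of $M$. In the degenerate case $\Delta_k=0$, the chain of ideals forces $d_j=0$ for all $j\le k$, so the identity $\Delta_{k-1}=d_k\Delta_k$ still holds with $(d_k)=(0)$ and nothing further is needed.

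I do not expect a genuine obstacle: the statement is a classical consequence of the Smith normal form. The only points requiring care are bookkeeping ones — the divisibility among the invariant factors runs "backwards" relative to their subscripts because of the $(d_1)\subseteq(d_2)\subseteq\cdots$ convention, and the normal form $D$ need not have the same shape as $M$ — but both are absorbed cleanly by Propositions~\ref{p-struct-pid} and~\ref{p-elem-id-1}, so the actual content reduces to the one-line estimate $j_i\le k+i$ and the divisibility argument built on it.
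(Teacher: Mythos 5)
Your proof is correct. The paper does not actually prove this corollary---it simply cites \cite[Chapter II \S 15 and \S 16]{newman1972}---and your argument (reduce to the Smith normal form via Propositions~\ref{p-struct-pid} and~\ref{p-elem-id-1}, observe that the nonzero $(l-k)$-minors of the diagonal matrix are the products $\prod_{j\in S}d_j$, and use the divisibility chain $d_b\mid d_a$ for $a\le b$ together with the estimate $j_i\le k+i$ to identify $\Delta_k$ as their gcd) is exactly the standard computation underlying that citation. Your handling of the edge cases $k\ge l$ and $\Delta_k=0$ is also appropriate, the latter being a point the paper's statement glosses over.
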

\begin{proof}
See \cite[Chapter II \S 15 and \S 16]{newman1972}.
\end{proof}

We conclude this appendix with the following proposition which in this paper is used in the principal ideal domains $R=\Z$ and $R=\F_p[T]$, and in their localizations; see Propositions~\ref{p-number-color-Z} and~\ref{p-number-color-prime}.

\begin{proposition}\label{p-struct-ker-im}
Let $R$ be a principal ideal domain. 
Let $A$ be a matrix with entries in~$R$ and invariant factors $(d_1)\subseteq (d_2)\subseteq \cdots \subseteq (d_l) $. 
Let $d$ be a nonzero element of $R$ and let~$a_i=\gcd (d,d_i)$ and $a_ib_i=d$.
Let $\overline{R}=R/(d)$ and $\overline{x}=x+(d) \in \overline{R}$ for $x\in R$. 
Then
$$
\Ker (\overline{A})  \cong \overline{R}/(\overline{a_1}) \oplus \overline{R}/(\overline{a_2}) \oplus  \cdots \oplus  \overline{R}/(\overline{a_l}).
$$
\end{proposition}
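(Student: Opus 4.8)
The plan is to reduce $A$ to Smith normal form over $R$, push that reduction through the quotient map $R\twoheadrightarrow\overline R$, and then read off the kernel of a diagonal matrix over $\overline R$ one coordinate at a time.

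First I would invoke Proposition~\ref{p-struct-pid} to write $A$ as being equivalent over $R$ to a diagonal matrix $D=\operatorname{diag}(d_1,\ldots,d_l)$ (padding the list of invariant factors with zeros, if necessary, so that the number of diagonal entries matches the number of columns of $A$; those extra zero entries contribute the free part of $\Ker(\overline A)$). The key observation is that the quotient map $\pi:R\to\overline R$ is a surjective ring homomorphism sending a unit to a unit, a zero row to a zero row, and the pivot $1$ to the pivot $1$; hence every elementary row/column operation, zero-row addition/deletion, and pivot addition/deletion from Definition~\ref{d-equiv-matrix} transforming $A$ into $D$ reduces to an operation of exactly the same type transforming $\overline A$ into $\overline D=\operatorname{diag}(\overline{d_1},\ldots,\overline{d_l})$ over $\overline R$. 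Applying Proposition~\ref{p-module-equiv} over the ring $\overline R$ then yields $\Ker(\overline A)\cong\Ker(\overline D)$.

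Next, because $\overline D$ is diagonal, its kernel decomposes as a direct sum of annihilators,
$$\Ker(\overline D)\;=\;\bigoplus_{i=1}^{l}\operatorname{Ann}_{\overline R}(\overline{d_i}),\qquad \operatorname{Ann}_{\overline R}(\overline{d_i})=\{\,\overline x\in\overline R\mid\overline{d_i}\,\overline x=0\,\},$$
so it remains to check that $\operatorname{Ann}_{\overline R}(\overline{d_i})\cong\overline R/(\overline{a_i})$ for each $i$. Fixing $i$ and writing $d=a_ib_i$ and $d_i=a_ic_i$ (with $a_i\neq 0$ since $d\neq 0$, and $\gcd(b_i,c_i)=1$ by maximality of $a_i=\gcd(d,d_i)$), the set $\operatorname{Ann}_{\overline R}(\overline{d_i})$ is the image in $\overline R$ of $\{x\in R\mid d\mid d_ix\}$. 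Cancelling $a_i$ (valid since $R$ is a domain) and then using $\gcd(b_i,c_i)=1$, one gets $d\mid d_ix\iff b_i\mid c_ix\iff b_i\mid x$, so this set is the ideal $(b_i)$ and $\operatorname{Ann}_{\overline R}(\overline{d_i})=(b_i)/(d)=(b_i)/(a_ib_i)$. Finally, multiplication by $b_i$ is a surjection $R\to(b_i)$ of $R$-modules with kernel $(a_i)$ — here one cancels $b_i\neq 0$ again — so it induces $R/(a_i)\xrightarrow{\ \sim\ }(b_i)/(a_ib_i)$; and since $a_i\mid d$ we have $\overline R/(\overline{a_i})\cong R/(a_i)$ by the third isomorphism theorem. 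Summing over $i$ gives the claim. (The degenerate cases are automatic: if $d_i=0$ then $a_i=d$ and both $\operatorname{Ann}_{\overline R}(\overline 0)$ and $\overline R/(\overline d)$ equal $\overline R$; if $d_i$ is a unit then $a_i$ is a unit and both sides vanish.)

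The step I expect to require the most care is the first one: making sure the entire chain of equivalences producing the Smith normal form really does reduce term-by-term to a chain of equivalences of the same type over $\overline R$, together with the bookkeeping about how many zero invariant factors must be carried along so that $\Ker(\overline A)$ and $\Ker(\overline D)$ have matching rank. Once that is in place, the identification of each annihilator is routine arithmetic in the PID $R$.
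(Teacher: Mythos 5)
Your proof is correct and follows essentially the same route as the paper: reduce $A$ to Smith normal form, pass to $\overline{R}$, decompose the kernel of the diagonal matrix into annihilators, and identify $\operatorname{Ann}_{\overline{R}}(\overline{d_i})=(\overline{b_i})\overline{R}\cong\overline{R}/(\overline{a_i})$ via multiplication by $b_i$ — the paper merely packages your divisibility computation as the exact sequence $0\rightarrow(\overline{b_i})\overline{R}\rightarrow\overline{R}\rightarrow\overline{R}\rightarrow(\overline{b_i})\overline{R}\rightarrow 0$. One caution on your aside: do not pad the diagonal with extra zero entries to match the column count of $A$ (the paper's equivalence already absorbs size mismatches via zero rows and pivots, neither of which changes the kernel), since literal zero-padding would introduce spurious free summands; your final formula with exactly $l$ summands is the correct one.
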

\begin{proof}
The matrix $A$ is equivalent to the diagonal matrix $B$ that has $(d_1,d_2, \ldots ,d_l)$ on its diagonal by Proposition \ref{p-struct-pid}.
Hence $\Ker (A ) \cong \Ker (B)$  by Proposition~\ref{p-module-equiv}.
To prove the result it is enough to show it separately for each $d_i$ on the diagonal. Notice that $\overline{d_i}=\overline{a_i}$, since $a_i=\gcd (d,d_i)$.
Consider the sequence of $\overline{R}$-modules:
$$
0\rightarrow (\overline{b_i})\overline{R} \rightarrow \overline{R} \rightarrow \overline{R} \rightarrow (\overline{b_i})\overline{R} \rightarrow 0
$$
where the map $(\overline{b_i})\overline{R} \rightarrow \overline{R}$ is an inclusion, and $\overline{R} \rightarrow \overline{R}$ is given by multiplication by $\overline{a_i}$,
and the surjective map  $\overline{R} \rightarrow (\overline{b_i})\overline{R} $ is given by multiplication by $\overline{b_i}$. 
This sequence is a chain complex, that is, the composition of two consecutive maps is zero, since~$a_ib_i=d \equiv 0 \pmod d$.
But it is in fact an exact sequence: Consider the kernel of the multiplication by~$\overline{a_i}$ and suppose that $\overline{x}\overline{a_i}=0$. Then $xa_i \equiv 0 \pmod d$, and thus $xa_i = yd$ for some $y\in R$. So~$xa_i = ya_ib_i$, and consequently $x = yb_i$ since $R$ is an integral domain. Therefore $\overline{x} \in (\overline{b_i})\overline{R}$. 

On the right hand, we have the sequence $ \overline{R} \rightarrow \overline{R} \rightarrow (\overline{b_i})\overline{R}$, 
which is exact at the middle by a similar reasoning as before.
The cokernel of the multiplication by $\overline{a_i}$ is by definition equal to $\overline{R}/(\overline{a_i})$.
Hence $\overline{R}/(\overline{a_i})$ isomorphic to $(\overline{b_i})\overline{R}$. Therefore the statement on $\Ker (\overline{A})$ follows, as desired.
\end{proof}

\end{document}